\pgfplotsset{
    legend entry/.initial=,
    every axis plot post/.code={%
        \pgfkeysgetvalue{/pgfplots/legend entry}\tempValue
        \ifx\tempValue\empty
            \pgfkeysalso{/pgfplots/forget plot}%
        \else
            \expandafter\addlegendentry\expandafter{\tempValue}%
        \fi
    },
}
\newcommand\norm[1]{\lvert\lvert#1\rvert\rvert}
\newcommand\epi{\bar{\pi}}
\newcommand\esigma{\bar{\sigma}}
\newcommand\eiota{\bar{\iota}}
\newcommand\spi{\eiota {\epi}^{-1}}
\def\doubleunderline#1{\underline{\underline{#1}}}
\newcommand{\floor}[1]{\left\lfloor #1 \right\rfloor}
\newtheorem{theorem}{Theorem}
\newtheorem{corollary}[theorem]{Corollary}
\newtheorem{lemma}[theorem]{Lemma}
\newtheorem{example}[theorem]{Example}
\newtheorem{proposition}[theorem]{Proposition}
\setlist[enumerate,1]{label=(\arabic*),ref=(\arabic*)}
\setlist[enumerate,2]{label=(\alph*),ref=(\arabic{enumi}-\alph*)}
\setlist[enumerate,3]{label=(\roman*),ref=(\arabic{enumi}-\alph{enumii}-\roman*)}
\NewDocumentCommand{\cycle}{ O{\;} m }
{
	(
	\alec_cycle:nn { #1 } { #2 }
	)
}
\NewDocumentCommand{\cyclen}{ O{} m }
{
	(
	\alec_cyclen:nn { #1 } { #2 }
	)
}
\begin{document}
	\begin{frontmatter}
		\begin{fmbox}
            \dochead{Research}
            
            \title{A new $1.375$-approximation algorithm for Sorting By Transpositions}
            
            \author[
            	addressref={aff1},             
            	corref={aff1},                 
            	email={laugustogarcia@gmail.com} 
            ]{\inits{LAGS}\fnm{Luiz Augusto G.} \snm{Silva}}
            \author[
            	addressref={aff3},     
            	email={luis@ic.uff.br}
            ]{\inits{LABK}\fnm{Luis Antonio B.} \snm{Kowada}}
            \author[
            	addressref={aff2},     
            	email={norai@unb.br}  
            ]{\inits{NRR}\fnm{Noraí Romeu} \snm{Rocco}}
            \author[
            	addressref={aff1},       
            	email={mariaemilia@unb.br}  
            ]{\inits{MEMTW}\fnm{Maria Emília M. T.} \snm{Walter}}

            \address[id=aff1]{
              \orgname{Departmento de Ciência da Computação, Universidade de Brasília}, 
              \city{Brasília},              
              \cny{Brazil}                  
            }
            \address[id=aff2]{
              \orgname{Departmento de Matemática, Universidade de Brasília},
              \city{Brasília},
              \cny{Brazil}
            }
            \address[id=aff3]{
              \orgname{Instituto de Computação, Universidade Federal Fluminense},
              \city{Niterói},
              \cny{Brazil}
            }

		\begin{abstractbox}
    		\begin{abstract}
    		\parttitle{Background}
			In genome rearrangements, the mutational event \emph{transposition} swaps two adjacent blocks of genes in one chromosome. The \emph{Transposition Distance Problem} (TDP) aims to find the minimum number of transpositions required to transform one chromosome into another (transposition distance), both represented as permutations. The pair of permutations can be transformed into another pair with the same distance where the target permutation is the identity, making TDP equivalent to the problem of \emph{Sorting by Transpositions} (SBT).

			In 2012, SBT was proven to be $\mathcal{NP}$-hard and the best approximation algorithm with a $1.375$ ratio was proposed in 2006 by Elias and Hartman. Their algorithm employs \emph{simplification}, a technique used to transform an input permutation $\pi$ into a \emph{simple permutation} $\hat{\pi}$, presumably easier to handle with. The permutation $\hat{\pi}$ is obtained by inserting new symbols into $\pi$ in a way that the lower bound of the transposition distance of $\pi$ is kept on $\hat{\pi}$. The simplification is guaranteed to keep the lower bound, not the transposition distance. A sequence of operations sorting $\hat{\pi}$ can be mimicked to sort $\pi$.

    		\parttitle{Results and conclusions}
			First, we show that the algorithm of Elias and Hartman (EH algorithm) may require one extra transposition above the approximation ratio of $1.375$, depending on how the input permutation is simplified. Next, using an algebraic approach, we propose a new upper bound for the transposition distance and a new $1.375$-approximation algorithm to solve SBT skipping simplification and ensuring the approximation ratio of $1.375$ for all the permutations in the Symmetric Group $S_n$.

			We implemented our algorithm and EH's. Regarding the implementation of the EH algorithm, two issues needed to be fixed. We tested both algorithms against all permutations of size $n$, $2\leq n \leq 12$. The results show that the EH algorithm exceeds the approximation ratio of $1.375$ for permutations with a size greater than $7$. Overall, the average of the distances computed by our algorithm is a little better than the average of the ones computed by the EH algorithm and the execution times are similar. The percentage of computed distances that are equal to transposition distance, computed by both algorithms are also compared with others available in the literature. Finally, we investigate the performance of both implementations on longer permutations of maximum length $500$. From this experiment, we conclude that both maximum and average distances computed by our algorithm are a little better than the ones computed by the EH algorithm. We also conclude that the running times of both algorithms are similar.

    		\end{abstract}

    		\begin{keyword}
    			\kwd{Transposition Distance Problem}
    			\kwd{Sorting by Transpositions}
    			\kwd{genome rearrangements}
    			\kwd{approximation algorithms}
    		\end{keyword}
		\end{abstractbox}
		\end{fmbox}
	\end{frontmatter}
	
	\section*{Background}
	
	It is known from previous research that the genomes of different species may present essentially the same set of genes in their DNA strands, although not in the same order~\cite{NadeauTaylor1984, PalmerHerbon1988}, suggesting the occurrence of mutational events that affect large portions of DNA. These are presumably rare events and, therefore, may provide important clues for the reconstruction of the evolutionary history among species~\cite{koonin2005orthologs,YuePhylogenetics}. One such event is the \emph {transposition}, which swaps the position of two adjacent blocks of genes in one chromosome. Considering that there are no duplicated genes, each gene can be represented by an integer and the chromosome by a permutation, then the \textsc{Transposition Distance Problem} (TDP) aims to find the minimum number of transpositions required to transform one chromosome into another. The pair of permutations can be transformed into another pair with the same distance where the target permutation is the identity, making TDP is equivalent to the problem of \textsc{Sorting by Transpositions} (SBT).
	
	The first approximation algorithm to solve SBT was devised in 1998 by Bafna and Pevzer~\cite{BafnaPevzner1998}, with a $1.5$ ratio, based on the properties of a structure called the cycle graph. In 2006, Elias and Hartman~\cite{EliasHartman2006} presented a $1.375$-approximation algorithm (EH algorithm) with time complexity $O(n^2)$, the best known approximation solution so far for SBT, also based on the cycle graph. In 2012, Bulteau, Fertin and Rusu~\cite{bulteau2012sorting} demonstrated that SBT is $\mathcal{NP}$-hard.
	
	In a later study, the time complexity of the EH algorithm was improved to $O(n\log n)$ by Cunha et al.~\cite{Cunha2015}. Improvements to the EH algorithm, including heuristics, were proposed by Dias and Dias~\cite{dias2010improved, DiasDias2013}.
	
	Other studies, using different approaches, other than the cycle graph, were also published. For instance, Hausen et al.~\cite{Hausen2010} studied SBT using a structure named toric graph, which was previously devised by Erikson et al.~\cite{eriksson2001sorting}, used by the later ones to derive the upper bound of $\floor{\frac{2n - 2}{3}}$ for the transposition diameter, the best known so far for SBT. Galv\~ao and Dias~\cite{galvao2012approximation} studied solutions for SBT using three different structures: permutation codes, a concept previously introduced by Benoît-Gagné and Hamel~\cite{benoit2007new}; breakpoint diagram\footnote{Do not confuse with breakpoint graph.}, introduced by Walter et al.~\cite{Walter:2000:NAA:829519.830850}; and longest increasing subsequence, introduced by Guyer et al.~\cite{guyer1997subsequence}. Rusu~\cite{rusu2017log}, on the other hand, used a structure called log-list, formerly devised with the name link-cut trees by Sleator and Tarjan~\cite{sleator1983data}, to derive another $O(n\log n)$ algorithm for SBT. In addition to these, recently, other studies have been proposed involving variations of the transposition event. As examples, Lintzmayer et al.~\cite{lintzmayer2017sorting} studied the problem of \textsc{Sorting by Prefix and Sufix Transpositions}, as well as other problems combining variations of the transposition event with variations of the reversal event. Oliveira et al.~\cite{oliveira20203} studied the transposition distance between two genomes considering intergenic regions, a problem they called \textsc{Sorting Permutations by Intergenic Transpositions}.

	Meidanis and Dias~\cite{MeidanisDias2000} and Mira and Meidanis~\cite{mira2005algebraic} were the first authors to propose the use of an algebraic approach to solve TDP, as an alternative to the methods based on the cycle graph. The goal was to provide a more formal approach for solving rearrangement problems using known results from the permutation groups theory. Mira et al.~\cite{Mira2008} have shown the feasibility of using an algebraic approach to solve SBT by formalizing the Bafna and Pevzner's $1.5$-approximation algorithm~\cite{BafnaPevzner1998} using an algebraic tooling.

	This paper is organized as follows. The first result presented are examples of permutations for which the EH algorithm require one extra transposition above the $1.375$ approximation ratio. Then, using an algebraic approach, we propose a new upper bound for the transposition distance. Next, we propose a new algorithm to solve SBT, skipping simplification, ensuring the $1.375$-approximation for all permutations in the Symmetric Group $S_n$. After that, we present experimental results on all permutations of length $n$, $2 \leq n \leq 12$, of implementations of the EH algorithm and ours. The percentage of computed distances that are equal to transposition distance computed by the EH algorithm and ours are compared with others available in the literature. We also investigate the performance of the implementations of both algorithms with longer permutations with size ranging from $20$ to $500$, and compare the results with similar experiments conducted in other studies. Lastly, we report some issues found in the EH algorithm outlined in~\cite{EliasHartman2006} and~\cite{elias20051}, which became evident when we performed the experiments.
	
	\section*{Preliminaries}\label{sec:preliminaries}
	
	Let $\pi=[\pi_1\;\pi_2\dots\pi_n]$ be a permutation. A \emph{transposition} $\rho(i,j,k)$, with $1 \leq i < j < k \leq n + 1$, ``cuts'' the symbols from the interval $[\pi_i,\pi_{j-1}]$ and then ``pastes'' them right after $\pi_{k-1}$. Thus, the application of $\rho(i,j,k)$ on $\pi$, denoted $\rho(i,j,k) \cdot \pi$, yields
	$[\pi_1\dots\pi_{i-1}\;\pi_j\dots\pi_{k-1}\;\pi_i\dots\pi_{j-1}\;\pi_k\dots\pi_n]$, if $k \leq n$; or $[\pi_1\dots\pi_{i-1}\;\pi_j\dots\pi_{k-1}\;\pi_i\dots\pi_{j-1}]$, if $k = n+1$.
	
	Given two permutations $\pi$ and $\sigma$, the \textsc{Transposition Distance Problem} (TDP) corresponds to finding the minimum $t$ (the transposition distance between $\pi$ and $\sigma$) such that the sequence of transpositions $\tau_1$, $\dots$, $\tau_t$ transforms $\pi$ into $\sigma$ i.e., $\tau_t \dots \tau_1 \cdot \pi=\sigma$. Note that the transposition distance between $\pi$ and $\sigma$ equals the transposition distance between $\sigma^{-1} \circ \pi$ and the identity permutation $\iota=[1\;2\dots n]$. The problem of \textsc{Sorting by Transpositions} (SBT) is the problem of finding the transposition distance between a permutation $\pi$ and $\iota$, denoted by $d_t(\pi)$.
	
	\subsection*{Cycle graph}
	\label{sec:cycle-graph}
	
	In the genome rearrangements literature, a widely used graphical representation for a permutation is the cycle graph\footnote{In their work, Elias and Hartman~\cite{EliasHartman2006} use an equivalent circular representation, which they call breakpoint graph.}~\cite{BafnaPevzner1998}. In order to construct the cycle graph of $\pi=[\pi_1 \; \pi_2 \; \dots \; \pi_n]$, we first extend $\pi$ by adding two extra elements $\pi_0=0$ and $\pi_{n+1}=n+1$. Thus, the \emph{cycle graph} of $\pi$, denoted by $G(\pi)$, is a directed graph consisting of a set of vertices $\{+0,\;-1,\;+1,\;-2,\;+2,\;\dots,$ $\; -n,\;+n,\; -(n+1)\}$ and a set of colored (black or gray) edges. For all $1 \leq i \leq n + 1$, the black edges connect $-\pi_{i}$ to $+\pi_{i-1}$. For $0 \leq i \leq n$, the gray edges connect vertex $+i$ to vertex $-(i+1)$. Intuitively, the black edges indicate the current state of the genes, related to their arrangement in the first chromosome represented by $\pi$, while the gray edges indicate the desired order of the genes in the second permutation, represented by $\iota=[1 \; 2 \; \dots \; n]$. In the figures below, the directions of the edges are omitted since they can be easily inferred by observing the signs of the vertices. 
	
	\begin{example}
		Figure~\ref{fig:cycle-graph} shows the cycle graph of $\pi=[4\;3\;2\;1\;8\;7\;6\;5]$ with $9$ black edges, $(-9,+5)$, $(-5,+6)$, $\dots$, $(-3,+4)$, $(-4,+0)$, and $9$ gray edges, $(+0,-1)$, $(+1,-2)$, $(+2,-3)$, $\dots$, $(+7,-8)$, $(+8,-9)$.
	\end{example}	
	
	\begin{figure*}[ht]
		\centering
		\includegraphics[scale=0.65]{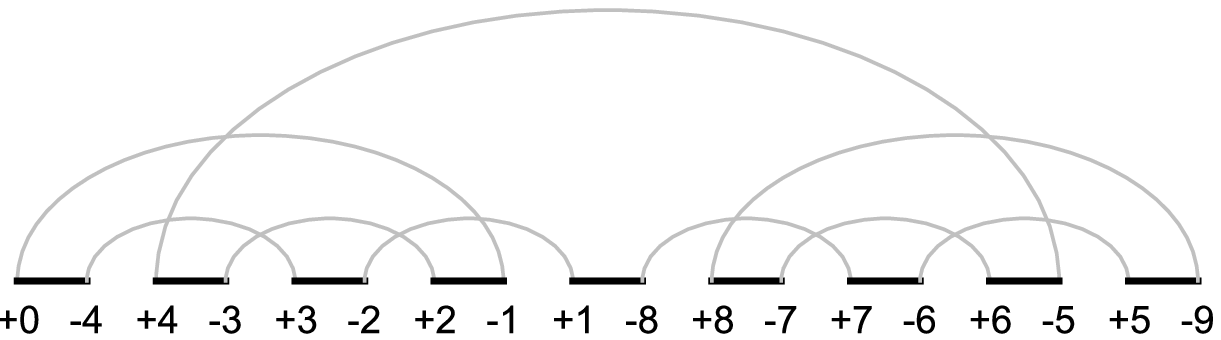}
		\caption{Cycle graph of $[4\;3\;2\;1\;8\;7\;6\;5]$. The black edges are the horizontal ones.}
		\label{fig:cycle-graph}
	\end{figure*}
	
	Both in-degree and out-degree of each vertex in $G(\pi)$ are $1$, corresponding to one black edge entering a vertex $v$ and another gray edge leaving $v$. This induces in $G(\pi)$ a unique decomposition into cycles. 
	A $\kappa$-cycle is a cycle $C$ in $G(\pi)$ with $\kappa$ black edges. In addition, $C$ is said to be a \emph{long cycle}, if $k > 3$, otherwise, $C$ is said to be a \emph {short cycle}. If $\kappa$ is even (odd), then we also say that $C$ is an \emph{even} (\emph{odd}) \emph{cycle}.
	
	The maximum number of $n+1$ cycles in $G(\pi)$ is obtained if and only if $\pi$ is the identity permutation $\iota$. In this case, each cycle is composed of exactly one black edge and one gray edge. Let us denote by $c_{odd}(\pi)$ the number of odd cycles in $G(\pi)$, and $\Delta c_{odd}(\pi,\tau) = c_{odd}(\tau \cdot \pi) - c_{odd}(\pi)$ the variation on the number of odd cycles in $G(\pi)$ and $G(\tau \cdot \pi)$, after the \emph{application} of a transposition $\tau$. Bafna and Pevzner~\cite{BafnaPevzner1998} demonstrated the following result.
	
	\begin{lemma}[Bafna and Pevzner~\cite{BafnaPevzner1998}]\label{lemma:bafna}
		$\Delta c_{odd}(\pi,\tau)  \in \{-2,0,2\}$.
	\end{lemma}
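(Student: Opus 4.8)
The plan is to combine two facts: a global parity invariant that forces $\Delta c_{odd}(\pi,\tau)$ to be even, and a locality argument that bounds its absolute value by $3$. Together these pin the value down to $\{-2,0,2\}$, since the only even integers of absolute value at most $3$ are $-2$, $0$ and $2$.

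First I would establish the parity invariant. The graph $G(\pi)$ has exactly $n+1$ black edges, and each cycle $C$, being a $\kappa_C$-cycle, contributes $\kappa_C$ black edges; summing over all cycles gives $\sum_C \kappa_C = n+1$. Reducing modulo $2$, every even cycle contributes $0$ and every odd cycle contributes $1$, so $c_{odd}(\pi) \equiv n+1 \pmod{2}$. Because a transposition only rearranges the symbols $1,\dots,n$ and leaves the extended endpoints $\pi_0=0$ and $\pi_{n+1}=n+1$ fixed, the graph $G(\tau\cdot\pi)$ again has $n+1$ black edges; hence $c_{odd}(\tau\cdot\pi) \equiv n+1 \equiv c_{odd}(\pi) \pmod{2}$, and therefore $\Delta c_{odd}(\pi,\tau)$ is always even.

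Next I would prove the locality bound. A transposition $\rho(i,j,k)$ alters the arrangement of $\pi$ only at the three cut boundaries $i$, $j$, $k$, so in passing from $G(\pi)$ to $G(\tau\cdot\pi)$ exactly three black edges are replaced, namely $(-\pi_i,+\pi_{i-1})$, $(-\pi_j,+\pi_{j-1})$ and $(-\pi_k,+\pi_{k-1})$, while every other black edge and every gray edge is untouched. Consequently, any cycle of $G(\pi)$ avoiding all three affected black edges survives unchanged, with the same parity, in $G(\tau\cdot\pi)$. The three affected black edges lie in at most three cycles of $G(\pi)$ and are reorganised into at most three cycles of $G(\tau\cdot\pi)$; hence at most three odd cycles can be destroyed and at most three created, giving $\lvert \Delta c_{odd}(\pi,\tau)\rvert \le 3$.

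Combining the two facts closes the argument. I expect the delicate step to be the locality claim, specifically verifying that $\rho(i,j,k)$ modifies precisely the three black edges incident to the cut boundaries and nothing else, and that their reconnection produces at most three cycles. This amounts to tracking how those three black edges are rewired after the block swap and running a short case analysis according to whether they originally belong to one, two, or three distinct cycles. The parity invariant, by contrast, is immediate once the black-edge count $n+1$ is observed, and it is precisely what rules out the odd values $\pm 1$ and $\pm 3$ that the locality bound alone would permit.
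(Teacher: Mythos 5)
Your proof is correct, but it takes a genuinely different route from the one the paper leans on. The paper itself does not prove Lemma~\ref{lemma:bafna} (it cites Bafna and Pevzner), but it does prove the algebraic analogue, Proposition~\ref{prop}, and that argument --- like Bafna and Pevzner's original one --- is an explicit recombination analysis: the three symbols moved by the (applicable $3$-cycle) transposition lie in one, two, or three cycles, and in each case one writes down exactly how those cycles are cut and reglued (one cycle breaks into three, a cycle is internally rearranged, or three cycles merge into one), getting $\Delta {c^\circ}(\spi,\tau) \in \{-2,0,2\}$ and then, by inspecting the parities of the pieces, $\Delta {c^\circ}_{odd}(\spi,\tau) \in \{-2,0,2\}$. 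You avoid all of that: you combine a global parity invariant ($c_{odd}(\pi) \equiv n+1 \pmod{2}$, because the cycle lengths sum to the number of black edges, which a transposition preserves) with a crude locality bound ($\lvert \Delta c_{odd}(\pi,\tau)\rvert \le 3$, because exactly three black edges are replaced and every untouched cycle survives with its parity), and the two facts together force the value into $\{-2,0,2\}$. Both steps of your argument check out, including the delicate one: the removed edges $(-\pi_i,+\pi_{i-1})$, $(-\pi_j,+\pi_{j-1})$, $(-\pi_k,+\pi_{k-1})$ are indeed the only black edges affected, and the destroyed (resp.\ created) cycles are confined to the at most three cycles meeting these edges (resp.\ their replacements). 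What the two approaches buy is different: yours is shorter, essentially case-free, and generalizes immediately to any operation that rewires a bounded number of black edges; the case analysis, however, is not wasted effort in this paper's setting, since it is precisely what identifies \emph{which} transpositions are $2$-moves, $0$-moves, and $(-2)$-moves --- constructive information that the sorting algorithm and the later lemmas depend on, and which your nonconstructive parity argument does not provide.
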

	
	A \emph{$\mu$-move} is a transposition $\tau$ such that $\Delta c_{odd}(\hat\pi, \tau)$ $=\mu$. Note that according to lemma above, the possible moves are $2$-move, $0$-move and $(-2)$-move.
	From Lemma~\ref{lemma:bafna}, Bafna and Pevzner~\cite{BafnaPevzner1998} derived the following lower bound
	
	\begin{theorem}[Bafna and Pevzner~\cite{BafnaPevzner1998}]
		\label{th:bafna}
		$d_t(\pi)\geq\frac{n+1-c_{odd}(\pi)}{2}$
	\end{theorem}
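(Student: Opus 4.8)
The plan is to obtain the bound directly from Lemma~\ref{lemma:bafna} via a counting argument on odd cycles along an optimal sorting sequence. The first ingredient is to pin down the number of odd cycles at the two endpoints of the sorting process. Transforming $\pi$ into $\iota$ by transpositions corresponds to transforming $G(\pi)$ into $G(\iota)$, and I would compute $c_{odd}(\iota)$ explicitly: for the identity, each black edge $(-i,\,+(i-1))$ closes with the gray edge $(+(i-1),\,-i)$ into a single $1$-cycle, so $G(\iota)$ decomposes into exactly $n+1$ cycles, all of them $1$-cycles and hence odd. Thus $c_{odd}(\iota) = n+1$, matching the maximum cycle count noted above.

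Next, I would fix $t = d_t(\pi)$ together with an optimal sorting sequence $\tau_1, \dots, \tau_t$ satisfying $\tau_t \cdots \tau_1 \cdot \pi = \iota$, and track $c_{odd}$ along the intermediate permutations $\pi^{(0)} = \pi$, $\pi^{(1)} = \tau_1 \cdot \pi$, $\dots$, $\pi^{(t)} = \iota$. By Lemma~\ref{lemma:bafna}, each step satisfies $c_{odd}(\pi^{(j)}) - c_{odd}(\pi^{(j-1)}) \le 2$, so summing (telescoping) over $j = 1, \dots, t$ yields $c_{odd}(\iota) - c_{odd}(\pi) \le 2t$. Substituting $c_{odd}(\iota) = n+1$ and rearranging gives $t \ge \frac{n+1 - c_{odd}(\pi)}{2}$, which is the desired inequality since $t = d_t(\pi)$.

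I do not anticipate any genuine obstacle: once Lemma~\ref{lemma:bafna} is available, the argument is a short telescoping estimate, and the only step needing explicit care is the endpoint computation $c_{odd}(\iota) = n+1$, which is immediate from the structure of $G(\iota)$. The essential mathematical content is entirely carried by the lemma's bound $\Delta c_{odd} \le 2$; everything else is bookkeeping.
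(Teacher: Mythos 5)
Your proof is correct and follows exactly the route the paper indicates: the bound is derived from Lemma~\ref{lemma:bafna} by a telescoping count of odd cycles along an optimal sorting sequence, using $c_{odd}(\iota)=n+1$ as the endpoint value. This is the standard Bafna--Pevzner argument that the paper cites rather than reproves, so there is nothing to add.
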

	
    The black edges of $G(\pi)$ can be numbered from $1$ to $n+1$ by assigning a label $i$ to each black edge $(-\pi_i, +\pi_{i-1})$. A $\kappa$-cycle $C$ visiting the black edges $i_1,\dots,i_{\kappa}$, in the order imposed by the cycle, can be written in $\kappa$ different ways, depending on the first black edge visited. If not otherwise specified, we will assume that the initial edge $i_1$ of $C$ is chosen as the greatest value, i.e., $i_1$ is such that $i_1 > i_s$, for all $s \in \{2,\dots,\kappa\}$. With this condition, if $i_1$, $\dots$, $i_\kappa$ is a decreasing sequence, $C$ is called an \emph{unoriented} cycle; otherwise $C$ is \emph{oriented}. Two pairs of black edges are said \emph{intersecting} if there are cycles $C=(\dots,a,b,\dots)$ and $D=(\dots,e,f,\dots)$ in $G(\pi)$ such that either $a > e > b > f$ or $e > a > f > b$. In this case, $C$ and $D$ are also said to be \emph{intersecting cycles}. Similarly, the triplets of black edges $(a,b,c)$ and $(d,e,f)$ are \emph{interleaving} if there are cycles $C=(\dots,a,b,c,\dots)$ and $D=(\dots,d,e,f,\dots)$ such that either $a > d > b > e > c > f$ or $d > a > e > b > f > c$. In such case, $C$ and $D$ are also said to be \emph{interleaving cycles}.
    
    \begin{figure*}[ht]
		\centering
		\includegraphics[scale=0.55]{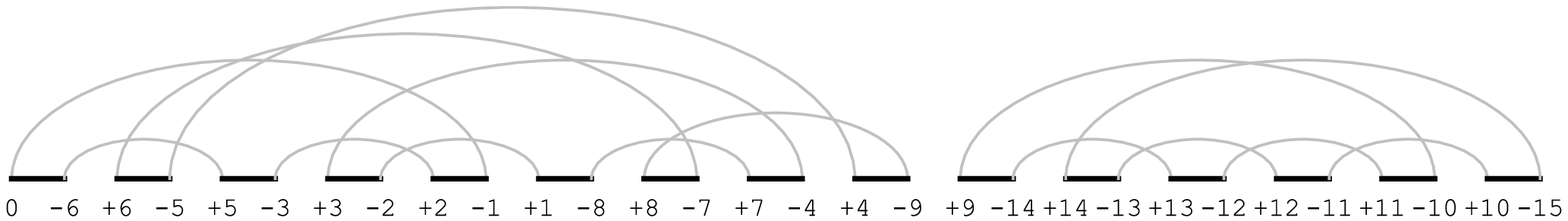}
        \caption{Cycle graph of $[6\;5\;3\;2\;1\;8\;7\;4\;9\;14\;13\;12\;11\;10]$.}
        \label{fig:cycles}
    \end{figure*}
    
    \begin{example}\label{ex4}
        The cycles $(5,3,1)$, $(8,6,4)$, $(15,13,11)$ and $(14,12,10)$ of $G([6\;5\;3\;2\;1\;8\;7\;4\;9\;14\;13\;12\;11\;10])$ (Figure~\ref{fig:cycles}) are unoriented, while $(9,2,7)$ is oriented. Furthermore, $(5,3,1)$ and $(8,6,4)$ are intersecting and the cycles $(15,13,11)$ and $(14,12,10)$ are interleaving.
    \end{example}
    
 	\subsubsection*{Simplification}
	
	Simplification is a technique introduced aiming to facilitate handling with long cycles of $G(\pi)$~\cite{HannPevzner1995}. It consists of inserting new elements, usually fractional numbers, into $\pi$ transforming it into a new \emph{simple permutation} $\hat\pi$, so that $G(\hat\pi)$ contains only short cycles. After the transformation, the elements of $\hat\pi$ can be mapped to consecutive integers. The positions of the new symbols can vary, but the insertion must be through \emph{safe transformations}.
	
	A transformation of $\pi$ into $\hat\pi$ is said to be \emph{safe} if, after the insertion of the new elements, the lower bound of Theorem~\ref{th:bafna} is maintained, i.e., $n(\pi) - c_{odd}(\pi) = n(\hat\pi) - c_{odd}(\hat\pi)$, where $n(\pi)$ and $n(\hat\pi)$ denote the number of black edges in $\pi$ and $\hat\pi$, respectively. If $\hat\pi$ is a permutation obtained from $\pi$ through safe transformations, then we say $\pi$ and $\hat{\pi}$ are \emph{equivalent}. Lin and Xue~\cite{lin2001signed} have shown that every permutation can be transformed into an equivalent simple one through safe transformations. A sorting of $\hat\pi$ can be mimicked to sort $\pi$ using the same number of transpositions~\cite{HannPevzner1995}.
	
	It is important to note that a permutation can be simplified in many different ways. Figure~\ref{fig:simple-permutation} shows the cycle graph of a possible simple permutation obtained by the simplification of $[4\;3\;2\;1\;8\;7\;6\;5]$ (Figure~\ref{fig:cycle-graph}). For a complete description of simplification and related results, the reader is referred to~\cite{HannPevzner1995,lin2001signed,hartman2006simpler}.
	
	\begin{figure*}[ht]
		\centering
		\includegraphics[scale=0.65]{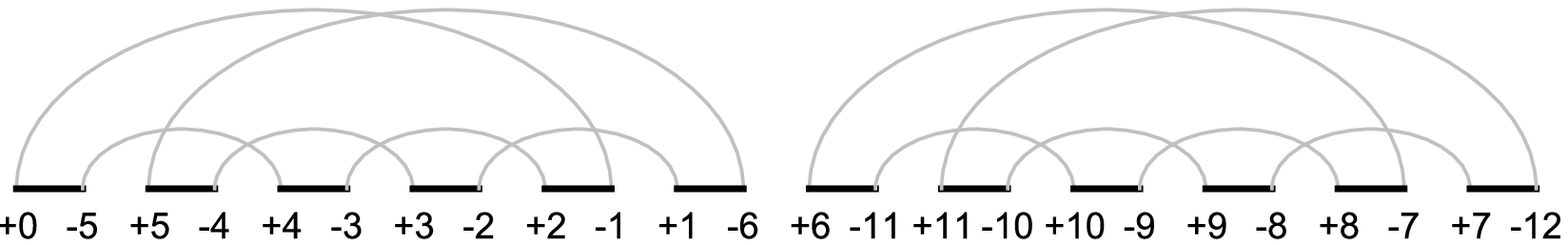}
		\caption{Cycle graph of $[5\;4\;3\;2\;1\;6\;11\;10\;9\;8\;7]$, mapped from $[4.1\;4\;3\;2\;1\;4.2\;8.1\;8\;7\;6\;5]$ using consecutive integers, obtained by the simplification of $[4\;3\;2\;1\;8\;7\;6\;5]$ (Figure~\ref{fig:cycle-graph}).}
		\label{fig:simple-permutation}
	\end{figure*}
	    
	\subsubsection*{Configurations and components}

    The concepts presented in this section were originally introduced by Elias and Hartman~\cite{EliasHartman2006} in the context of the simple permutations, with a special focus on the $3$-permutations, from which they derived their main results. As our work does not involve simplification, we modified some of them so that they could be extended to any permutation in $S_n$ and also to facilitate the correlation between the algebraic approach used in this work with the method of Elias and Hartman~\cite{EliasHartman2006}.

	A \emph{configuration} of cycles is a subgraph of $G(\pi)$ induced by one or more cycles. A configuration $A$ is \emph{connected}, if for any two cycles $C_1$ and $C_m$ of $A$, there are cycles $C_2,\dots,C_{m-1}$ in $A$ such that for each $i \in [1, m-1]$, $C_i$ intersects or interleaves with $C_{i+1}$. A \emph{component} is a configuration consisting of only one oriented cycle that does not intersect or interleave any other cycle of $G(\pi)$; or a maximal connected configuration in $G(\pi)$.
	
	Let $A$ be a configuration induced only by odd cycles. The \emph{3-norm} of $A$, denoted by $\norm{A}$, is the value $\frac{b-c(A)}{2}$, where $b$ is the number of black edges of $A$ and $c(A)$ is the number of cycles in $A$. If $\norm{A} \leq 8$, then $A$ is referred as being \emph{small}; otherwise, \emph{big}. The $3$-norm concept was not defined in Elias and Hartman~\cite{EliasHartman2006}. The intuition behind it is that it reflects the number of $3$-cycles a configuration containing cycles of arbitrary (odd) lengths would have if it were ``simplified''.
	
	\begin{example}
	    The $3$-norm of the configuration $\{(9,6$ $,8,2,4,1,3,5,7)\}$ from $G([4\;3\;2\;1\;8\;7\;6\;5])$ (Figure~\ref{fig:cycle-graph}) is $4$ and, consequently, it is a small configuration.
	\end{example}
	
	\begin{example}
	    The $3$-norms of the configurations $\{(7$ $,4,1), (8,5,2),(9,6,3)\}$ and $\{(14,12,10),(15,13,11)\}$ from $G([4\;8\;3\;7\;2\;6\;1\;5\;9\;14\;13\;12\;11\;10])$ (Figure~\ref{fig:eh-example-simpl}) are $3$ and $2$, respectively.
	\end{example}
	
	An \emph{open gate} is a pair of black edges $(a,b)$ of a cycle $C$ in $A$, such that one of its cyclic forms is $C=(a,b,\dots)$, that does not intersect with any other cycle in $A$ and there is no black edge $c$ in $C$, such that, if $a > b$, then $a$, $b$, $c$ is not a decreasing sequence; or, if $b > a$, then $b$, $c$, $a$ is not a decreasing sequence either. A configuration not containing open gates is called \emph{full configuration}.
	
	\begin{example}
	    The configurations $\{(7,4,1), (8,5,2),(9,$ $6,3)\}$ and $\{(14,12,10),(15,13,11)\}$ are small full components of $G([4\;8\;3\;7\;2\;6\;1\;5\;9\;14\;13\;12\;11\;10])$ (Figure~\ref{fig:eh-example-simpl}).
	\end{example}

	\subsubsection*{Sequences of transpositions}
	
    A sequence of transpositions $\tau_1$, $\dots$, $\tau_x$ is said to be a \emph{$(x,y)$-sequence}, for $x \geq y$, is a sequence of $x$ transpositions such that, at least $y$ of them are $2$-moves. A $(x,y)$-sequence is an \emph{$\frac{a}{b}$-sequence} if $\frac{x}{y} \leq \frac{a}{b}$ and $x \leq a$.
    
    \begin{example}
        The sequence $\tau_1=\rho(1,4,7)$, $\tau_2=\rho(2,5,8)$, $\tau_3=\rho(1,4,7)$, $\tau_4=\rho(3,6,9)$ is a $(4,3)$-sequence, which is also a $\frac{11}{8}$-sequence, for $[4\;8\;3\;7\;2\;6\;1$ $\;5\;9\;14\;13\;12\;11\;10]$ (Figure~\ref{fig:eh-example-simpl}).
    \end{example}
    
	\section*{The EH algorithm may require one extra transposition above the approximation of $1.375$}\label{sec:eh-exceed-1375}
	
	The first step of the EH algorithm is the simplification of the input permutation. In this section, we show that there are simplifications that, although producing equivalent simple permutations, causes the EH algorithm to require one extra transposition above the approximation of $1.375$. Two examples are explored next.
	
	Consider the permutation $\pi=[4\;3\;2\;1\;8\;7\;6\;5]$ shown in Figure~\ref{fig:cycle-graph}. The lower bound given by Theorem~\ref{th:bafna} is $4$, also its exact distance, corresponding to the application of four $2$-moves, shown in Figure~\ref{fig:3}.  One simplification of $\pi$ generates the permutation $[4.1\;4\;3\;2\;1\;4.2\;8.1\;8\;7\;6\;5]$, which mapped to consecutive integers is $\hat\pi=[5\;4\;3\;2\;1\;6\;11\;10\;9\;8\;7]$ (Figure~\ref{fig:simple-permutation}). Note that the lower bound of $\hat\pi$ is $4$ as well. However, there is no $\frac{11}{8}$-sequence to apply on $\hat\pi$. In fact, to optimally sort $\hat\pi$, two $(3,2)$-sequences are required. Therefore the EH algorithm using $\pi=[4\;3\;2\;1\;8\;7\;6\;5]$ as input, even applying an optimal sorting on $\hat\pi=[5\;4\;3\;2\;1\;6\;11\;10\;9\;8\;7]$, yields $6$ transpositions. However, the algorithm should require at most $5$ transpositions to not exceed the $1.375$-approximation ratio.
	
	\begin{figure}[ht]
		\centering
		$\rho(4,6,9)\cdot[4\;3\;2\;\underline{1\;8}\;\doubleunderline{7\;6\;5}]=[4\;3\;2\;7\;6\;5\;1\;8]$\\
		$\rho(3,5,8)\cdot[4\;3\;\underline{2\;7}\;\doubleunderline{6\;5\;1}\;8]=[4\;3\;6\;5\;1\;2\;7\;8]$\\
		$\rho(2,4,7)\cdot[4\;\underline{3\;6}\;\doubleunderline{5\;1\;2}\;7\;8]=[4\;5\;1\;2\;3\;6\;7\;8]$\\
		$\rho(1,3,6)\cdot[\underline{4\;5}\;\doubleunderline{1\;2\;3}\;6\;7\;8]=[1\;2\;3\;4\;5\;6\;7\;8]$
		\caption{Sorting $\pi=[4\;3\;2\;1\;8\;7\;6\;5]$ with $4$ transpositions.}~\label{fig:3}
	\end{figure}	
	
	The following example shows that, even if there are $\frac{11}{8}$-sequences of transpositions to apply on $\hat\pi$, the EH algorithm may require one transposition above the approximation ratio of $1.375$. Take the permutation $\pi'=[3\;6\;2\;5\;1\;4\;10\;9\;8\;7]$ (Figure~\ref{fig:eh-example}), with both the lower bound and distance equal to $5$, corresponding to the application of five $2$-moves, shown in Figure~\ref{fig:4}. A simplified version of $\pi'$ is $[3.1\;6.1\;3\;6\;2\;5\;1\;4\;6.2\;10.1\;10\;9\;8\;7]$, which mapped to consecutive integers is $\hat\pi'=[4\;8\;3\;7\;2\;6\;1\;5\;9\;14\;13\;12\;$ $11\;10]$ (Figure~\ref{fig:eh-example-simpl}). The EH algorithm sorts $\hat\pi'$ optimally by applying a $(4,3)$-sequence, followed by a $(3,2)$-sequence, in a total of $7$ transpositions. However, the algorithm should not require more than $6$ transpositions to not exceed the $1.375$-approximation ratio.
	
	\begin{figure*}[ht]
		\centering
		\includegraphics[scale=0.65]{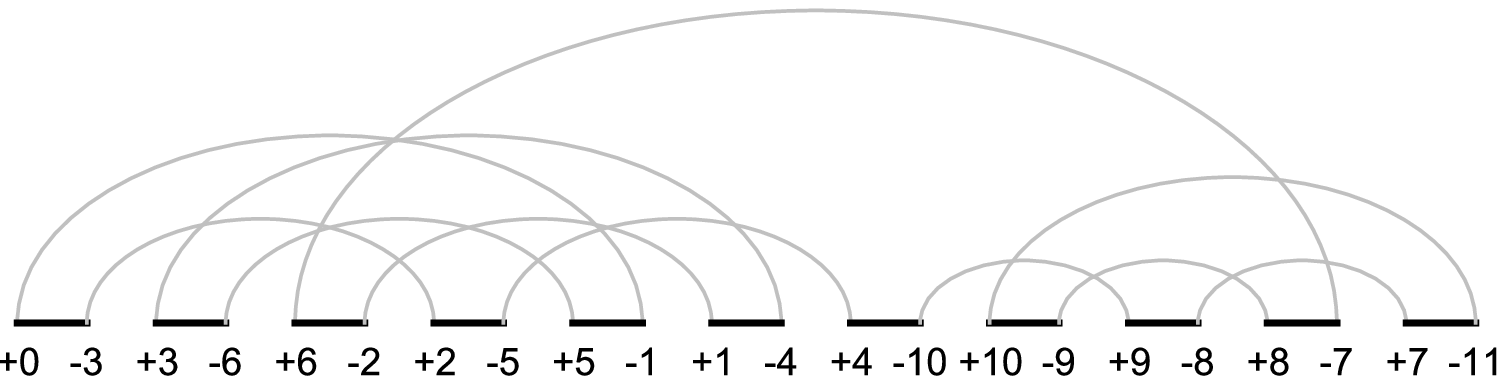}
		\caption{Cycle graph of $[3\;6\;2\;5\;1\;4\;10\;9\;8\;7]$.}
		\label{fig:eh-example}
	\end{figure*}
	
	\begin{figure*}[ht]
		\centering
		$\rho(6,8,11)\cdot[3\;6\;2\;5\;1\;\underline{4\;10}\;\doubleunderline{9\;8\;7}]=[3\;6\;2\;5\;1\;9\;8\;7\;4\;10]$\\
		$\rho(5,7,10)\cdot[3\;6\;2\;5\;\underline{1\;9}\;\doubleunderline{8\;7\;4}\;10]=[3\;6\;2\;5\;8\;7\;4\;1\;9\;10]$\\
		$\rho(3,6,9)\cdot[3\;6\;\underline{2\;5\;8}\;\doubleunderline{7\;4\;1}\;9\;10]=[3\;6\;7\;4\;1\;2\;5\;8\;9\;10]$\\
		$\rho(2,4,8)\cdot[3\;\underline{6\;7}\;\doubleunderline{4\;1\;2\;5}\;8\;9\;10]=[3\;4\;1\;2\;5\;6\;7\;8\;9\;10]$\\
		$\rho(1,3,5)\cdot[\underline{3\;4}\;\doubleunderline{1\;2}\;5\;6\;7\;8\;9\;10]=[1\;2\;3\;4\;5\;6\;7\;8\;9\;10]$
		\caption{Sorting $\pi'=[3\;6\;2\;5\;1\;4\;10\;9\;8\;7]$ with $5$ transpositions.}\label{fig:4}
	\end{figure*}
	
	\begin{figure*}[ht]
		\centering
		\includegraphics[scale=0.55]{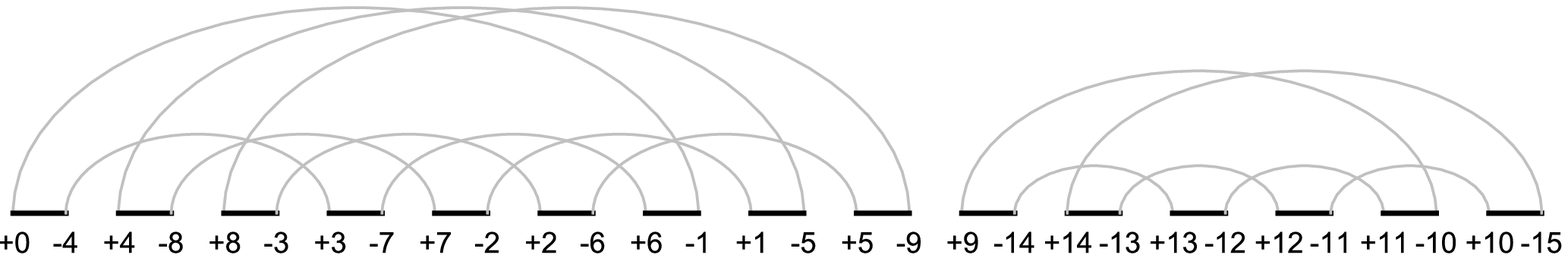}
		\caption{Cycle graph of $[4\;8\;3\;7\;2\;6\;1\;5\;9\;14\;13\;12\;11\;10]$, mapped from $\hat\pi'=[3.1\;6.1\;3\;6\;2\;5\;1\;4\;6.2\;10.1\;10\;9\;8\;7]$ using consecutive integers.}
		\label{fig:eh-example-simpl}
	\end{figure*}
	
	In both examples above, an initial $(2,2)$-sequence is ``missed'' during the simplification process. This sequence is essential to guarantee the $1.375$ approximation ratio when \emph{bad small components} remain in $G(\hat\pi)$ after the application of a number of $\frac{11}{8}$-sequences (Theorem 22~\cite{EliasHartman2006}). These are small full configurations which do not allow the application of $\frac{11}{8}$-sequences. It is important to stress that the extra transposition will be necessary regardless of the number of bad small components remaining in the cycle graph after applying a sequence of $\frac{11}{8}$-sequences (any number of), as long as the total number of remnant $3$-cycles is less than $8$ and the initial $(2,2)$-sequence that possibly existed initially, was ``missed'' during the simplification.
	
	It was already known by the literature that simplification maintained the lower bound, but not the transposition distance. However, it was not known that the simplification could have the effect of missing an initial $(2,2)$-sequence. In principle, the EH algorithm could be modified to guarantee the $1.375$-approximation ratio, and no extra transposition, by looking for the $(2,2)$-sequence in its first step, applying it case it exists, and only then simplifying the resulting permutation. However, using the already known techniques, this new ``modified'' EH algorithm would not keep the original time complexity of $O(n^2)$.

	\section*{Permutation groups}\label{sec:permutation-groups}
	
	The purpose of this section is to provide a background on permutation groups, necessary to understand the algebraic approach used in our method. Note that the results presented next are classical in the literature and their proofs are omitted since they can be found in abstract algebra textbooks~\cite{dummit2004abstract,Gallian2009}.
	
	The Symmetric Group $S_n$ on a finite set $E$ of $n$ symbols is the group formed by all \emph{permutations} on $n$ distinct elements of $E$, defined as bijections from $E$ to itself, under the operation of composition. The product of two permutations is defined as their composition as functions. Thus, if $\alpha$ and $\beta$ are permutations in $S_n$, then $\alpha \cdot \beta$, or simply $\alpha\beta$, is the function that maps any element $x$ of $E$ to $\alpha(\beta (x))$.
	
	An element $x \in E$ is said to be a \emph{fixed} element of $\pi \in S_n$, if $\pi(x) = x$. If there exists a subset $\{c_1, c_2, \dots, c_{\kappa-1}, c_\kappa\}$ of distinct elements of $E$, such that \[\pi(c_1)= c_2, \pi(c_2)= c_3, \dots, \pi(c_{\kappa-1})= c_\kappa, \pi(c_\kappa)= c_1,\] and $\pi$ fixes all other elements, then we call $\pi$ a \emph{cycle}. In \emph{cycle notation}, this cycle is written as $\pi=(c_1\;c_2\dots c_{\kappa-1}\;c_\kappa)$, but any of $(c_2\dots c_{\kappa-1}\;c_\kappa\;c_1)$, \dots, $(c_\kappa\;c_1\;c_2\dots c_{\kappa-1})$ denotes the same cycle $\pi$. The number \emph{$\kappa$} is the \emph{length} of $\pi$, also denoted as $|\pi|$. In this case, $\pi$ is also called a \emph{$\kappa$-cycle}.
	
	The \emph{support of a permutation $\alpha$}, denoted $Supp(\alpha)$, is the subset of moved (not fixed) elements of $E$. Two permutations $\alpha$ and $\beta$ are said \emph{disjoint}, if $Supp(\alpha) \cap Supp(\beta)=\varnothing$, i.e, if every symbol moved by one is fixed by the other. It is known that, if $\alpha$ and $\beta$ are disjoint, then they commute as elements of $S_n$, under the composition operation.

	\begin{lemma} 
		\label{lem:PermutationRepresentation}
		Every permutation in $S_n$ can be written as a product of disjoint cycles. This representation, called \emph{disjoint cycle decomposition}, is unique, regardless of the order in which the cycles are written in the representation.
	\end{lemma}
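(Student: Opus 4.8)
The plan is to prove existence through the orbit partition induced by $\pi$, and then to derive uniqueness from the observation that each cyclic factor is already determined by $\pi$ itself.

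First I would settle existence. Fixing a moved element $x \in Supp(\pi)$, I would examine the iterates $x, \pi(x), \pi^2(x), \dots$. Finiteness of $E$ forces two of them to coincide, and the bijectivity of $\pi$ (so that $\pi^{-1}$ exists) forces the first coincidence to return to $x$, giving a least $\kappa$ with $\pi^\kappa(x)=x$ together with $\kappa$ distinct elements $x, \pi(x), \dots, \pi^{\kappa-1}(x)$ on which $\pi$ acts exactly as the $\kappa$-cycle $(x\;\pi(x)\;\dots\;\pi^{\kappa-1}(x))$. I would then introduce the relation $x \sim y$ iff $y = \pi^k(x)$ for some integer $k$, check that it is an equivalence relation, and note that its classes (the orbits of $\pi$) partition $E$. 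Each nontrivial orbit yields a cycle as above; distinct orbits are disjoint, so the corresponding cycles have disjoint supports and therefore commute. Their product recovers $\pi$, since every element lies in a single orbit and is fixed by the cycles attached to all the others.

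For uniqueness, I would show that any disjoint decomposition $\pi = c_1 c_2 \cdots c_r$ is pinned down by $\pi$. Given $x \in Supp(\pi)$, exactly one factor $c_i$ moves $x$, while the remaining factors fix $x$ and all its iterates; hence on the entire orbit of $x$ the product agrees with $c_i$, forcing $c_i$ to be exactly the orbit cycle built in the existence step. Thus every factor of any disjoint decomposition is an orbit cycle of $\pi$, and conversely each orbit cycle must occur, so the collection of cycles is uniquely determined. Because disjoint cycles commute, the factors may be listed in any order without changing the product, which is precisely the asserted order-independence.

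I do not expect a serious obstacle, since the result is classical; the two points I would state with care are the finiteness/bijectivity argument that closes up the orbit at $x$, and the bookkeeping of fixed points. With the usual convention of omitting $1$-cycles, uniqueness holds as stated; if fixed elements are instead recorded as $1$-cycles, the statement still holds verbatim provided these trivial cycles are included on both sides.
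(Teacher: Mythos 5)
Your proposal is correct: the orbit-partition argument for existence and the orbit-by-orbit identification of factors for uniqueness (plus the careful remark about $1$-cycles and fixed points) is exactly the standard textbook proof. Note that the paper itself omits any proof of this lemma, explicitly deferring to abstract algebra textbooks as a classical result, so your argument supplies precisely the proof the paper points to rather than diverging from it.
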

	
	For the sake of simplicity, a cycle $\beta$ \emph{in} or \emph{of} a permutation $\alpha$ is a cycle in the disjoint cycle decomposition of $\alpha$.
	
	The identity permutation $\iota$ is the permutation fixing all elements of $E$. Fixed elements sometimes are omitted in the cycle notation. However, when necessary they are written as $1$-cycles.
	
	\begin{example}
        The permutation $\pi=[4\;8\;3\;7\;2\;6\;1\;5]$ ($G(\pi)$ depicted in Figure~\ref{fig:graph}), in cycle notation, is represented by $\cycle{1,4,7}\cycle{3}\cycle{2,8,5}$. In this case, $3$ is the only fixed element and could be omitted in this notation. We can say that $\pi$ can be written, in unique form, as a product of two disjoint 3-cycles. This permutation could be written as product of other cycles, but these cycles would not be disjoint. Furthermore, $\pi$ could be written as $\cycle{1,7}\cycle{1,4}\cycle{2,5}\cycle{2,8}$, using four $2$-cycles\footnote{A $2$-cycle is commonly referred to as transposition in the algebra literature. In order to avoid misunderstanding with the terminology, in this paper, ``transposition'' always refers to swapping two adjacent blocks of symbols in a permutation.}, and also as $\cycle{1,7}\cycle{4,7}\cycle{1,7}\cycle{4,7}\cycle{2,5}\cycle{2,8}$, using six $2$-cycles. 
	\end{example}
	
	\begin{figure*}[ht]
		\centering
		\includegraphics[scale=0.65]{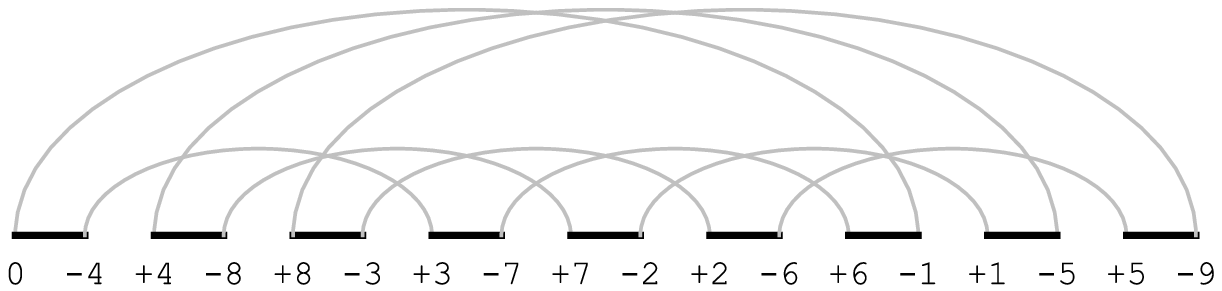}
		\caption{Cycle graph of $[4\;8\;3\;7\;2\;6\;1\;5]$}
		\label{fig:graph}
	\end{figure*}

	\begin{theorem}
		Every permutation in $S_n$ can be written as a (not unique) product of $2$-cycles.
	\end{theorem}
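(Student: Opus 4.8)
The plan is to leverage the disjoint cycle decomposition guaranteed by Lemma~\ref{lem:PermutationRepresentation} and reduce the claim to the single-cycle case. Given any $\alpha \in S_n$, I would first write $\alpha = \gamma_1 \gamma_2 \cdots \gamma_m$ as a product of disjoint cycles. Since a product whose factors are each expressible via $2$-cycles is itself a product of $2$-cycles, it suffices to show that an arbitrary cycle is a product of $2$-cycles; the full decomposition of $\alpha$ is then obtained by concatenating the decompositions of the individual $\gamma_i$.

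First I would exhibit an explicit factorization of a single $\kappa$-cycle. For $\kappa \geq 2$, I claim that
\[
(c_1\;c_2\;\dots\;c_\kappa) = (c_1\;c_\kappa)(c_1\;c_{\kappa-1})\cdots(c_1\;c_3)(c_1\;c_2),
\]
a product of $\kappa - 1$ transpositions. Because the paper fixes the convention $\alpha\beta(x)=\alpha(\beta(x))$, the factors on the right are applied from right to left, and I would verify the identity by tracking the image of each symbol. For an interior symbol $c_i$ (with $2 \leq i \leq \kappa-1$), reading the product right to left, the first factor moving $c_i$ is $(c_1\;c_i)$, which sends $c_i \mapsto c_1$, and the immediately following factor $(c_1\;c_{i+1})$ then sends $c_1 \mapsto c_{i+1}$, after which no later factor touches $c_{i+1}$; hence $c_i \mapsto c_{i+1}$. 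The boundary symbols are handled the same way: the rightmost factor gives $c_1 \mapsto c_2$, and the leftmost factor gives $c_\kappa \mapsto c_1$. This reproduces exactly the action of the original cycle (with indices read cyclically).

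The remaining cases are routine. A $1$-cycle (fixed element) contributes nothing and is simply dropped, and the identity permutation is the empty product of $2$-cycles, equivalently $(a\;b)(a\;b)$ for any distinct $a,b$. Combining the factorizations of the nontrivial cycles $\gamma_i$ then yields $\alpha$ as a product of $2$-cycles, completing the argument. The stated non-uniqueness is immediate and is already illustrated by the worked example preceding the statement, where the same permutation is written with four and with six $2$-cycles; indeed, inserting any $(a\;b)(a\;b)=\iota$ into a valid factorization produces a different one.

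The only real subtlety, and the single point I would be careful about, is the ordering of the transpositions forced by the right-to-left composition convention: under the opposite (left-to-right) convention the correct factorization would instead be $(c_1\;c_2)(c_1\;c_3)\cdots(c_1\;c_\kappa)$, so the image-tracking verification above is precisely what pins down the stated order. Everything else is bookkeeping.
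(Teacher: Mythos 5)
Your proof is correct: the factorization $(c_1\;c_2\;\dots\;c_\kappa) = (c_1\;c_\kappa)(c_1\;c_{\kappa-1})\cdots(c_1\;c_3)(c_1\;c_2)$ does hold under the paper's right-to-left composition convention (your image-tracking check is sound), and combining it with the disjoint cycle decomposition of Lemma~\ref{lem:PermutationRepresentation} yields the theorem, with non-uniqueness witnessed by inserting $(a\;b)(a\;b)$. The paper itself gives no proof --- it explicitly omits proofs in this section as classical, deferring to abstract algebra textbooks --- and your argument is precisely the standard textbook proof, so it matches the intended (implicit) approach.
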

	
	A permutation $\alpha$ is said to be \emph{even}(\emph{odd}) if it can be written as a product of an even (odd) number of $2$-cycles. Next, we present some important results related to the parity of permutations.
	
	\begin{theorem}
		If a permutation $\alpha$ is written as a product of an even(odd) number of $2$-cycles, it cannot be written as a product of an odd(even) number $2$-cycles.
	\end{theorem}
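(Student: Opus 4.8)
The plan is to prove the underlying invariant: that the parity of the number of $2$-cycles appearing in any factorization of a permutation $\alpha$ is completely determined by $\alpha$ itself. Once this is established, no permutation can admit both an even-length and an odd-length factorization, which is exactly the claim. The natural invariant to exploit, given the disjoint cycle decomposition guaranteed by Lemma~\ref{lem:PermutationRepresentation}, is the number of cycles. For a permutation $\beta \in S_n$, let $c(\beta)$ denote the number of cycles in its disjoint cycle decomposition, counting each fixed element as a $1$-cycle, so that $c(\iota) = n$.

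The key step is the following lemma: for any $2$-cycle $(a\;b)$ and any permutation $\beta$, one has $c\bigl((a\;b)\,\beta\bigr) = c(\beta) \pm 1$. Concretely, if $a$ and $b$ lie in the same cycle of $\beta$, then left-multiplication by $(a\;b)$ splits that cycle into two, so $c$ increases by $1$; if $a$ and $b$ lie in different cycles of $\beta$, then those two cycles are merged into one, so $c$ decreases by $1$. I would prove this by writing the affected cycle (or pair of cycles) of $\beta$ explicitly in cycle notation with $a$ and $b$ placed at chosen positions, computing the product $(a\;b)\,\beta$ directly, and checking that every symbol outside these cycles is left untouched.

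With the lemma in hand, the theorem follows by tracking $c$ along a factorization. Suppose $\alpha = \tau_1 \tau_2 \cdots \tau_k$ with each $\tau_i$ a $2$-cycle. Starting from $\iota$ and multiplying successively on the left by $\tau_k, \tau_{k-1}, \dots, \tau_1$, each step changes $c$ by exactly $\pm 1$, and after $k$ steps the partial product equals $\alpha$. Hence $c(\alpha) - c(\iota) = c(\alpha) - n$ is a sum of $k$ terms each equal to $\pm 1$, which forces $k \equiv n - c(\alpha) \pmod{2}$. Since the right-hand side depends only on $\alpha$ and not on the chosen factorization, every factorization of $\alpha$ into $2$-cycles has the same length modulo $2$, establishing the claim.

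I expect the main obstacle to be the case analysis in the key lemma, namely verifying carefully that a single transposition always either splits exactly one cycle or merges exactly two, with no other possibilities and no unintended effect on the remaining cycles. An alternative route, bypassing this combinatorial bookkeeping, is the classical polynomial argument: let each $\sigma \in S_n$ act on $P(x_1,\dots,x_n)=\prod_{i<j}(x_i-x_j)$ by permuting the indices; a direct check shows every $2$-cycle sends $P$ to $-P$, so a product of $k$ transpositions sends $P$ to $(-1)^k P$, and since the resulting sign $\pm P$ is determined by $\alpha$ alone, $(-1)^k$ is an invariant of $\alpha$. I would nonetheless favor the cycle-counting proof here, since it stays entirely within the disjoint-cycle language already developed in this section.
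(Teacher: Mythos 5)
The paper does not actually prove this statement: in the \emph{Permutation groups} section it explicitly states that the results are classical and ``their proofs are omitted since they can be found in abstract algebra textbooks,'' so there is no paper proof to compare against. Your proposal supplies one of the two standard textbook arguments, and it is correct. The key lemma --- that left-multiplying $\beta$ by a $2$-cycle $(a\;b)$ splits one cycle of $\beta$ when $a$ and $b$ share a cycle and merges two cycles otherwise, so that $c$ changes by exactly $\pm 1$ --- is exactly the invariant-based proof found in the cited references, and your bookkeeping is sound: counting fixed points as $1$-cycles (so $c(\iota)=n$), a factorization $\alpha=\tau_1\cdots\tau_k$ gives $c(\alpha)-n$ as a sum of $k$ terms each $\pm 1$, hence $k \equiv n - c(\alpha) \pmod 2$, a quantity depending on $\alpha$ alone. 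This choice also fits the surrounding text well, since it stays in the disjoint-cycle language of Lemma~\ref{lem:PermutationRepresentation} and the quantity $n - c(\alpha)$ it isolates is precisely the kind of invariant (compare $\norm{\alpha}_3 = \frac{n-{c^\circ}_{odd}(\alpha)}{2}$ in Lemma~\ref{lem:norm}) that the paper's algebraic machinery is built on. The alternative sign-of-$\prod_{i<j}(x_i-x_j)$ argument you mention is the other classical route; it is shorter to state but imports a polynomial action that the paper never uses, so your preference for the cycle-counting version is the better match here. The only point worth tightening in a full write-up is the case analysis of the key lemma: one should verify explicitly that the split/merge dichotomy is exhaustive and that symbols outside the affected cycle(s) are fixed by the product, which you already flag as the main obstacle.
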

	
	\begin{proposition}\label{prop:k-cycle-parity}
	    Let $\gamma$ be a $\kappa$-cycle. If $\kappa$ is odd, then $\gamma$ is an even permutation, otherwise $\gamma$ is odd.
	\end{proposition}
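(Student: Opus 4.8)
The plan is to reduce the claim to a count of $2$-cycles by exhibiting one explicit factorization of $\gamma$ into transpositions and then appealing to the fact that parity is well defined (the preceding theorem), so that the number of $2$-cycles in \emph{any} such factorization has a fixed parity. Concretely, writing $\gamma = (c_1\;c_2\;\dots\;c_\kappa)$ in cycle notation, I would first establish the factorization
\[
\gamma = (c_1\;c_\kappa)(c_1\;c_{\kappa-1})\cdots(c_1\;c_2),
\]
a product of exactly $\kappa - 1$ transpositions.

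To verify this identity I would track the image of each symbol under the right-hand side, being careful that composition is read right to left (so that $\alpha\beta$ sends $x$ to $\alpha(\beta(x))$, as fixed earlier in this section). For $1 \le i \le \kappa - 1$, the symbol $c_i$ is fixed by every factor $(c_1\;c_j)$ with $j < i$, is sent to $c_1$ by $(c_1\;c_i)$, is then carried to $c_{i+1}$ by $(c_1\;c_{i+1})$, and is fixed by all remaining factors; hence $c_i \mapsto c_{i+1}$. The symbol $c_\kappa$ is untouched until the leftmost factor $(c_1\;c_\kappa)$ sends it to $c_1$, and every symbol outside $\{c_1,\dots,c_\kappa\}$ is fixed by all factors. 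This reproduces exactly the action of $\gamma$, so the factorization holds.

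With the factorization in hand, the conclusion is immediate: $\gamma$ is a product of $\kappa - 1$ transpositions, so by the well-definedness of parity it is even precisely when $\kappa - 1$ is even, i.e.\ when $\kappa$ is odd, and odd precisely when $\kappa$ is even. I do not expect any genuine obstacle here; the only point demanding care is the bookkeeping in the verification step, where the right-to-left composition convention must be respected so that the telescoping $c_i \mapsto c_1 \mapsto c_{i+1}$ comes out in the correct direction.
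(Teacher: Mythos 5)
Your proof is correct: the factorization $\gamma = (c_1\;c_\kappa)(c_1\;c_{\kappa-1})\cdots(c_1\;c_2)$ into $\kappa-1$ transpositions is valid under the right-to-left composition convention (the only cosmetic slip is the degenerate case $i=1$, where there is no factor $(c_1\;c_1)$; one just observes that the rightmost factor $(c_1\;c_2)$ sends $c_1$ to $c_2$ and all other factors fix $c_2$), and combining it with the preceding theorem on the well-definedness of parity immediately gives the claim. Note that the paper supplies no proof of this proposition at all — it states at the start of the section that these results are classical and their proofs are omitted, with references to abstract algebra textbooks — and your argument is exactly the standard textbook proof being alluded to, so there is no divergence in approach to report.
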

	
	\begin{theorem}\label{even-permutation}
		If $\alpha$, $\beta \in S_n$ are permutations with the same parity, then the product $\alpha\beta$ is even.
	\end{theorem}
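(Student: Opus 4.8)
The plan is to argue directly from the definition of parity in terms of products of $2$-cycles, leaning on the two preceding theorems. Recall that every permutation can be written as a product of $2$-cycles, and that a given permutation cannot simultaneously admit a decomposition into an even number and into an odd number of $2$-cycles. Together these say that the \emph{parity} of a permutation—the parity of the number of $2$-cycles in any decomposition—is a well-defined invariant, so it is legitimate to speak of $\alpha$ and $\beta$ ``having the same parity'' without reference to a particular factorization.

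First I would fix decompositions $\alpha = \tau_1 \cdots \tau_m$ and $\beta = \upsilon_1 \cdots \upsilon_k$, where each $\tau_i$ and each $\upsilon_j$ is a $2$-cycle. Concatenating these factorizations gives
\[
\alpha\beta = \tau_1 \cdots \tau_m \, \upsilon_1 \cdots \upsilon_k,
\]
which exhibits $\alpha\beta$ as a product of exactly $m + k$ transpositions. Hence the parity of $\alpha\beta$ is determined by the parity of the integer $m + k$.

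Next I would use the hypothesis that $\alpha$ and $\beta$ have the same parity. This means $m$ and $k$ are either both even or both odd; in either case $m + k$ is even. Therefore $\alpha\beta$ is expressible as a product of an even number of $2$-cycles, and by the well-definedness of parity it is an even permutation, as claimed.

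There is no real obstacle here: the entire argument is the observation that concatenation of factorizations adds the numbers of $2$-cycles, combined with the elementary fact that ``even $+$ even'' and ``odd $+$ odd'' are both even. The only point requiring care is that the conclusion is only meaningful because the earlier theorem guarantees parity is an invariant; without it, the statement ``$\alpha\beta$ is even'' would not be independent of the chosen decomposition. I would make sure to invoke that theorem explicitly so the proof is self-contained within the framework already established.
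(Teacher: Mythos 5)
Your proof is correct: concatenating $2$-cycle decompositions of $\alpha$ and $\beta$ and observing that the sum of two integers of equal parity is even is exactly the standard argument, and you rightly note that it rests on the well-definedness of parity established by the preceding theorem. Note that the paper itself gives no proof of this statement---it is listed among the classical results whose proofs are explicitly omitted and deferred to abstract algebra textbooks---so your argument simply supplies the standard textbook proof the paper points to, and there is nothing further to compare.
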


	\section*{Algebraic formalisation of TDP}\label{sec:algebraic-formalization}
	
    A permutation $\pi=[\pi_1\;\pi_2\dots\pi_n]$ can be represented by many different ways. In the genome rearrangement context, where $\pi$ models a chromosome, an useful representation of $\pi$ is the set of cycles (defined using the labels of the black edges, as explained in the preliminaries) of $G(\pi)$. A naive way to obtain this representation is through building $G(\pi)$ itself. Another way to get the cycles of $G(\pi)$ is using the algebraic approach proposed by Mira el al.~\cite{Mira2008}, which is the one we employ in this paper. In this approach, we represent the permutation $\pi$ as the $(n+1)$-cycle $\epi=(0\;\pi_1\;\pi_2\dots\pi_n)$. Note that the ``dummy'' zero is used to allow the representation of the $n$ circular rotations of $\pi$ as distinct $(n+1)$-cycles. The next section will show a correspondence between the cycles of the permutation $\spi$ and the cycles of the cycle decomposition of $G(\pi)$, where $\eiota=\cycle{0,1,...,n}$\footnote{Note that $\eiota=\cycle{0,1,...,n}$ is not $\iota=(0)(1)\cdots(n)$.}.

	A $3$-cycle $\tau=\cycle{\pi_i,\pi_j,\pi_k}$ is said to be \emph{applicable} on $\epi$ if the symbols $\pi_i$, $\pi_j$ and $\pi_k$ appear in $\epi$ in the same cyclic order they are in $\tau$, i.e., $\epi=(\pi_i\dots\pi_j\dots\pi_k\dots)$~\cite{Mira2008}. The \emph{application} of $\tau$ on $\epi$ means multiply $\tau$ by $\epi$. Thus, and only in this case, the product $\tau \epi$ is a $(n+1)$-cycle, such that the symbols between $\pi_i$ and $\pi_{j-1}$, including $\pi_i$ but not $\pi_j$, in $\epi$ are ``cut'' and then ``pasted'' between $\pi_{k-1}$ and $\pi_k$, thus simulating a transposition on $\epi$, as $\tau\epi=\cycle{\pi_i,\pi_j,\pi_k}(\pi_0\pi_1\dots\pi_{i-1}\pi_i\pi_{i+1}\dots\pi_{j-1}\pi_j\pi_{j+1}\dots$ $\pi_{k-1}\pi_k\dots\pi_n)=(\pi_0\pi_1\dots\pi_{i-1}\pi_j\pi_{j+1}\dots\pi_{k-1}\pi_i\pi_{i+1}$ $\dots\pi_{j-1}\pi_k\dots\pi_n)$.

	\begin{example}
		Let $\epi=\cycle{0,4,3,2,1,8,7,6,5}$. The $3$-cycle $\tau=\cycle{0,2,7}$ is applicable to $\epi$ and thus simulates a transposition. The application $\tau\epi$ yields $\cycle{0,4,3,7,6,5,2,1,8}$. Now consider the $3$-cycle $\tau'=\cycle{0,1,2}$. Note that $\tau'$ is not applicable to $\epi$, and result of the product $\tau'\epi$ is $(0\;4\;3\;7\;6\;5)(1\;8)(2)$, which is not a $(n+1)$-cycle and therefore does not represent a chromosome in our approach.
	\end{example}
	
	Given two $(n+1)$-cycles $\epi$ and $\esigma$, the \textsc{Transposition Distance Problem} (TDP) consists of finding the minimum number $t$, denoted $d_t(\epi, \esigma)$, of transpositions represented as applicable $3$-cycles needed to transform $\epi$ into $\esigma$, i.e.,  
	\begin{align}
	\label{eq:1}
	\tau_t\dots\tau_1\epi & =\esigma.
	\end{align}
	
	From the equality above, multiplying both sides by $\epi^{-1}$, we have that 
	\begin{align}
	\label{eq:4}
	\tau_t\dots\tau_1 & =\esigma\epi^{-1}.
	\end{align}

    Observe that by Proposition~\ref{prop:k-cycle-parity} and Theorem~\ref{even-permutation}, the product of two cycles with the same length is an even permutation.
    
    \begin{proposition}\label{prop:spi}
        The permutation $\esigma\epi^{-1}$ is an even permutation.
    \end{proposition}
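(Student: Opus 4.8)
The plan is to reduce the claim directly to the observation stated immediately above the proposition, namely that a product of two cycles of equal length is even. Since $\epi$ and $\esigma$ are, by construction, $(n+1)$-cycles, the only gap to fill is that $\epi^{-1}$ is again an $(n+1)$-cycle, so that $\esigma$ and $\epi^{-1}$ are two cycles of the same length $n+1$. First I would record this: writing $\epi=\cycle{0,\pi_1,\dots,\pi_n}$ in cycle notation, its inverse is obtained by reversing the order of the entries, $\epi^{-1}=\cycle{\pi_n,\dots,\pi_1,0}$, which is visibly again a single $(n+1)$-cycle. Hence $\epi^{-1}$ has length $n+1$, exactly as $\esigma$.

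Next I would apply Proposition~\ref{prop:k-cycle-parity} to conclude that $\esigma$ and $\epi^{-1}$ have the same parity: both are even when $n+1$ is odd, and both are odd when $n+1$ is even, since parity of a $\kappa$-cycle depends only on $\kappa$. Finally, Theorem~\ref{even-permutation} states that the product of two permutations of equal parity is even, so $\esigma\epi^{-1}$ is even, which is precisely the assertion of Proposition~\ref{prop:spi}.

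There is essentially no hard part here; the only point that needs a word of justification is that taking an inverse does not change the length of a cycle (equivalently, does not change its parity), and this follows immediately from the reversed cycle notation above. Everything else is a direct invocation of the two parity results already established, and indeed the statement is just the specialization of the preceding remark to the two equal-length cycles $\esigma$ and $\epi^{-1}$.
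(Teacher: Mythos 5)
Your proposal is correct and follows exactly the paper's own argument: the paper proves this proposition by the observation, stated immediately before it, that Proposition~\ref{prop:k-cycle-parity} and Theorem~\ref{even-permutation} together imply that a product of two cycles of the same length is even, applied to the $(n+1)$-cycles $\esigma$ and $\epi^{-1}$. Your only addition — explicitly checking that inverting a cycle preserves its length via the reversed cycle notation — is a small detail the paper leaves implicit, and it is handled correctly.
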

    
	The \emph{$3$-norm}~\cite{mira2005algebraic} of an even permutation $\alpha \in S_n$, denoted by $\norm{\alpha}_3$, corresponds to the smallest $\ell$ such that $\beta_{\ell}\dots\beta_1=\alpha$, where each $\beta_i$, $1 \leq i \leq \ell$, is a $3$-cycle. By Equation~\ref{eq:4}, as each transposition is a $3$-cycle, the $3$-norm of $\esigma\epi^{-1}$ is a lower bound for $d_t(\epi, \esigma)$.

	Denote by ${c^\circ}(\alpha)$ and ${c^\circ}_{odd}(\alpha)$, the number of cycles, including $1$-cycles; and the number of odd-length cycles (thus, even cycles), also including $1$-cycles, in $\alpha$, respectively. Mira and Meidanis~\cite{mira2005algebraic} demonstrated the following result.
	
	\begin{lemma}[Mira and Meidanis~\cite{mira2005algebraic}]\label{lem:norm}
		$\norm{\alpha}_3 = \frac{n-{c^\circ}_{odd}(\alpha)}{2}$.
	\end{lemma}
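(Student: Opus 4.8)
The plan is to prove the two inequalities $\norm{\alpha}_3 \ge \frac{n-{c^\circ}_{odd}(\alpha)}{2}$ and $\norm{\alpha}_3 \le \frac{n-{c^\circ}_{odd}(\alpha)}{2}$ separately, where throughout I abbreviate $d(\alpha):=\tfrac12\bigl(n-{c^\circ}_{odd}(\alpha)\bigr)$. As a preliminary step I would check that $d(\alpha)$ is a nonnegative integer. Writing ${c^\circ}(\alpha)={c^\circ}_{odd}(\alpha)+{c^\circ}_{even}(\alpha)$, and recalling from Proposition~\ref{prop:k-cycle-parity} that each even-length cycle is an odd permutation while each odd-length cycle is even, the parity of $\alpha$ equals the parity of ${c^\circ}_{even}(\alpha)$. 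Since $\alpha$ is even, ${c^\circ}_{even}(\alpha)$ is even; as $n-{c^\circ}(\alpha)$ is also even, the identity $n-{c^\circ}_{odd}(\alpha)=(n-{c^\circ}(\alpha))+{c^\circ}_{even}(\alpha)$ shows $n-{c^\circ}_{odd}(\alpha)$ is even.

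For the lower bound I would show that multiplying a permutation by a single $3$-cycle changes ${c^\circ}_{odd}$ by at most $2$ in absolute value, so that $d$ changes by at most $1$ per factor. The key observations are that a $3$-cycle $(a\,b\,c)$ affects only the cycles of $\alpha$ containing $a,b,c$, and that the factorisation $(a\,b\,c)=(a\,b)(b\,c)$ lets me analyse its action as two successive $2$-cycle multiplications, each of which merges two cycles or splits one. Splitting into the three cases according to whether $a,b,c$ lie in one, two, or three distinct cycles of $\alpha$, and tracking the parities of the resulting cycle lengths, I expect to confirm in every case that the number of odd-length cycles moves by $-2$, $0$ or $+2$. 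Granting this, if $\alpha=\beta_\ell\cdots\beta_1$ with $\ell=\norm{\alpha}_3$ and $\gamma_i=\beta_i\cdots\beta_1$, then $\gamma_0=\iota$ satisfies $d(\gamma_0)=0$, each $|d(\gamma_i)-d(\gamma_{i-1})|\le 1$, and a telescoping estimate yields $d(\alpha)=d(\gamma_\ell)\le\ell=\norm{\alpha}_3$.

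For the upper bound I would argue constructively, exhibiting exactly $d(\alpha)$ three-cycles whose product is $\alpha$. Decompose $\alpha$ into disjoint cycles. Each odd-length cycle of length $2m+1$ is a product of $m$ three-cycles, for instance $(c_1\,c_2\,\dots\,c_{2m+1})=(c_1\,c_2\,c_3)(c_3\,c_4\,c_5)\cdots(c_{2m-1}\,c_{2m}\,c_{2m+1})$, contributing $m=\tfrac{(2m+1)-1}{2}$ factors. The even-length cycles are treated in pairs, which is legitimate precisely because ${c^\circ}_{even}(\alpha)$ is even (the first step): each even cycle of length $2p$ equals $(p-1)$ three-cycles times a single $2$-cycle, and the two leftover disjoint $2$-cycles combine via the identity $(x\,y)(z\,w)=(x\,y\,z)(y\,z\,w)$, so a pair of even cycles of lengths $2p,2q$ becomes $p+q$ three-cycles. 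Summing over all cycles, the total number of factors is $\tfrac12\bigl(\sum_i(\ell_i-1)+\sum_j m_j\bigr)=\tfrac12\bigl(n-{c^\circ}_{odd}(\alpha)\bigr)=d(\alpha)$, whence $\norm{\alpha}_3\le d(\alpha)$; together with the previous paragraph this gives equality.

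The main obstacle is the case analysis underlying the lower bound, namely verifying that a single $3$-cycle never changes the count of odd-length cycles by more than $2$. The delicate situation is when two of $a,b,c$ lie in one cycle and the third in another: here the net effect replaces two cycles of lengths $k_1,k_2$ by two cycles whose lengths again sum to $k_1+k_2$, and I must use that the number of odd summands of a fixed total has fixed parity in order to exclude a jump of $4$. Once this localisation lemma is in hand, the integrality remark, the telescoping argument, and the explicit decompositions used for the upper bound are all routine.
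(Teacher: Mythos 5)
Your proposal is correct, but there is nothing in the paper to compare it against: Lemma~\ref{lem:norm} is stated as a quoted result of Mira and Meidanis~\cite{mira2005algebraic}, with no proof given in this paper. Judged on its own, your argument is sound and complete. For the lower bound, your localisation claim---that left-multiplication by a single $3$-cycle changes ${c^\circ}_{odd}$ by at most $2$---is established by the same one-cycle/two-cycle/three-cycle case analysis that the paper itself carries out (for applicable $3$-cycles) just before Proposition~\ref{prop}, and your parity observation in the delicate two-cycle case (two cycles are replaced by two cycles of the same total length, and the number of odd parts in a two-part sum of fixed total has fixed parity) does correctly rule out a jump of $4$; the telescoping from $d(\iota)=0$ then gives $\norm{\alpha}_3\geq d(\alpha)$. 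For the upper bound, the chain identity $(c_1\,c_2\,c_3)(c_3\,c_4\,c_5)\cdots(c_{2m-1}\,c_{2m}\,c_{2m+1})=(c_1\,c_2\,\dots\,c_{2m+1})$ and the gluing identity $(x\,y)(z\,w)=(x\,y\,z)(y\,z\,w)$ both check out, and the pairing of even-length cycles is legitimate precisely because ${c^\circ}_{even}(\alpha)$ is even for an even permutation $\alpha$, which you establish in your preliminary step (the $3$-norm is only defined for even $\alpha$, so this is the right hypothesis to use). The only blemish is notational: in the final count the symbols $\ell_i$ and $m_j$ silently change meaning relative to their earlier use ($m$ was a half-length of an odd cycle, later it denotes a full even length), though the arithmetic still lands on $\tfrac{1}{2}\bigl(n-{c^\circ}_{odd}(\alpha)\bigr)$; tightening that bookkeeping would make the write-up airtight.
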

	
	Observe that by Proposition~\ref{prop:spi}, $\esigma\epi^{-1}$ is an even permutation. Therefore, as a corollary, a lower bound for TDP is derived.
	
	\begin{lemma}[Mira and Meidanis~\cite{mira2005algebraic}]\label{lem:lb-norm}
		If $\epi$ and $\esigma$ are $(n+1)$-cycles, then
		\begin{center}
			\begin{align*}
			d_t(\epi, \esigma) & \geq \norm{\esigma\epi^{-1}}_3 \\
			& \geq \frac{n+1-{c^\circ}_{odd}(\esigma\epi^{-1})}{2}.
			\end{align*}
		\end{center}
	\end{lemma}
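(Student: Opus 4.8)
The plan is to establish the two displayed inequalities separately and then chain them, noting at the outset that the statement is essentially a corollary of Lemma~\ref{lem:norm} together with Proposition~\ref{prop:spi}, so the substantive work has already been done and what remains is careful bookkeeping.

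For the first inequality, $d_t(\epi,\esigma) \geq \norm{\esigma\epi^{-1}}_3$, I would begin from an optimal sorting, i.e. a minimum-length sequence $\tau_1,\dots,\tau_t$ of applicable $3$-cycles with $t = d_t(\epi,\esigma)$ and $\tau_t\dots\tau_1\epi = \esigma$. Multiplying on the right by $\epi^{-1}$, exactly as in the derivation of Equation~\ref{eq:4}, gives $\tau_t\dots\tau_1 = \esigma\epi^{-1}$. Since each applicable transposition is in particular a $3$-cycle, this exhibits $\esigma\epi^{-1}$ as a product of $t$ $3$-cycles. By the definition of the $3$-norm as the smallest number of $3$-cycles whose product equals the given permutation, we obtain $\norm{\esigma\epi^{-1}}_3 \leq t = d_t(\epi,\esigma)$, which is precisely the first inequality. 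This step presupposes that a $3$-cycle factorization of $\esigma\epi^{-1}$ exists at all, i.e. that $\esigma\epi^{-1}$ is even; that is exactly what Proposition~\ref{prop:spi} supplies, so the $3$-norm is well defined here.

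For the second inequality I would appeal directly to Lemma~\ref{lem:norm}. The one point that demands care, and the step I expect to be the main obstacle, is the index bookkeeping: Lemma~\ref{lem:norm} is stated for a permutation of $S_n$ in terms of its number $n$ of symbols, whereas $\esigma\epi^{-1}$ acts on the $n+1$ symbols $\{0,1,\dots,n\}$, the extra symbol being the ``dummy'' $0$ introduced to represent $\pi$ as the $(n+1)$-cycle $\epi$. Substituting $n+1$ for the symbol count in Lemma~\ref{lem:norm} yields $\norm{\esigma\epi^{-1}}_3 = \frac{(n+1) - {c^\circ}_{odd}(\esigma\epi^{-1})}{2}$, which in particular gives the claimed inequality, indeed with equality.

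Chaining the two relations produces the full statement. The proof is short precisely because Lemma~\ref{lem:norm} carries the real content; the only genuine care required is to track that the dummy symbol $0$ raises the effective symbol count from $n$ to $n+1$, and to invoke Proposition~\ref{prop:spi} so that the $3$-norm of $\esigma\epi^{-1}$ is defined and the lower-bound argument via Equation~\ref{eq:4} is legitimate.
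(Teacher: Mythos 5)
Your proposal is correct and follows exactly the route the paper itself takes: the first inequality comes from Equation~\ref{eq:4} (an optimal sorting exhibits $\esigma\epi^{-1}$ as a product of $t$ applicable $3$-cycles, so the minimality defining $\norm{\cdot}_3$ gives the bound), and the second comes from Lemma~\ref{lem:norm} applied over the $n+1$ symbols $\{0,1,\dots,n\}$, with Proposition~\ref{prop:spi} guaranteeing evenness so the $3$-norm is defined. Your added observation that the second inequality is in fact an equality is accurate and consistent with the paper's use of Lemma~\ref{lem:norm}.
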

	
    As already seen, TDP can be reduced to the problem of \textsc{Sorting By Transpositions} (SBT). In this case, $\esigma = \eiota = (0\;1\;2\dots n)$.
	
	In the next sections, we deal with the problem of SBT.
	
    \section*{New upper bound for SBT}
    
    We begin with some basic definitions and results concerning the $\spi$ permutation. Next, we present our main results which are a new upper bound for SBT and a $1.375$-approximation algorithm.

	\subsection*{The $\spi$ permutation}\label{sec:spi}

    An interesting fact is that the product $\spi$, in the algebraic approach, produces cycles corresponding to the same cycles of the cycle decomposition of the $G(\pi)$~\cite{Mira2008}. If we follow the edges of the cycles in $G(\pi)$, $\pi=[\pi_1 \; \pi_2 \; \dots \; \pi_n]$, taking note of the labels, without the sign, of the vertices where the gray edges enter and changing the label $-(n+1)$ to $0$, we get exactly the same cycles of $\spi$. 

	\begin{example}
	    Let $\pi=[4\;8\;3\;7\;2\;6\;1\;5\;9\;14\;13\;12\;11\;$ $10]$ ($G(\pi)$ represented in Figure~\ref{fig:cycles}). As seen in Example~\ref{ex4}, the cycles of $G(\pi)$ are $(5,3,1)$, $(8,6,4)$, $(9,2,7)$, $(14,12,10)$ and $(15,13,11)$. Now let $\epi=\cycle{0,4,8,3,7,2,6,1,5,9,14,13,12,11,10}$. The product $\spi=\cycle{0,11,13}\cycle{1,7,4}\cycle{2,8,5}\cycle{3,9,6}\cycle{10,12,14}$ has the same number of cycles as $G(\pi)$ and its cycles have the same relevant properties of the cycles of $G(\pi)$, such as length and orientation, and the relationships between them are also the same (interleaving, intersection, configurations, etc). These properties, in the algebraic approach, will be defined in this and in the next sections.
	\end{example}
	
	\subsubsection*{Cycles of $\spi$}

	Let $\gamma$ be a cycle in $\spi$. If $\gamma=(a\dots b\dots c\dots)$ and $\epi^{-1}=(a\dots c\dots b\dots)$, i.e., if the symbols $a$, $b$ and $c$ appear in $\gamma$ in a cyclic order that is distinct from the one in $\epi^{-1}$, then we say $(a,b,c)$ is an \emph{oriented triplet} and $\gamma$ is an \emph{oriented cycle}. Otherwise, if there is no oriented triplets in $\gamma$, then $\gamma$ is an \emph{unoriented cycle}. A cycle $\eta=(\eta_1\;\eta_2\dots\eta_{|\eta|})$ is a \emph{segment} of $\gamma$ if $\gamma=(\eta_1\;\eta_2\dots\eta_{|\eta|}\dots)$. Observe that by definition, a cycle in $\spi$ is a segment of itself. Analogously, we define a segment of a cycle $\gamma$ of $\spi$ as \emph{oriented} or \emph{unoriented}.
	
	Let $\delta=(a\;b\dots)$ and $\epsilon=(d\;e\dots)$ be two cycles of $\spi$. If $\epi^{-1}=(a\dots e\dots b\dots d\dots)$, i.e., if the symbols of the pairs $(a,b)$ and $(d,e)$ occur in alternate order in $\epi^{-1}$, we say these pairs \emph{intersect}, and that $\delta$ and $\epsilon$ are \emph{intersecting} cycles. A special case is when $\delta=(a\;b\;c\dots)$ and $\epsilon=(d\;e\;f\dots )$ are such that $\epi^{-1}=(a\dots e\dots b\dots f\dots c\dots d\dots)$, i.e., the symbols of the triplets $(a,b,c)$ and $(d,e,f)$ occur in alternate order in $\epi^{-1}$. In this case, $\delta$ and $\epsilon$ are said to be \emph{interleaving} cycles. Analogously, we define two segments of two $\spi$ cycles as \emph{intersecting} or \emph{interleaving}. 
	
	\begin{example}
		Let $\epi=\cycle{0,8,7,6,5,1,4,9,3,2}$ and $\spi=\cycle{0,3}\cycle{1,6,8}\cycle{2,4}\cycle{5,7,9}$. The cycles $\cycle{0,3}$ and $\cycle{2,4}$ are examples of intersecting cycles whereas $\cycle{1,6,8}$ and $\cycle{5,7,9}$ are interleaving cycles.
	\end{example}
		
	A $\kappa$-cycle in $\spi$ is called \emph{short} if $\kappa \leq 3$; otherwise, it is called \emph{long}. Similarly, a segment of a cycle of $\spi$ can be \emph{short} or \emph{long}.

	Observe that, from Equation~\ref{eq:4}, $\spi{\tau_1}^{-1}\dots{\tau_t}^{-1}=\iota$, i.e., the application of the transpositions $\tau_1$,$\dots$,$\tau_t$ sorting $\epi$ (i.e., transforming $\epi$ into $\eiota$) can be seen as the incremental multiplication of $\spi$ by ${\tau_1}^{-1}$, $\dots$, ${\tau_t}^{-1}$.
	
	Denote by $\Delta {c^\circ}(\spi, \tau)$ and by $\Delta {c^\circ}_{odd}(\spi, \tau)$, the differences ${c^\circ}(\spi \tau^{-1})-{c^\circ}(\spi)$ and ${c^\circ}_{odd}(\spi$ $\tau^{-1})-{c^\circ}_{odd}(\spi)$, respectively, where $\tau$ is an applicable $3$-cycle $\cycle{a,b,c}$. Depending on the symbols of $\tau$, its application on $\epi$ can affect the cycles of $\spi$ in the following distinct ways:
	\begin{enumerate}
		\item $a$, $b$ and $c$ are symbols belonging to the support of only one cycle $\gamma$ of $\spi$. We have two subcases:
		\begin{enumerate}
			\item If $a$, $b$ and $c$ appear in the same cyclic order in $\tau$ and $\gamma$, then $(a\dots b\dots c\dots )\cycle{a,b,c}^{-1}=(a\dots)(b\dots)(c\dots)$  i.e., the application of $\cycle{a,b,c}$ ``breaks'' $\gamma$ into $3$ shorter cycles. Thus, $\Delta {c^\circ}(\spi, \tau) = 2$.
			\item Otherwise, $\gamma=(a\dots c\dots b\dots)$. Then, \\ $(a\dots c\dots b\dots )\cycle{a,b,c}^{-1}$ $=$ $(a\dots b\dots c\dots)$ and $\Delta {c^\circ}(\spi, \tau) = 0$.
		\end{enumerate}
		\item $a$, $b$ and $c$ belong to the support of two different cycles $\gamma$ and $\delta$ of $\spi$. W.l.o.g., suppose $\gamma=(a\dots b\dots)$ and $\delta=(c\dots)$. Then we have that $(a\dots b\dots)(c\dots)\cycle{a,b,c}^{-1}=(a\dots c\dots)(b\dots)$ and $\Delta {c^\circ}(\spi, \tau) = 0$.
		\item $a$, $b$ and $c$ belong to the support of three different cycles $\gamma$, $\delta$ and $\epsilon$ of $\spi$. W.l.o.g., suppose $\gamma=(a\dots)$, $\delta=(b\dots)$ and $\epsilon=(c\dots)$. Then $(a\dots)(b\dots)(c\dots)\cycle{a,b,c}^{-1}=(a\dots c\dots b\dots)$ i.e., the application of $\cycle{a,b,c}$ ``joins'' the cycles $\gamma$, $\delta$ and $\epsilon$ into one longer cycle. Thus, $\Delta {c^\circ}(\spi, \tau) = -2$.
	\end{enumerate}
	
	From this observation and considering the possible parities of the cycles $\gamma$, $\delta$ and $\epsilon$, we have the following result.
	
	\begin{proposition}\label{prop}
		If $\tau$ is an applicable $3$-cycle then $\Delta {c^\circ}_{odd}(\spi, \tau)\in\{-2,0,2\}$.
	\end{proposition}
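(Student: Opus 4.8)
The plan is to refine the case analysis carried out immediately before the statement, which already records exactly how ${c^\circ}$ changes when $\tau$ is applied, into a statement about ${c^\circ}_{odd}$ by overlaying a parity argument. First I would isolate the governing principle that the sign of a permutation is controlled by its even-length cycles: since a $\kappa$-cycle is even precisely when $\kappa$ is odd (Proposition~\ref{prop:k-cycle-parity}), a permutation is even if and only if it contains an even number of even-length cycles. Now $\spi$ is even by Proposition~\ref{prop:spi} (taking $\esigma=\eiota$), and $\tau^{-1}$, being a $3$-cycle, is also even, so $\spi\tau^{-1}$ is even as well by Theorem~\ref{even-permutation}. Consequently both $\spi$ and $\spi\tau^{-1}$ carry an even number of even-length cycles; writing $\Delta{c^\circ}_{even}(\spi,\tau)$ for the change in the number of even-length cycles, this difference of two even integers is itself even.

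Next I would combine this with the preceding observation, which establishes $\Delta{c^\circ}(\spi,\tau)\in\{-2,0,2\}$ in every case. Since ${c^\circ}={c^\circ}_{odd}+{c^\circ}_{even}$, we have $\Delta{c^\circ}_{odd}=\Delta{c^\circ}-\Delta{c^\circ}_{even}$, and as both terms on the right are even, $\Delta{c^\circ}_{odd}$ is even. It then remains only to bound its magnitude, and for this I would revisit the three cases of that observation. The decisive structural fact there is that $\tau$ touches only the one, two, or three cycles whose supports meet $\{a,b,c\}$, leaving every other cycle of $\spi$ intact, and that it conserves the total length distributed over those affected cycles.

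Finally, within each case the length-conservation constraint together with the fact that the parity of the number of odd cycles matches the parity of the conserved total length pins $\Delta{c^\circ}_{odd}$ down. In the splitting subcase a single cycle is replaced by three cycles whose lengths sum to the original, which forces the odd-count to change by $0$ or $2$; the merging case is symmetric and contributes $0$ or $-2$; and in the two remaining cases the number of cycles is unchanged with at most two cycles involved, so the odd-count lies in $\{0,1,2\}$ both before and after and its (necessarily even) change lies in $\{-2,0,2\}$. I expect the only delicate point to be exactly this magnitude bookkeeping in the splitting and merging cases, where three cycles are created or destroyed: the parity principle of the first paragraph is precisely what rules out the a priori possible jumps of $\pm 3$ (or larger), confining $\Delta{c^\circ}_{odd}(\spi,\tau)$ to $\{-2,0,2\}$ as claimed.
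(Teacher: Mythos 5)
Your proof is correct and matches the paper's argument in essence: the paper establishes this proposition exactly by the preceding case analysis of $\Delta {c^\circ}(\spi,\tau)$ together with a check of the possible length parities of the one, two, or three affected cycles, which is precisely the bookkeeping you carry out (splitting and merging conserve total length, so the parity of the number of odd-length cycles is preserved, forcing changes of $0$ or $\pm 2$). Your extra global step — deducing that $\Delta {c^\circ}_{odd}(\spi,\tau)$ is even because $\spi$ and $\spi\tau^{-1}$ are both even permutations and evenness equals having an even number of even-length cycles — is sound and a slightly slicker way to exclude odd jumps, but it is not invoked by the paper and is not strictly needed, since the local length-parity argument you also give already settles every case.
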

	
	The maximum number of cycles in $\spi$ is obtained if and only if $\spi$ is the identity permutation $\iota$. In this case, $\iota$ has $n+1$ cycles, being all even (odd-length) (in particular, they are all of length 1).
	
	We denote by \emph{$\mu$-move} an applicable $3$-cycle $\tau$ such that $\Delta {c^\circ}_{odd}(\spi, \tau)=\mu$. According to the Proposition~\ref{prop}, the possible moves are $(-2)$-move, $0$-move and $2$-move.
	
	\subsubsection*{Configurations and components}
	
	A \emph{configuration} $\Gamma$ is a product of segments of cycles of $\spi$, so that each cycle of $\spi$ has at most one segment in $\Gamma$. If $\norm{\Gamma}_3 \leq 8$ then $\Gamma$ is said to be \emph{small}; otherwise, \emph{big}.
	
	\begin{example} \label{ex:1}
		Let $\epi=(0\;6\;5\;3\;2\;1\;8\;7\;4\;9\;14\;13\;12\;11$ $\;10)$, so $\spi=\cycle{0,11,13}\cycle{1,3,6}\cycle{2,4,8}\cycle{5,7,9}\cycle{10,12,14}$. The product $\cycle{1,3,6}\cycle{2,4,8}$ is a small configuration of $\spi$.
	\end{example}
	
	A configuration $\Gamma$ is \emph{connected} if for any two segments $\gamma_1$ and $\gamma_m$ of $\Gamma$, there are segments $\gamma_2,\dots,\gamma_{m-1}$ in $\Gamma$ such that for each $i \in [1, m-1]$, $\gamma_i$ intersects or interleaves with $\gamma_{i+1}$. $\Gamma$ is said to be a \emph{component} if it consists of only one oriented cycle that does not intersect or interleave any other cycle of $\spi$; or a maximal connected configuration of $\spi$.

	\begin{example}
		Let $\epi=(0\;6\;5\;3\;2\;1\;8\;7\;4\;9\;14\;13\;12\;11$ $\;10)$. As $\spi=\cycle{0,11,13}\cycle{1,3,6}\cycle{2,4,8}\cycle{5,7,9}(10\;12\;$ $14)$, so $\cycle{0,11,13}\cycle{10,12,14}$ and $\cycle{1,3,6}\cycle{2,4,8}\cycle{5,7,9}$ are both components of $\spi$.
	\end{example}
	
    Let $\cycle{a,b,c}\cycle{d,e,f}$ be a configuration of $\spi$ consisting of two intersecting segments. If  $\epi^{-1}=(a\dots e\dots b\dots f\dots c\dots d\dots )$, i.e., if $\cycle{a,b,c}$ and $\cycle{d,e,f}$ interleave, then we call it the \emph{unoriented interleaving pair}. On the other hand, if $\epi^{-1}=(a\dots f\dots b\dots c\dots d\dots e\dots )$, i.e., $\cycle{a,b,c}$ and $\cycle{d,e,f}$ only intersect but do not interleave, then we call it the \emph{unoriented intersecting pair}.
	
	Let $\epsilon=(a\;b\dots)$ be a segment of a configuration $\Gamma$. We call the pair $(a,b)$ an \emph{open gate} in $\Gamma$, if there is no cycle $(c\;d\dots)$ in $\Gamma$ such that $(a,b)$ and $(c,d)$ intersect; and there is no $e \in Supp(\epsilon)$ such that $(a,b,e)$ is an oriented triplet. If $\Gamma$ is a configuration not containing open gates, then it is a \emph{full configuration}. Observe that the unoriented interleaving pair does not have open gates and therefore it is a full configuration. The unoriented intersecting pair, in its turn, has two open gates.
	
	\subsubsection*{Sequences of applicable $3$-cycles}
	
	We also denote by \emph{$(x,y)$-sequence}, for $x \geq y$, a sequence of $x$ applicable $3$-cycles $\tau_1$, $\dots$, $\tau_x$ such that, at least $y$ of them are $2$-moves. A $(x,y)$-sequence is said to be a \emph{$\frac{a}{b}$-sequence} if $x \leq a$ and $\frac{x}{y} \leq \frac{a}{b}$.
	
	\begin{example}
	    Let $\epi=(0\;4\;8\;3\;7\;2\;6\;1\;5\;9\;14\;13\;12\;11$ $\;10)$, so $\spi=\cycle{0,11,13}\cycle{1,7,4}\cycle{2,8,5}\cycle{3,9,6}(10\;12$ $\;14)$. The sequence $\tau_1=\cycle{1,4,7}$, $\tau_2=\cycle{2,8,5}$, $\tau_3=\cycle{1,4,7}$, $\tau_4=\cycle{3,9,6}$ is a $(4,3)$-sequence, which is also a $\frac{11}{8}$-sequence.
	\end{example}
	
	We say a configuration $\Gamma$ \emph{allows} the application of a $\frac{a}{b}$-sequence if it is possible to write this sequence using the symbols of $Supp(\Gamma)$.
	
	\subsection*{Auxiliary results}

    The proofs of some results in this section and the next rely on the analysis of a huge number of cases. Since it is impracticable to enumerate and verify by hand all the cases, we implemented, as Elias and Hartman~\cite{EliasHartman2006}, some computer programs~\cite{sourcecode} to systematically generate the proofs. In order to facilitate the visualisation and general understanding, the proofs are available to the reader in the form of a friendly web interface~\cite{proof}.

	Next we show some auxiliary results.
	
	\begin{proposition}\label{prop:3-cycle}
		If there is an oriented $3$-cycle $\gamma=\cycle{a,b,c}$ in $\spi$, then a $2$-move exists.
	\end{proposition}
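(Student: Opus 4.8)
The plan is to exhibit a single explicit $2$-move, namely $\tau=\gamma$ itself, and then verify the two conditions that ``$2$-move'' demands: that $\tau$ is applicable to $\epi$, and that $\Delta {c^\circ}_{odd}(\spi,\tau)=2$. Taking $\tau=\cycle{a,b,c}$ is the natural guess because an oriented triplet is exactly the data needed both to make a $3$-cycle applicable and to split the cycle it lives in.

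First I would check applicability. By hypothesis $\gamma=\cycle{a,b,c}$ is an oriented cycle, so by definition the triplet $(a,b,c)$ satisfies $\epi^{-1}=(a\dots c\dots b\dots)$. Inverting this cyclic word reverses the reading order of its symbols, hence $\epi=(a\dots b\dots c\dots)$; that is, $a$, $b$, $c$ occur in $\epi$ in the same cyclic order as in $\tau$, which is precisely the applicability condition. Next I would compute the effect of $\tau$ on $\spi$. Since $a$, $b$, $c$ all lie in the support of the single cycle $\gamma$ and occur there in the same cyclic order as in $\tau$, this is exactly case (1-a) of the cycle-decomposition analysis preceding Proposition~\ref{prop}: the product $\spi\tau^{-1}$ breaks $\gamma$ into three shorter cycles. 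Concretely, $\gamma\tau^{-1}=\cycle{a,b,c}\cycle{a,b,c}^{-1}$ fixes each of $a$, $b$, $c$, so $\gamma$ splits into the three $1$-cycles $(a)$, $(b)$, $(c)$ while every other cycle of $\spi$ is left untouched.

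Finally I would count odd-length cycles. The single $3$-cycle $\gamma$, which has odd length and so contributes to ${c^\circ}_{odd}$, is replaced by three $1$-cycles, all of which again have odd length; hence ${c^\circ}_{odd}$ increases by exactly $3-1=2$, giving $\Delta {c^\circ}_{odd}(\spi,\tau)=2$ and confirming that $\tau$ is a $2$-move. The only genuinely delicate point is the applicability step: one must track the inversion carefully, since the orientation hypothesis is phrased in terms of $\epi^{-1}$ whereas applicability is phrased in terms of $\epi$, and it is precisely the reversal of cyclic order under inversion that converts the oriented triplet into an applicable one. Everything else follows immediately from the definitions and the case analysis already established.
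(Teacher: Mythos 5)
Your proof is correct and takes exactly the paper's approach: the paper's entire proof is the one-line observation that $\cycle{a,b,c}$ itself is the required $2$-move, and you have simply filled in the details it leaves implicit (the inversion argument for applicability and the case~(1-a) cycle count). Both the applicability check and the computation $\Delta {c^\circ}_{odd}(\spi,\tau)=3-1=2$ are accurate.
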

	\begin{proof}
	    In this case, $\cycle{a,b,c}$ is a $2$-move.
	\end{proof}
	
	\begin{proposition}\label{prop:odd}
		If there is an odd (even-length) cycle in $\spi$, then a $2$-move exists.
	\end{proposition}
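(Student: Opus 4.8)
The plan is to produce a single applicable $3$-cycle $\tau$ with $\Delta {c^\circ}_{odd}(\spi,\tau)=2$. From the case analysis preceding Proposition~\ref{prop}, an applicable $3$-cycle acts on $\spi$ in exactly one of four ways: it splits one cycle into three, reorders one cycle, redistributes the supports of two cycles, or joins three cycles into one. A $2$-move must create two new odd-length cycles; reordering leaves ${c^\circ}_{odd}$ fixed and joining three cycles cannot raise the number of odd-length cycles by two, so only the single-cycle split and the two-cycle redistribution are viable. In each case the work is to choose the symbols so that the \emph{lengths} of the cycles produced have the right parities.

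First I would dispose of the case in which some even-length cycle $\gamma$ is oriented (so $|\gamma|\geq 4$). Requiring that the three symbols of $\tau$ occur in the same cyclic order in $\spi$ (so that a split occurs) and in $\epi$ (so that $\tau$ is applicable) is precisely the definition of an oriented triplet; hence an oriented triplet of $\gamma$ realises a split. Because $|\gamma|$ is even, the three arcs it is cut into contain an even number of odd lengths, so I would select a triplet whose arcs have lengths of parity $(\mathrm{odd},\mathrm{odd},\mathrm{even})$ — for instance arcs of sizes $1,1,|\gamma|-2$. Then $\gamma$, which being even-length contributes nothing to ${c^\circ}_{odd}$, is replaced by two odd-length cycles and one even-length cycle, giving $\Delta {c^\circ}_{odd}(\spi,\tau)=2$, a $2$-move.

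The harder case is when every even-length cycle is unoriented; this always includes the $2$-cycles, which possess no triplet and so cannot be split at all. Here I would use Proposition~\ref{prop:spi}: since $\spi$ is even while each even-length cycle is an odd permutation (Proposition~\ref{prop:k-cycle-parity}), the even-length cycles occur in even number, so at least two of them, say $\gamma$ and $\delta$, are present. I would then look for a redistribution $\tau=\cycle{a,b,c}$ with $a,b\in Supp(\gamma)$ and $c\in Supp(\delta)$; this leaves the number of cycles unchanged but produces two cycles whose lengths sum to the even number $|\gamma|+|\delta|$, so cutting at positions that make one of them odd forces the other odd as well, again yielding $\Delta {c^\circ}_{odd}(\spi,\tau)=2$.

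I expect the main obstacle to be this last step, where $\tau$ must meet two constraints at once — applicability (the prescribed cyclic order in $\epi$) and the parity of the two cut lengths — and where achievability must be shown for every relative arrangement of two even-length cycles in $\epi$, whether they intersect, interleave, or lie in different components; establishing, in the oriented case, that an oriented triplet with arc parities $(\mathrm{odd},\mathrm{odd},\mathrm{even})$ genuinely exists is a smaller instance of the same difficulty. As with the other case-heavy results of this section, I would discharge these finitely many local configurations by a systematic, computer-assisted enumeration, exhibiting a concrete $2$-move in each.
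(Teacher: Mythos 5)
Your case (2) is, in substance, the paper's entire proof: by Proposition~\ref{prop:spi} there are at least two even-length cycles $\gamma$ and $\delta$, and a $3$-cycle with two symbols in $\gamma$ and one in $\delta$ redistributes two lengths whose sum is even, so making one resulting cycle odd forces the other odd. But you stop exactly where the work lies. You never establish that an \emph{applicable} such $3$-cycle with the right cut parities exists, and the fallback you invoke --- computer enumeration over ``every relative arrangement of two even-length cycles'' --- is not available for this proposition: $\gamma$ and $\delta$ have unbounded length, so without a prior reduction there are infinitely many arrangements, not ``finitely many local configurations.'' The missing idea that makes the proof finite (indeed, two lines, with no computer) is that only \emph{three} symbols matter. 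Take any $a\in Supp(\gamma)$, let $b=\gamma(a)$, and take any $c\in Supp(\delta)$. The two arcs of $\gamma$ cut by $\{a,b\}$ have sizes $0$ and $|\gamma|-2$, both even; and in $\epi$ the symbols $a,b,c$ occur in one of only two cyclic orders, so one of $(a\;b\;c)$, $(b\;a\;c)$ is applicable. Whichever it is, the redistribution cuts $\gamma$ along an even arc and, since $|\gamma|+|\delta|$ is even, both resulting cycles are odd: a $2$-move. This is precisely the paper's proof, which writes $\gamma=(a\;b\dots)$, $\delta=(c\;d\dots)$, distinguishes only whether the pairs intersect or not, and exhibits the $2$-move explicitly in each case; orientation plays no role there.

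Your case (1) is both superfluous and a second gap. It is superfluous because the redistribution argument never uses unorientedness, so the oriented/unoriented dichotomy buys nothing. It is a gap because it rests on an unproven claim: that an oriented even-length cycle always has an \emph{oriented} triplet whose arcs have parities $(\mathrm{odd},\mathrm{odd},\mathrm{even})$. Orientation only guarantees that \emph{some} triplet is oriented; prescribing the arc parities of an oriented triplet is a genuinely different statement, and verifying it for cycles of arbitrary even length is again not a finite enumeration. If you delete case (1) and close case (2) with the three-symbol observation above, your argument collapses into the paper's proof.
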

	\begin{proof}
		Since $\spi$ is an even permutation (Proposition~\ref{prop:spi}), then there is an even number of odd cycles in $\spi$. Let $\gamma=(a\;b\dots)$ and $\delta=(c\;d\dots)$ be two odd cycles of $\spi$. We have two cases:
		\begin{enumerate}
			\item $\gamma$ and $\delta$ intersect. In this case, we have that $\epi^{-1}=(a\dots d\dots b\dots c\dots)$. Then $(a\;b\;c)$ is a $2$-move.
			\item $\gamma$ and $\delta$ do not intersect. W.l.o.g, suppose $\epi^{-1}=(a\dots b\dots c\dots d\dots)$. In this case, $(a\;c\;b)$ is a $2$-move. Note that $\epi^{-1}=(a\dots c\dots d\dots b\dots)$ is not a distinct case, being equivalent to a cyclic rotation of $(a\dots b\dots c\dots d\dots)$ with the symbols $a$ and $b$ chosen differently.
		\end{enumerate}
	\end{proof}
	
	\begin{lemma}\label{lem:oriented-5-cycle}
		If there is a $5$-cycle $\gamma=\cycle{a,d,b,e,c}$ in $\spi$ such that $(a,b,c)$ is an oriented triplet, then there is a $2$-move or a $(3,2)$-sequence.
	\end{lemma}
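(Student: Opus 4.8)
The goal is to show that a $5$-cycle $\gamma=\cycle{a,d,b,e,c}$ in $\spi$ with oriented triplet $(a,b,c)$ admits either a $2$-move or a $(3,2)$-sequence. The plan is to exploit the structure we already have. Since $(a,b,c)$ is an oriented triplet, by definition $\epi^{-1}$ presents $a$, $b$, $c$ in a cyclic order distinct from the one in $\gamma$; this is precisely the condition guaranteeing that the applicable $3$-cycle $\tau=\cycle{a,b,c}$ acts on $\gamma$ as in case \emph{1(b)} of the earlier case analysis, leaving the cycle count unchanged when it splits a single cycle along a non-cyclically-ordered triplet. So my first move is to compute $\gamma\tau^{-1}$ explicitly and determine how the $5$-cycle is reorganised.

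\textbf{Carrying out the computation.}

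First I would write $\gamma=\cycle{a,d,b,e,c}$ and apply $\tau^{-1}=\cycle{a,c,b}$ (the inverse of $\cycle{a,b,c}$). Multiplying $\gamma\tau^{-1}$ using the composition rules from the Preliminaries, I expect the product to decompose into shorter cycles whose precise form depends on where $d$ and $e$ sit relative to the triplet. The key observation is that $\gamma$ is a $5$-cycle, hence odd-length in the sense of the paper's parity convention (an even permutation is not what matters here; what matters is the \emph{even cycle} terminology: a $5$-cycle is an odd permutation but even-\emph{cycle} in the black-edge sense is irrelevant for $\spi$). What I actually want is to track $\Delta{c^\circ}_{odd}$. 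Since $\gamma$ contributes one even-length\,/\,odd-length cycle to the parity count, splitting it will change ${c^\circ}_{odd}(\spi)$, and I need to verify the split yields a net $+2$ in odd-length cycles, i.e. that $\tau$ is a $2$-move, or else produce a concrete $3$-cycle sequence of which two members are $2$-moves.

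\textbf{The branching argument.}

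The heart of the proof is a short case analysis on the relative cyclic position of the auxiliary symbols $d$ and $e$ within $\epi^{-1}$, since $(a,b,c)$ oriented pins down the triplet order but leaves $d,e$ free. For each sub-arrangement I would either exhibit a single applicable $3$-cycle that is a $2$-move (settling the claim immediately via a computation of $\Delta{c^\circ}_{odd}$), or, when no single $2$-move is available, construct three applicable $3$-cycles $\tau_1,\tau_2,\tau_3$ applied successively so that at least two of them are $2$-moves, giving a $(3,2)$-sequence. Concretely, after the first $3$-cycle reorganises $\gamma$ into a configuration containing a shorter oriented cycle or an intersecting pair of odd cycles, I would invoke Proposition~\ref{prop:3-cycle} or the mechanism of Proposition~\ref{prop:odd} to extract the remaining $2$-moves.

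\textbf{Main obstacle and closing remark.}

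The hard part will be the exhaustiveness of the case split: the oriented-triplet condition constrains $(a,b,c)$ but the interleaving of $d$ and $e$ produces several genuinely distinct configurations of $\epi^{-1}$, and in the awkward cases no immediate $2$-move exists so the $(3,2)$-sequence must be built by hand and each of its two claimed $2$-moves verified against the $\Delta{c^\circ}_{odd}\in\{-2,0,2\}$ count of Proposition~\ref{prop}. This is exactly the kind of finite-but-tedious enumeration the authors mention delegating to the computer-generated proofs~\cite{proof}, so I expect the formal verification of every branch to be the genuine labour, while the conceptual content is the single observation that an oriented triplet inside a $5$-cycle either already gives a $2$-move or bootstraps one after one preparatory $3$-cycle.
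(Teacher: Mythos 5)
Your high-level strategy---fix the cyclic order of $a,b,c$ imposed by the oriented-triplet condition, then run a finite case analysis on where $d$ and $e$ can sit, exhibiting a $2$-move or a $(3,2)$-sequence in each case---is exactly the strategy of the paper's proof. The genuine gap is that you never carry it out: the entire mathematical content of this lemma \emph{is} the enumeration and the explicit moves. The paper lists the six distinct forms of $\epi$ relative to the symbols of $Supp(\gamma)$, namely $(a\dots b\dots c\dots d\dots e\dots)$, $(a\dots b\dots c\dots e\dots d\dots)$, $(a\dots b\dots e\dots c\dots d\dots)$, $(a\dots e\dots b\dots d\dots c\dots)$, $(a\dots b\dots e\dots d\dots c\dots)$, $(a\dots d\dots b\dots e\dots c\dots)$, and for five of them exhibits a single $2$-move (e.g.\ $\cycle{b,e,d}$ in the second case), while for the first---the unique form admitting no $2$-move, later named the \emph{bad oriented $5$-cycle}---it exhibits the $(3,2)$-sequence $\cycle{a,b,c}$, $\cycle{b,c,d}$, $\cycle{c,d,e}$. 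Unlike Lemmas~\ref{lem:sufficient} and~\ref{lem:bad-small-comp}, this check is small enough that the paper does it by hand; deferring it to ``finite-but-tedious labour'' delegated to~\cite{proof} leaves nothing actually proved, and your proposal exhibits no concrete move for any case.

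Moreover, the setup you build before the (unperformed) case analysis contains errors that would derail the computation. The oriented-triplet condition puts you in case 1(a), not 1(b), of the earlier analysis: $(a,b,c)$ oriented means $a,b,c$ appear in the \emph{same} cyclic order in $\gamma$ and in $\epi$ (distinct from $\epi^{-1}$), so $\cycle{a,b,c}$ is applicable and \emph{splits} $\gamma$ into three cycles, raising ${c^\circ}$ by $2$; concretely $\gamma\cycle{a,b,c}^{-1}=(a)(b\;d)(c\;e)$. Your parity remarks are also backwards: a $5$-cycle is an \emph{even} permutation (Proposition~\ref{prop:k-cycle-parity}), i.e.\ an even (odd-length) cycle in the paper's terminology, and it is the odd-length count ${c^\circ}_{odd}$ that decides whether a move is a $2$-move. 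Correcting both points exposes the fact your plan misses and which is the whole reason the lemma is nontrivial: for this particular $5$-cycle, applying $\cycle{a,b,c}$ always produces one fixed point and two even-length $2$-cycles, hence is always a $0$-move, never a $2$-move. So the $2$-moves, when they exist, must come from \emph{other} triplets (all five found by the paper involve $d$ or $e$), and in the one arrangement where no such triplet exists, the $(3,2)$-sequence must begin with this $0$-move and recover two $2$-moves afterwards---precisely the content of the paper's case analysis that your proposal leaves undone.
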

	\begin{proof}
	    The possible distinct forms of $\epi$ relatively to the positions of the symbols of $Supp(\gamma)$ are listed below. For each one, there is either a $2$-move or a $(3,2)$-sequence.
	    \begin{enumerate}
            \item $\epi=(a\dots b\dots c\dots d\dots e\dots)$. $\tau_1=\cycle{a,b,c}$, $\tau_2=\cycle{b,c,d}$, $\tau_3=\cycle{c,d,e}$.
            \item $\epi=(a\dots b\dots c\dots e\dots d\dots)$. $\tau_1=\cycle{b,e,d}$.
            \item $\epi=(a\dots b\dots e\dots c\dots d\dots)$. $\tau_1=\cycle{a,e,c}$.
            \item $\epi=(a\dots e\dots b\dots d\dots c\dots)$. $\tau_1=\cycle{a,d,c}$.
            \item $\epi=(a\dots b\dots e\dots d\dots c\dots)$. $\tau_1=\cycle{a,d,c}$.
            \item $\epi=(a\dots d\dots b\dots e\dots c\dots)$. $\tau_1=\cycle{a,d,b}$.
	    \end{enumerate}
	\end{proof}
	
	Note that, by Lemma~\ref{lem:oriented-5-cycle}, if $\gamma=\cycle{a,d,b,e,c}$ is an oriented $5$-cycle in $\spi$ such that $(a,b,c)$ an oriented triplet, then $\epi=(a\dots b\dots c\dots d\dots e\dots)$ is the only form of $\epi$, relatively to the positions of the symbols of $Supp(\gamma)$, for which there is no $2$-move. In this case, we call $\gamma$ the \emph{bad oriented $5$-cycle}.
	
	\begin{lemma}\label{lem:even-oriented-k-geq-7}
		If there is an even (odd-length) $\kappa$-cycle $\gamma=(a\dots b\dots c\dots)$ in $\spi$ such that $\kappa\geq 7$ and $(a,b,c)$ is an oriented triplet, then there is either a $2$-move or $(4,3)$-sequence.
	\end{lemma}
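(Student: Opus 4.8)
The plan is to follow the same template as the proof of Lemma~\ref{lem:oriented-5-cycle}, but to first tame the unbounded length $\kappa$ by a parity-and-order reduction. The key observation is that, for any sequence of at most four applicable $3$-cycles, whether each move is a $2$-move, a $0$-move or a $(-2)$-move is determined entirely by (i) the relative cyclic order in $\epi$ of the finitely many symbols actually named in the sequence, and (ii) the parities of the lengths of the arcs of $\gamma$ lying between consecutive named symbols. Indeed, applicability and the notions of oriented triplet, intersecting and interleaving depend only on relative cyclic order, while the sign of $\Delta{c^\circ}_{odd}(\spi,\tau)$ produced by a split or a join depends only on whether the resulting cycles have odd or even length, i.e.\ on arc parities. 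Hence every untouched arc may be collapsed, preserving its parity, to length $0$ or $1$ without changing the move types along the sequence; the hypothesis $\kappa\geq 7$ is used only to guarantee that the symbols we wish to name (and the nonempty odd arcs) are actually present. This leaves only finitely many representative configurations to inspect.

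With this reduction in hand, I would split on the type of $\tau_1=\cycle{a,b,c}$, which is applicable precisely because $(a,b,c)$ is an oriented triplet. Writing $\gamma=(a\dots b\dots c\dots)$ with arcs of lengths $p,q,r$, so that $\kappa=p+q+r+3$ is odd, the split $\spi\tau_1^{-1}$ produces three cycles of lengths $p+1,\,q+1,\,r+1$. If $p,q,r$ are all even this is a $2$-move and the lemma holds at once. Since $p+q+r$ is even, the only remaining possibility is that exactly two of $p,q,r$ are odd, in which case $\tau_1$ is a $0$-move and the split leaves two even-length cycles together with one odd-length cycle.

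In this remaining case the goal is to append three $2$-moves to $\tau_1$ so as to obtain a genuine $(4,3)$-sequence. The two even-length cycles created already guarantee one $2$-move by Proposition~\ref{prop:odd}; the additional freedom comes from naming an auxiliary symbol inside one of the two odd arcs, which is nonempty since it has odd length at least $1$, so that the oriented structure inherited from $\gamma$ can be made to supply the further $2$-moves. Equivalently, one first checks whether some oriented triplet of $\gamma$ already has all-even arcs, giving a direct $2$-move; if none does, this absence of a $2$-move forces $\gamma$ into the finitely many bad configurations generalising the bad oriented $5$-cycle of Lemma~\ref{lem:oriented-5-cycle}, and for each of them an explicit $(4,3)$-sequence is written down.

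The main obstacle is exactly this last verification: even after the reduction there remain many representative configurations, distinguished by the relative order of the named symbols in $\epi$ and by the arc parities, and for each one must either locate a $2$-move or exhibit four applicable $3$-cycles, at least three of which are $2$-moves, and confirm that they act as claimed. As in Elias and Hartman~\cite{EliasHartman2006}, I would delegate this exhaustive enumeration to a computer program~\cite{sourcecode}, so that the human content of the argument is the parity-and-order reduction above, together with the check that collapsing each untouched arc to length $0$ or $1$ is faithful for sequences of length at most four.
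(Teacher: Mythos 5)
Your opening reduction is sound, and it is essentially the reduction the paper itself performs: after disposing of the case where $(a\;b\;c)$ is a $2$-move, the paper normalises the arc parities (w.l.o.g.\ the split leaves one odd-length and two even-length pieces) and then factors $\gamma$ as a product of three odd-length ``remainder'' cycles attached at $a$, $b$, $c$ with a seven-symbol skeleton $(a\;d\;e\;b\;f\;c\;g)$, so that only the cyclic order in $\epi$ of these seven named symbols, together with the parities you describe, can influence the move types. So up to this point your plan and the paper's proof coincide.

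The gap is that you never carry out the finite case analysis, and that analysis is the entire content of the lemma. You delegate it to a computer search that has not been run, and the one concrete mechanism you sketch --- apply the $0$-move $(a\;b\;c)$ and then ``append three $2$-moves'' --- is justified only for the first of those three $2$-moves (via Proposition~\ref{prop:odd}, from the two even-length cycles the $0$-move creates); the assertion that ``the oriented structure inherited from $\gamma$ can be made to supply the further $2$-moves'' is exactly the statement that needs proof, and your text gives no reason the search could not fail on some parity pattern. Your sketch also under-specifies the skeleton: naming a single auxiliary symbol in one odd arc is not enough, since the sequences that witness the lemma involve four auxiliary symbols (two in the even arc and one in each odd arc). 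The paper closes this hole concretely: it checks that, relative to the seven skeleton symbols, exactly one cyclic form of $\epi$ admits no $2$-move, namely $\epi=(a\dots e\dots f\dots g\dots d\dots b\dots c\dots)$, and for that single bad form it exhibits the explicit $(4,3)$-sequence $\tau_1=(a\;e\;f)$, $\tau_2=(d\;e\;f)$, $\tau_3=(b\;f\;d)$, $\tau_4=(a\;c\;g)$ --- which, notably, does not begin with $(a\;b\;c)$ at all. Until you either report a completed enumeration or exhibit such sequences, what you have is a correct proof plan, not a proof of the lemma.
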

	\begin{proof} If $(a\;b\;c)$ is a $2$-move, then the lemma holds. There is only one case where $\cycle{a,b,c}$ would not be a $2$-move. W.l.o.g, suppose that this case is 
		\[
		\gamma=(\underbrace{d\;e\dots b}_{odd}\;|\;\underbrace{f\dots c}_{even}\;|\;\underbrace{g\dots a}_{even}\;|).
		\]
		Vertical bars are used to indicate the locations where $\gamma$ would be broken if $\cycle{a,b,c}$ were applied on $\epi$, and subscripts to indicate the parity of the length of the resulting cycles. Note that the cycle $\gamma$ can be rewritten as the product 
		\[
		\gamma=(\underbrace{a\dots}_{odd})(\underbrace{b\dots}_{odd})(\underbrace{c\dots}_{odd})\cycle{a,d,e,b,f,c,g}.
		\]
		 There is only one form of $\epi$ relatively to the symbols of the support of $\cycle{a,d,e,b,f,c,g}$ not allowing the application of a $2$-move. It is $\epi=(a\dots e\dots f\dots g\dots d\dots b\dots$ $ c\dots)$. For this $\epi$, $\tau_1=(a\;e\;f)$, $\tau_2=(d\;e\;f)$, $\tau_3=(b\;f\;d)$, $\tau_4=(a\;c\;g)$ is $(4,3)$-sequence of transpositions.
	\end{proof}
	
	\begin{lemma}\label{lem:3-2}
		If $\spi\neq\iota$, then a $2$-move or $(3,2)$-sequence exists.
	\end{lemma}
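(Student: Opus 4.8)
The plan is to run a structural case analysis on the cycle decomposition of $\spi$, feeding each case into one of the auxiliary results already proved. First I would clear away the easy reductions. By Proposition~\ref{prop:odd}, if $\spi$ contains any even-length cycle a $2$-move already exists, so I may assume every cycle of $\spi$ is odd-length. Since $\spi\neq\iota$ there is at least one nontrivial cycle, and being odd-length it has length $\kappa\geq 3$. Recall also that $\spi$ is even (Proposition~\ref{prop:spi}), which keeps the bookkeeping of odd-length cycles consistent. The argument then splits according to whether $\spi$ carries an oriented cycle of length $\geq 3$ or is built entirely from unoriented cycles.

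Suppose first that some cycle $\gamma$ with $|\gamma|\geq 3$ is oriented, so $\gamma$ contains an oriented triplet $(a,b,c)$. I would branch on $|\gamma|$. If $|\gamma|=3$, Proposition~\ref{prop:3-cycle} gives a $2$-move outright. If $|\gamma|=5$, I would either exhibit an oriented triplet that breaks $\gamma$ into three odd pieces (an immediate $2$-move), or else observe that the only oriented triplet is the ``bad'' one, forcing the form $\gamma=\cycle{a,d,b,e,c}$, and then invoke Lemma~\ref{lem:oriented-5-cycle} to obtain a $2$-move or a $(3,2)$-sequence. If $|\gamma|\geq 7$, Lemma~\ref{lem:even-oriented-k-geq-7} yields a $2$-move or a $(4,3)$-sequence $\tau_1,\dots,\tau_4$; and a $(4,3)$-sequence already suffices, because either $\tau_1$ is itself a $2$-move (done), or $\tau_1$ is not, in which case the three moves $\tau_2,\tau_3,\tau_4$ are all $2$-moves, so the applicable prefix $\tau_1,\tau_2,\tau_3$ is a $(3,2)$-sequence. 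This disposes of the oriented case entirely using only the stated auxiliary results.

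The remaining case, when every cycle of length $\geq 3$ is unoriented, is the genuinely hard part, since an unoriented cycle admits no internal $2$-move and the cycles must be combined. The key structural input I would rely on (classical for permutations arising as $\spi$, and independently checkable) is that an unoriented cycle cannot form a component by itself: a lone unoriented cycle is not realizable, so every component carrying one is a connected configuration of at least two cycles and hence contains a pair that intersects or interleaves. Taking such a pair, I would apply a single $0$-move that merges them into a longer cycle which is now \emph{oriented}, thereby reducing to the oriented case and, together with the $2$-move it releases, assembling a $(3,2)$-sequence of the shape $0$-move, $2$-move, $2$-move. The main obstacle is precisely this last step: verifying that on \emph{every} intersecting pair and \emph{every} interleaving pair of unoriented odd cycles the chosen $0$-move really does create an oriented cycle from which two further $2$-moves can be extracted. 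Because the intersecting and interleaving configurations come in several shapes, I expect this verification to be the bulk of the work and to be carried out by the systematic case enumeration used elsewhere in the paper rather than by hand.
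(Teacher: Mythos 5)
Your handling of the oriented case is correct and matches the paper: Proposition~\ref{prop:odd} clears even-length cycles, Proposition~\ref{prop:3-cycle} handles oriented $3$-cycles, Lemma~\ref{lem:oriented-5-cycle} handles oriented $5$-cycles, and Lemma~\ref{lem:even-oriented-k-geq-7} handles length $\geq 7$; your remark that a $(4,3)$-sequence whose first move is not a $2$-move has a $(3,2)$-prefix is exactly the right reading of ``contains a $(3,2)$-sequence''. The gap is in the unoriented case. Your mechanism --- merge two intersecting or interleaving unoriented cycles by a $0$-move into one longer oriented cycle, then extract two $2$-moves from it --- provably fails on the unoriented interleaving pair, which is not an exotic case but one of the paper's basic full configurations (indeed a bad small component, so it genuinely occurs as all that is left of a component). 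Take $\gamma=\cycle{a,b,c}$, $\delta=\cycle{d,e,f}$ with $\epi=(a\dots f\dots c\dots e\dots b\dots d\dots)$. The applicable $3$-cycles that merge $\gamma$ and $\delta$ into a single longer cycle (plus one fixed point) are, up to the symmetry of the configuration, $\cycle{a,b,d}$, $\cycle{b,c,e}$ and $\cycle{c,a,f}$. Applying $\cycle{a,b,d}$ produces the $5$-cycle $\cycle{a,e,f,d,c}$ with new order $\epi'=(a\dots f\dots c\dots e\dots d\dots)$; of the five triplets that cut this $5$-cycle into three odd-length pieces, none is applicable on $\epi'$, so no $2$-move exists --- this is precisely the bad oriented $5$-cycle of Lemma~\ref{lem:oriented-5-cycle}. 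The other merging moves give the same outcome by symmetry. Hence no sequence of your proposed shape (merging $0$-move, then two $2$-moves) exists for this configuration, and the verification you defer to systematic enumeration would come back negative, not positive.

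The paper avoids this by using a first move of a different shape. For an interleaving pair it applies $\tau_1=\cycle{a,c,b}$, a $0$-move supported on $\gamma$ alone, which does not merge anything but merely inverts $\gamma$; after it, $\tau_2=\cycle{d,e,f}$ and then $\tau_3=\cycle{a,c,b}$ are both $2$-moves, yielding the $(3,2)$-sequence. For cycles that intersect without interleaving, the paper does not argue pairwise at all: since an unoriented segment has both of its gates closed by other segments, it takes $\gamma$ together with two segments $\delta$ and $\epsilon$ intersecting it and disposes of the $15$ possible relative orders by the computerized enumeration. (Your merge idea does happen to work for a plain unoriented intersecting pair --- e.g., $\cycle{a,b,f}$ there produces an oriented $5$-cycle admitting two successive $2$-moves --- but the interleaving pair breaks the uniform mechanism, so your reduction to the oriented case is false as stated and must be replaced, at least in that case, by the inversion trick.)
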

	\begin{proof}
		If there is an odd (even-length) cycle in $\spi$, then by Proposition~\ref{prop:odd}, a $2$-move (i.e., a $(1,1)$-sequence) exists. Thus, we assume $\spi$ containing only even (odd-length) cycles. 
		
		\begin{enumerate}
			\item \label{lem:3-2:first} There is an oriented $\kappa$-cycle $\gamma$ in $\spi$. If $\kappa=3$, then Proposition~\ref{prop:3-cycle} gives a $2$-move and the lemma holds. If $\kappa=5$, then Lemma~\ref{lem:oriented-5-cycle} gives a $2$-move or $\gamma$ is the bad oriented $5$-cycle. In this case, there is a $(3,2)$-sequence. On the other hand, if $\kappa \geq 7$, then a $2$-move or a $(4,3)$-sequence, which contains a $(3,2)$-sequence, is given by Lemma~\ref{lem:even-oriented-k-geq-7}.
			
			\item All the cycles of $\spi$ are unoriented. Let $\gamma=\cycle{a,b,c}$ be a segment of a cycle of $\spi$. We have two cases:
			\begin{enumerate}
				\item $\gamma$ interleaves with another segment $\delta=\cycle{d,e,f}$. In this case, we have that $\epi=(a\dots f\dots c\dots e\dots b\dots d\dots )$. Then, $\tau_1=\cycle{a,c,b}$, $\tau_2=\cycle{d,e,f}$ and $\tau_3=\cycle{a,c,b}$ is a $(3,2)$-sequence.
				\item \label{lem:3-2-a} $\gamma$ intersects with two segments $\delta=\cycle{d,e,f}$ and $\epsilon=\cycle{g,h,i}$. For each of the $15$ distinct forms of $\epi$ (enumerated on~\cite{proof}), relatively to the possible positions of the symbols of $\gamma$, $\delta$ and $\epsilon$, there is a $(3,2)$-sequence.
			\end{enumerate}
		\end{enumerate}
	\end{proof}
	
	\subsection*{Configuration analysis}\label{sec:unoriented-confs}
	\label{sec:unoriented}

	At this point, we consider $\spi$ consisting only of even (odd-length) unoriented cycles of any size or bad oriented $5$-cycles. For the other cases, Propositions~\ref{prop:3-cycle},~\ref{prop:odd} and Lemma~\ref{lem:even-oriented-k-geq-7} give a $2$-move or a $(4,3)$-sequence.
	
	Our goal is to prove that, if $\norm{\spi}_3 \geq 8$, then a $\frac{11}{8}$-sequence of transpositions exists. The analysis is divided in two parts. In the first part, we analyse configurations obtained from basic ones (defined below) by extension. In the second part, we analyse $\spi$ composed only of small components, not allowing application of $\frac{11}{8}$-sequences. 
	
	\subsubsection*{Extension of basic configurations}
	
	The analysis starts with the bad oriented $5$-cycle, and the only two connected configurations of $3$-norm equal to $2$: the unoriented intersecting pair; and the unoriented interleaving pair. From these three \emph{basic configurations}, it is possible to build any other connected configuration of $\spi$ by successive extensions. From a configuration $\Gamma$, we can obtain a larger configuration $\Gamma'$, such that $\norm{\Gamma'}_3=\norm{\Gamma}_3+1$, extending $\Gamma$ by three different \emph{sufficient extensions}, as follows:
	
	\begin{enumerate}
		\item If $\Gamma$ has open gates, we can add a new unoriented $3$-cycle segment to $\Gamma$, closing at least one open gate.\label{extension:1}
		\item If $\Gamma$ has no open gates, we can add a new unoriented $3$-cycle segment to $\Gamma$, so that this segment intersects or interleaves another one in $\Gamma$.\label{extension:2}
		\item Let $\gamma$ be a segment in $\Gamma$. We can increase the length of $\gamma$ by $2$, originating a bad oriented $5$-cycle; or a longer unoriented segment, so that at least one open gate is closed, if $\Gamma$ has open gates; or creating up to two open gates, otherwise. \label{extension:3}
	\end{enumerate}
	
	\begin{example}
		We can extend the configuration $\Gamma$ of Example~\ref{ex:1} using extension~\ref{extension:1}, yielding $\Gamma'=\cycle{1,8,10}\cycle{5,7,12}\cycle{9,11,13}$. Then, with extension~\ref{extension:2}, we obtain $\Gamma''=\cycle{1,8,10}\cycle{2,4,6}\cycle{5,7,12}\cycle{9,11,13}$. Finally, with extension~\ref{extension:3}, we obtain $\Gamma'''=\cycle{0,3,5,7,12}\cycle{1,8,10}$ $\cycle{2,4,6}\cycle{9,11,13}$.
	\end{example}
	
	A \emph{sufficient configuration} is a configuration obtained by successively extending one of the basic configurations referred above. The computerised analysis proves the following result.
	
	\begin{lemma}\label{lem:sufficient}
		If it is possible to build a sufficient configuration $\Gamma$ of $\spi$ such that $\Gamma$ is big, then $\Gamma$ allows a $\frac{11}{8}$-sequence.
	\end{lemma}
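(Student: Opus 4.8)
The plan is to replace the quantification over arbitrarily large configurations by a finite verification, exploiting two structural facts that follow directly from the definitions above: each sufficient extension raises the $3$-norm by exactly one, and the property of allowing a $\frac{11}{8}$-sequence is preserved under enlargement of a configuration.

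First I would isolate a \emph{monotonicity} observation. Suppose $\Gamma_0$ is a sub-configuration of $\Gamma$, in the sense that every segment of $\Gamma_0$ is contained in a segment of $\Gamma$, and suppose $\Gamma_0$ allows a $\frac{11}{8}$-sequence. Then $\Gamma$ allows the same sequence: it is written using symbols of $Supp(\Gamma_0) \subseteq Supp(\Gamma)$, and each of its applicable $3$-cycles acts on the cycles of $\spi$ exactly as in cases (1)--(3) of the application rules. Since the extensions only append new segments or lengthen existing ones at positions already recorded in the configuration, the classification of each constituent $3$-cycle as a $2$-move is unaffected. This is the first delicate point, because being a $2$-move is \emph{a priori} a global property of $\spi$; I would argue that for the moves occurring in these sequences it is in fact determined by the local cyclic arrangement, in $\epi^{-1}$, of the symbols inside the (sub)configuration.

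Second, I would use monotonicity to reduce to bounded $3$-norm. Every sufficient configuration arises from one of the three basic configurations, each of $3$-norm $2$, by extensions that each increase the $3$-norm by exactly one; hence the construction of any big $\Gamma$, with $\norm{\Gamma}_3 \geq 9$, passes through a sufficient configuration $\Gamma_9$ with $\norm{\Gamma_9}_3 = 9$ and $\Gamma_9 \subseteq \Gamma$. By monotonicity it therefore suffices to prove that \emph{every} sufficient configuration of $3$-norm exactly $9$ allows a $\frac{11}{8}$-sequence. Such configurations have bounded support: writing the (odd) cycle lengths as $\kappa_1,\dots,\kappa_c$, one has $\sum_i \frac{\kappa_i-1}{2}=9$, so $c \leq 9$ and $\sum_i \kappa_i = 18 + c \leq 27$. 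Thus only finitely many combinatorial types remain.

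Finally I would carry out the enumeration by breadth-first extension from the three basic configurations, pruning a branch as soon as its current configuration allows a $\frac{11}{8}$-sequence (justified by monotonicity), and checking that no surviving branch reaches $3$-norm $9$ without having produced such a sequence. The main obstacle is the size of this case analysis: even after pruning, the number of distinct sufficient configurations up to norm $9$, together with the candidate sequences inside each, is far beyond hand verification, so the check must be delegated to a computer program as in~\cite{EliasHartman2006}. A secondary difficulty is making the locality claim of the monotonicity step fully rigorous, so that a sequence found inside $\Gamma_9$ is genuinely a $\frac{11}{8}$-sequence relative to the full permutation $\spi$.
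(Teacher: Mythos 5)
Your proposal is correct and takes essentially the same route as the paper: the paper also reduces the lemma to a finite, computer-delegated search that starts from the three basic configurations, extends them (each extension raising the $3$-norm by one), and prunes a branch as soon as a $\frac{11}{8}$-sequence is found, so that no unpruned branch survives to $3$-norm $9$; its program produced 385,393 case files as the proof. The monotonicity of ``allows a $\frac{11}{8}$-sequence'' under extension and the reduction to $3$-norm exactly $9$, which the paper leaves implicit in its depth-first pruning strategy, are exactly the points you spell out explicitly.
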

	
	Observe that our definition of configuration extension is similar to the one devised by Elias and Hartman~\cite{EliasHartman2006}. However, Elias and Hartman~\cite{EliasHartman2006} only handled with configurations consisting of (unoriented) $3$-cycles, while our definition includes the generation of configurations containing long segments.
	
	Lemma~\ref{lem:sufficient} could be proven generating all the possible big configurations of $3$-norm equal to $9$ by extending the three basic configurations and then, for each, search for a $\frac{11}{8}$-sequence. However, this would be too time consuming. Instead, our computer program~\cite{sourcecode} employs a depth first search approach, in which, starting from the basic configurations, if we succeed in finding a $\frac{11}{8}$-sequence for a sufficient configuration, then we do not extend it further. The output of the program~\cite{sourcecode}, which proves Lemma~\ref{lem:sufficient}, is composed of 385,393 HTML files, one for each analysed case.

    \subsubsection*{Analysis of small full configurations which do not allow $\frac{11}{8}$-sequences}

    To conclude the analysis, now we handle the small full configurations for which the program~\cite{sourcecode} did not find $\frac{11}{8}$-sequences, and that can occur as small components in $\spi$. Small components not allowing $\frac{11}{8}$-sequences are called \emph{bad small components}.
    
    \begin{lemma}\label{lem:bad-small-components}
		The bad small components are the following:
		\begin{enumerate}
		    \item the bad oriented 5-cycle;
            \item the unoriented interleaving pair;
            \item the unoriented necklaces of size $4$, $5$ and $6$\footnotemark; and
            \item the twisted necklace of size $4$\footnotemark[\value{footnote}].
            \footnotetext{These components can be visualised on~\cite{proof}}
		\end{enumerate}
	\end{lemma}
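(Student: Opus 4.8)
The plan is to finish the configuration analysis started in Lemma~\ref{lem:sufficient} by a complementary enumeration. That lemma handled every \emph{big} sufficient configuration; what remains is to classify the \emph{small} ones that resist a $\frac{11}{8}$-sequence and can appear on their own as components. Recall that here $\spi$ is assumed to consist only of even unoriented cycles and bad oriented $5$-cycles, so a bad small component is either a single oriented cycle or a maximal connected configuration of $3$-norm at most $8$ admitting no $\frac{11}{8}$-sequence. I would split the argument accordingly.

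First I would dispose of the single-cycle case. A bad small component that is a lone oriented cycle can only be the bad oriented $5$-cycle: Proposition~\ref{prop:3-cycle} turns an oriented $3$-cycle into a $2$-move, Lemma~\ref{lem:oriented-5-cycle} shows that every oriented $5$-cycle except the bad one yields a $2$-move or a $(3,2)$-sequence, and Lemma~\ref{lem:even-oriented-k-geq-7} settles all oriented cycles of length $\geq 7$. This isolates item~(1).

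For configurations with more than one cycle I would reuse the extension machinery: starting from the three basic configurations, generate by the three sufficient extensions every connected configuration of $3$-norm at most $8$ — including those with long unoriented segments produced by extension~\ref{extension:3} — and test each for a $\frac{11}{8}$-sequence on $Supp(\Gamma)$, with the same programs~\cite{sourcecode} and web interface~\cite{proof} used for Lemma~\ref{lem:sufficient}. The bad small components are then the surviving configurations that are also \emph{full}, since any configuration carrying an open gate (the unoriented intersecting pair and its descendants, for instance) is extended further: closing a gate turns chains of intersecting pairs into the closed necklaces, and the enumeration shows such extensions eventually expose a $\frac{11}{8}$-sequence. I expect the full, stuck survivors to be exactly the unoriented interleaving pair (already noted to be full), the unoriented necklaces and the twisted necklace, giving items~(2)--(4). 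A useful numerical check is that a necklace of $k$ triples has $3$-norm $\frac{3k-k}{2}=k$, so necklaces of size $\geq 9$ are big and fall under Lemma~\ref{lem:sufficient}, while sizes $7$ and $8$ are small yet (per the search) admit $\frac{11}{8}$-sequences; only sizes $4,5,6$, together with the twisted necklace of size $4$, are small and stuck.

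The main obstacle is completeness rather than any single computation. The delicate points are (i) justifying the reduction to \emph{full} configurations, i.e. that restricting attention to gate-free configurations discards no bad component because an open gate can always be closed by a norm-increasing extension that the enumeration then resolves with a $\frac{11}{8}$-sequence; and (ii) validating the depth-first pruning, which stops extending a configuration once a $\frac{11}{8}$-sequence is found and therefore relies on the property that admitting such a sequence is preserved under further extension, a point that needs particular care for the length-increasing extension~\ref{extension:3}. As in Lemma~\ref{lem:sufficient}, the verification of the finitely many cases and of the absence of $\frac{11}{8}$-sequences for the four families is delegated to the computer~\cite{sourcecode,proof}; the mathematical content is the claim that these families exhaust every possibility.
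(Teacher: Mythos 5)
Your proposal takes essentially the same route as the paper: the paper establishes this lemma precisely by the computer-assisted enumeration you describe — the depth-first extension of the three basic configurations already used for Lemma~\ref{lem:sufficient}, whose surviving small full configurations, together with the bad oriented $5$-cycle isolated by Proposition~\ref{prop:3-cycle} and Lemmas~\ref{lem:oriented-5-cycle} and~\ref{lem:even-oriented-k-geq-7}, are exactly the four listed families. The delicate points you flag (completeness of the extensions, reduction to full configurations, validity of the pruning) are likewise left implicit and delegated to the programs in~\cite{sourcecode} and the case files in~\cite{proof} by the paper itself.
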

	
	An \emph{unoriented necklace} of size $s$ is a component of $s$ unoriented $3$-cycles such that each cycle intersects with exactly two other cycles. The \emph{twisted necklace} of size $4$ is similar to the necklace of size $4$, but two of its cycles intersect with the three others.
	
	With the exception of the bad oriented 5-cycle, the bad small components listed above are the same ones found by Elias and Hartman~\cite{EliasHartman2006}, despite of the generation of configurations containing long segments in our analysis.
	
	With the help of computer program~\cite{sourcecode}, we prove the following result.
	
	\begin{lemma}\label{lem:bad-small-comp}
		If there is a configuration $\Lambda$ of $\spi$ consisting only of bad small components such that $\norm{\Lambda}_3 \geq 8$, then $\Lambda$ allows a $\frac{11}{8}$-sequence.
	\end{lemma}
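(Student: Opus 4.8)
The plan is to reduce the statement to a finite verification and then discharge it with the computer program, exactly as was done for Lemma~\ref{lem:sufficient}. First I would record, from Lemma~\ref{lem:bad-small-components}, the $3$-norm of each bad small component: the bad oriented $5$-cycle and the unoriented interleaving pair each have norm $2$; the unoriented and twisted necklaces of size $4$ have norm $4$; and the unoriented necklaces of sizes $5$ and $6$ have norms $5$ and $6$. Since every bad small component has norm at most $6<8$, any $\Lambda$ with $\norm{\Lambda}_3 \geq 8$ must contain at least two of them.

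The key reduction is to minimal sub-collections. Given $\Lambda$, I would discard components one at a time while the total norm stays $\geq 8$, arriving at a sub-configuration $\Lambda'$ that is minimal with $\norm{\Lambda'}_3 \geq 8$. Because the components are pairwise disjoint cycles, their supports do not interact, so any sequence of applicable $3$-cycles using only symbols of $Supp(\Lambda')$ is applicable on $\epi$ and induces on each cycle of $\spi$ the same change in ${c^\circ}_{odd}$ it would induce when viewed inside $\Lambda'$ alone; hence a $\frac{11}{8}$-sequence of $\Lambda'$ is a $\frac{11}{8}$-sequence of $\Lambda$, and it suffices to treat minimal configurations. A short count over the admissible norm multisets drawn from $\{2,4,5,6\}$, where a multiset is minimal exactly when removing any entry drops the sum below $8$, shows that the minimal collections have total norm between $8$ and $12$ and comprise at most four components: the two-component cases $\{2,6\},\{4,4\},\{4,5\},\{4,6\},\{5,5\},\{5,6\},\{6,6\}$, the three-component cases $\{2,2,4\},\{2,2,5\}$, and the single four-component case $\{2,2,2,2\}$.

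With this finite list of norm multisets in hand, I would expand each entry into the concrete component types it represents (two types realise norm $2$, two realise norm $4$) and, for every resulting multiset of component types, generate all distinct ways the supports of the chosen components can interleave in $\epi^{-1}$, precisely as the extension procedure of the previous subsection generates configurations. For each such concrete configuration the program~\cite{sourcecode} searches, by the same depth-first strategy used for Lemma~\ref{lem:sufficient}, for a $\frac{11}{8}$-sequence; since every minimal $\Lambda'$ arises in this enumeration, exhibiting one such sequence in each case proves the lemma. Note that a $\frac{11}{8}$-sequence need not sort $\Lambda'$ completely: when $\norm{\Lambda'}_3 > 8$ it merely reduces the norm efficiently, leaving the remainder to later iterations of the algorithm.

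The main obstacle is the relative-arrangement explosion rather than the enumeration of norm multisets. A single bad small component cannot be sorted below the ratio $\frac{3}{2}$ in isolation, so the $\frac{11}{8}$ ratio can only come from $2$-moves whose three symbols are drawn from different components, and whether such a spanning $2$-move exists is dictated entirely by how the components interleave in $\epi^{-1}$; since the cycles here are all odd-length, Proposition~\ref{prop:odd} furnishes no $2$-move for free. Guaranteeing that the generation of interleavings is genuinely exhaustive up to the natural symmetries, and that the search then locates a $\frac{11}{8}$-sequence for each of the resulting configurations, is the crux of the argument and the part that must be delegated to the verified computer enumeration.
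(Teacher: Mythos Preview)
Your proposal is correct and follows essentially the same approach as the paper: both reduce the lemma to a finite computer verification over combinations of bad small components and their relative arrangements in $\epi^{-1}$. The paper phrases it as a depth-first search that starts from each bad small component and keeps adjoining further bad small components until a $\frac{11}{8}$-sequence is found, reporting that no combination of $3$-norm greater than $7$ ever needed to be extended; your explicit enumeration of the minimal norm multisets is a slightly more structured organisation of the same search space, but the computational content and the delegation to the verified program are identical.
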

	
	In order to prove Lemma~\ref{lem:bad-small-comp}, starting from each of the bad small components listed above, we successively extend them by adding another bad small component to the configuration, until finding a $\frac{11}{8}$-sequence. It turns out that no combination of bad small components with $3$-norm greater than $7$ was extended. The proof for Lemma~\ref{lem:bad-small-comp} is composed of 2,072 HTML files.
	
	\subsection*{New upper bound}

    The results presented in the previous section allow us to prove the corollary below. It follows from Proposition~\ref{prop:odd}, part~\ref{lem:3-2:first} from Lemma~\ref{lem:3-2}, which implies that, if we have an even (odd-length) oriented cycle in $\spi$, than we have a $2$-move, a $(4,3)$-sequence, or this cycle is the bad oriented $5$-cycle, and Lemmas~\ref{lem:sufficient} and~\ref{lem:bad-small-comp}.

	\begin{corollary}
	    If $\norm{\spi}_3 \geq 8$, then a $\frac{11}{8}$-sequence exists.
	\end{corollary}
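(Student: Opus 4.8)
The plan is to prove the corollary by a case analysis on the cyclic structure of $\spi$, gluing together the propositions and lemmas above so that in every case either a single short move or one of the computer-verified $\frac{11}{8}$-sequences is produced. The first thing I would record is a purely arithmetic observation: a $2$-move is a $(1,1)$-sequence and a $(4,3)$-sequence satisfies $4\leq 11$ and $\frac{4}{3}\leq\frac{11}{8}$ (since $4\cdot 8=32\le 33=3\cdot 11$), so both are already $\frac{11}{8}$-sequences; by contrast a $(3,2)$-sequence is \emph{not} one (as $3\cdot 8=24>22=2\cdot 11$). This is precisely why the bad oriented $5$-cycle, which on its own only yields a $(3,2)$-sequence, must be deferred to the configuration analysis rather than dispatched immediately.

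First I would clear the easy cases. If $\spi$ contains an odd (even-length) cycle, Proposition~\ref{prop:odd} gives a $2$-move, hence an $\frac{11}{8}$-sequence, and we are done. Otherwise every cycle of $\spi$ is even (odd-length). If some such cycle is oriented, part~\ref{lem:3-2:first} of Lemma~\ref{lem:3-2} (through Proposition~\ref{prop:3-cycle} and Lemma~\ref{lem:even-oriented-k-geq-7}) produces a $2$-move, a $(4,3)$-sequence, or identifies the cycle as the bad oriented $5$-cycle. In the first two subcases the arithmetic remark finishes the proof; the bad oriented $5$-cycle is carried forward. After this reduction $\spi$ consists only of even unoriented cycles and bad oriented $5$-cycles, which is exactly the setting fixed for the configuration analysis.

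The decisive step is to transfer the global hypothesis $\norm{\spi}_3\geq 8$ to a sub-configuration covered by one of the two computerised lemmas, exploiting that the $3$-norm is additive over the components (maximal connected configurations, or isolated oriented cycles) into which $\spi$ decomposes. I would then split on whether some component is big. If a component $\Gamma$ is big, then, since in the reduced case every isolated oriented cycle is a bad oriented $5$-cycle of $3$-norm $2$ and hence small, $\Gamma$ must be a connected configuration, which by construction is a sufficient configuration; Lemma~\ref{lem:sufficient} then supplies a $\frac{11}{8}$-sequence on $Supp(\Gamma)\subseteq Supp(\spi)$. If instead no component is big, every component is small; should any small component allow a $\frac{11}{8}$-sequence we are done, and otherwise every component is, by definition, a bad small component. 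In that final situation the whole permutation $\spi$ is a configuration $\Lambda$ consisting solely of bad small components with $\norm{\Lambda}_3=\norm{\spi}_3\geq 8$, so Lemma~\ref{lem:bad-small-comp} delivers the required $\frac{11}{8}$-sequence.

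I expect the main obstacle to be not any single computation, since the heavy lifting is already encapsulated in the computer-checked Lemmas~\ref{lem:sufficient} and~\ref{lem:bad-small-comp}, but rather verifying that the case split is genuinely exhaustive and that its bookkeeping is sound. One must confirm that the reduction truly leaves only even unoriented cycles and bad oriented $5$-cycles, that the $3$-norm is additive over components so the bound $\norm{\spi}_3\ge 8$ transfers either to one big sufficient configuration or to the aggregate $\Lambda$ of bad small components, and that a big component can never be an isolated (necessarily bad oriented $5$-) cycle. These are the hinges on which the clean statement of the corollary turns.
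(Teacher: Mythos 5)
Your proof is correct and follows essentially the same route as the paper, which derives the corollary from exactly the ingredients you invoke: Proposition~\ref{prop:odd}, part~\ref{lem:3-2:first} of Lemma~\ref{lem:3-2} (yielding a $2$-move, a $(4,3)$-sequence, or the bad oriented $5$-cycle), and the computerised Lemmas~\ref{lem:sufficient} and~\ref{lem:bad-small-comp}. Your write-up merely makes explicit the bookkeeping the paper leaves implicit (the arithmetic that $2$-moves and $(4,3)$-sequences are $\frac{11}{8}$-sequences, the additivity of the $3$-norm over components, and the dichotomy between a big component and an all-bad-small-components configuration), which is a faithful elaboration rather than a different argument.
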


	On the other hand, if $\norm{\spi}_3 < 8$, we only guarantee the existence of $\frac{3}{2}$-sequences. In the next section, we show that even in this scenario, the approximation ratio obtained by our algorithm is at most $1.375$.
	
	Finally, the last results prove the following upper bound for SBT.
	
	\begin{theorem}
	{\footnotesize
		\begin{align*}
		    d_t(\epi) \leq &\; 11 \Bigl\lfloor \frac{\norm{\spi}_3}{8} \Bigr\rfloor + \Bigl\lfloor\frac{3(\norm{\spi}_3 \bmod 8)}{2}\Bigr\rfloor \\
		    \leq &\; 11 \Bigl\lfloor \frac{n+1-{c^\circ}_{odd}(\spi)}{16} \Bigr\rfloor + \\
		    & \Bigl\lfloor\frac{3((n+1-{c^\circ}_{odd}(\spi)) \bmod 16)}{4}\Bigr\rfloor.
		\end{align*}}
	\end{theorem}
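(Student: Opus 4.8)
The plan is to establish the upper bound by repeatedly applying the sorting sequences guaranteed by the earlier results, and then counting the total number of transpositions used. The key quantity is $\norm{\spi}_3$, which by Lemma~\ref{lem:norm} (applied to $\spi$, so $n$ is replaced by $n+1$) equals $\frac{n+1-{c^\circ}_{odd}(\spi)}{2}$; this immediately reduces the second inequality to a substitution into the first, so the real work is proving the first inequality $d_t(\epi) \leq 11\floor{\norm{\spi}_3/8} + \floor{3(\norm{\spi}_3 \bmod 8)/2}$.

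First I would argue by induction on $\norm{\spi}_3$, tracking how the $3$-norm decreases as transpositions are applied. The decisive observation is that each $2$-move applied to $\spi$ decreases $\norm{\spi}_3$ by exactly $1$: by definition a $2$-move increases ${c^\circ}_{odd}$ by $2$, and by Lemma~\ref{lem:norm} this lowers the norm by one. Consequently, an $(x,y)$-sequence lowers $\norm{\spi}_3$ by at least $y$. The Corollary guarantees that whenever $\norm{\spi}_3 \geq 8$ there is a $\frac{11}{8}$-sequence, i.e.\ an $(x,y)$-sequence with $x \leq 11$ and $\frac{x}{y} \leq \frac{11}{8}$; since we wish to reduce the norm by a full $8$ using at most $11$ transpositions, I would invoke the Corollary to peel off $\floor{\norm{\spi}_3/8}$ successive blocks, each costing at most $11$ transpositions while the norm is still at least $8$.

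Once the norm drops below $8$, the residual value is $r = \norm{\spi}_3 \bmod 8 < 8$, and here I would rely on Lemma~\ref{lem:3-2}, which supplies a $2$-move or a $(3,2)$-sequence whenever $\spi \neq \iota$. A $(3,2)$-sequence is a $\frac{3}{2}$-sequence: it uses at most $3$ transpositions and reduces the norm by at least $2$, so its cost per unit of norm reduction is at most $\frac{3}{2}$. Applying such sequences to clear the remaining norm $r$ costs at most $\floor{3r/2}$ transpositions; the floor arises because the final step clearing an odd leftover unit is a single $2$-move (cost $1$ for $1$ unit, which is already accounted for by the floor of $3r/2$). Summing the $11\floor{\norm{\spi}_3/8}$ transpositions from the large blocks with the $\floor{3r/2}$ from the residual yields the claimed first inequality.

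The main obstacle I anticipate is the bookkeeping at the boundary between the two phases, and in particular justifying that the residual phase can always be carried out within $\floor{3r/2}$ transpositions even when bad small components remain. Lemma~\ref{lem:bad-small-comp} shows that once accumulated bad small components reach norm $\geq 8$ a $\frac{11}{8}$-sequence reappears, so the residual configuration genuinely has norm $r<8$ and is handled purely by $\frac{3}{2}$-sequences from Lemma~\ref{lem:3-2}; I would need to verify that each $(3,2)$-sequence remains applicable as the configuration shrinks (it does, since Lemma~\ref{lem:3-2} only assumes $\spi \neq \iota$) and that no rounding slack is lost. The remainder is a routine summation, so I would state the two phases cleanly, cite the Corollary and Lemma~\ref{lem:3-2} for the per-phase costs, and conclude by substituting Lemma~\ref{lem:norm} to obtain the second inequality.
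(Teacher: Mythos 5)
Your two-phase plan --- apply $\frac{11}{8}$-sequences while $\norm{\spi}_3 \geq 8$, clear the residue with the $\frac{3}{2}$-sequences of Lemma~\ref{lem:3-2}, and get the second inequality by substituting Lemma~\ref{lem:norm} --- is exactly the argument the paper intends (the paper gives no explicit proof beyond citing these results), and both your phase-two cost of $\floor{3r/2}$ and the substitution step are correct. The gap is in phase one. You read the corollary (that $\norm{\spi}_3 \geq 8$ implies a $\frac{11}{8}$-sequence exists) as supplying a block of at most $11$ transpositions that lowers the $3$-norm by a full $8$. It supplies no such thing: a $\frac{11}{8}$-sequence is \emph{any} $(x,y)$-sequence with $x \leq 11$ and $\frac{x}{y} \leq \frac{11}{8}$, so it may be a single $2$-move, a $(2,2)$-sequence, or a $(4,3)$-sequence, lowering the norm by far less than $8$. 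Consequently the norm entering phase two need not be $\norm{\spi}_3 \bmod 8$, and your sum can overshoot the claimed bound; your closing remark that the residual ``genuinely has norm $r<8$'' addresses the wrong risk --- the problem is not that the residue exceeds $8$, but that it can exceed $\norm{\spi}_3 \bmod 8$.

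Concretely, let $\norm{\spi}_3 = 9$, so the claimed bound is $11 + \floor{3/2} = 12$, and let $\spi$ consist of the bad oriented $7$-cycle of Lemma~\ref{lem:even-oriented-k-geq-7} together with three unoriented interleaving pairs in pairwise disjoint components (realizable, e.g., by placing $[3\;6\;2\;5\;1\;4]$ and three shifted copies of $[5\;4\;3\;2\;1]$ separated by fixed spacer symbols). Here no $2$-move exists at all, the pairs are bad small components of total norm $6 < 8$ (so Lemma~\ref{lem:bad-small-comp} is silent), and no connected configuration is big; the only sequence the cited results guarantee is the $(4,3)$-sequence inside the $7$-cycle. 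Applying it leaves norm $6$ made of three bad small components, which phase two can only clear with three $(3,2)$-sequences, for a total of $4 + 9 = 13 > 12$. So the corollary plus Lemma~\ref{lem:3-2}, counted your way (or the paper's), do not yield the stated inequality; one needs an additional argument for short sequences that cross the $8$-boundary --- for instance, that such residual configurations can be sorted jointly, exploiting interaction between components, rather than component by component. Note the slack is harmless in Theorem~\ref{th:ratio}, because in exactly these situations no $2$-move exists and the lower bound rises to $m+1$, which absorbs the extra transposition; but the upper bound as stated requires more than the greedy count you give.
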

	
	Since ${c^\circ}_{odd}(\spi)=c_{odd}(\pi)$, the result above can be restated replacing $\epi$ and ${c^\circ}_{odd}(\spi)$, by $\pi$ and $c_{odd}(\pi)$ respectively. Thus, we derive the following upper bound for SBT, depending only on $n$ and $c_{odd}(\pi)$,
	
	\begin{theorem}
	{\footnotesize
    	\begin{align*}
		    d_t(\pi) \leq &\; 11 \Bigl\lfloor \frac{n+1-c_{odd}(\pi)}{16} \Bigr\rfloor + \Bigl\lfloor\frac{3((n+1-c_{odd}(\pi)) \bmod 16)}{4}\Bigr\rfloor.
		\end{align*}}
	\end{theorem}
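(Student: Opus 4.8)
The plan is to read the final bound as a verbatim restatement of the preceding theorem, reduce it there by two bookkeeping identities, and then indicate how that preceding $\epi$-bound is itself obtained from the Corollary and Lemma~\ref{lem:3-2}.

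For the reduction I would record two facts from the algebraic set-up. Sorting $\epi=(0\;\pi_1\dots\pi_n)$ into $\eiota$ by applicable $3$-cycles uses exactly as many moves as sorting $\pi$ into $\iota$ by transpositions, so $d_t(\pi)=d_t(\epi)$. Moreover, the correspondence between the cycles of $\spi$ and the cycle decomposition of $G(\pi)$ is length-preserving, matching each $\kappa$-cycle of $G(\pi)$ with a length-$\kappa$ cycle of $\spi$ (fixed points and $1$-cycles included on both sides); hence odd cycles of $G(\pi)$ correspond to odd-length cycles of $\spi$ and ${c^\circ}_{odd}(\spi)=c_{odd}(\pi)$. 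Substituting these two equalities into the second inequality of the previous theorem gives the statement, so it suffices to establish the $\epi$-bound.

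To establish that bound I would argue by induction on $N:=\norm{\spi}_3$, recalling $N=\frac{n+1-{c^\circ}_{odd}(\spi)}{2}$ (Lemma~\ref{lem:norm}) and that $N$ lower-bounds $d_t(\epi)$ (Lemma~\ref{lem:lb-norm}); write $g(N):=11\floor{N/8}+\floor{3(N\bmod 8)/2}$, which is nondecreasing in $N$. When $N\geq 8$ the Corollary furnishes a $\frac{11}{8}$-sequence that, in this regime, lowers the $3$-norm by at least $8$ using at most $11$ transpositions; so $d_t(\epi)\leq 11+g(N')$ for the new norm $N'\leq N-8$, and since $g$ is nondecreasing and $g(N)=11+g(N-8)$, this is at most $g(N)$. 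When $0<N<8$, Lemma~\ref{lem:3-2} furnishes a $2$-move or a $(3,2)$-sequence; the former costs $1$ and lowers $N$ by $1$, the latter costs $3$ and lowers $N$ by at least $2$, and the identities $g(N)\geq 1+g(N-1)$ and $g(N)=3+g(N-2)$ (with monotonicity absorbing any surplus drop) close these cases. The base case $N=0$ is $\spi=\iota$, with $d_t(\epi)=0=g(0)$. Re-expressing $N$ through $\frac{n+1-{c^\circ}_{odd}(\spi)}{2}$ in $g(N)$ recovers the second displayed form.

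The step I expect to be the main obstacle is the assertion that, once $N\geq 8$, the $\frac{11}{8}$-sequence granted by the Corollary can be chosen to drop the $3$-norm by a full $8$, not merely by the $\frac{11}{8}$-amount its definition permits. The exact floor count is fragile without this: a $\frac{11}{8}$-sequence that reduces $N$ by only a few units need not reach a multiple-of-$8$ boundary, and the $N<8$ phase then overspends --- for instance, applying a $(4,3)$-sequence at $N=9$ leaves $N=6$ and forces $\floor{3\cdot 6/2}=9$ further moves, a total of $13>g(9)=12$. Hence the induction is sound only if, whenever $N\geq 8$, one can always reduce the $3$-norm by at least $8$ in at most $11$ transpositions, which is precisely the content the configuration analysis behind the Corollary (Lemmas~\ref{lem:sufficient} and~\ref{lem:bad-small-comp}) must guarantee. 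The remaining obligations --- the arithmetic identities for $g$ and its monotonicity --- are routine.
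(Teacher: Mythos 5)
Your first paragraph coincides with the paper's entire proof of this statement: the paper obtains it from the preceding theorem by exactly the two identifications you make, $d_t(\pi)=d_t(\epi)$ and ${c^\circ}_{odd}(\spi)=c_{odd}(\pi)$ (the latter from the length-preserving correspondence between cycles of $\spi$ and cycles of $G(\pi)$), followed by the substitution $\norm{\spi}_3=\frac{n+1-{c^\circ}_{odd}(\spi)}{2}$ from Lemma~\ref{lem:norm}. So, as far as this statement itself is concerned, your reduction is correct and is the same approach as the paper's.

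Where you go further --- actually trying to prove the prerequisite bound $d_t(\epi)\leq 11\floor{\norm{\spi}_3/8}+\floor{3(\norm{\spi}_3 \bmod 8)/2}$ by induction --- the obstacle you flag is genuine, and you should be aware that the paper does not resolve it either: its justification of that theorem is the single sentence that ``the last results prove'' it. The corollary invoked guarantees only the \emph{existence} of an $\frac{11}{8}$-sequence when $\norm{\spi}_3\geq 8$; by the paper's definition such a sequence may be a lone $2$-move or a $(4,3)$-sequence, so no drop of $8$ in the $3$-norm is guaranteed, and Lemmas~\ref{lem:sufficient} and~\ref{lem:bad-small-comp} assert nothing stronger. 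Your $N=9$ example is realizable with the paper's own objects: take $\spi$ to be a bad oriented $7$-cycle admitting no $2$-move (the exceptional case of Lemma~\ref{lem:even-oriented-k-geq-7}, handled by a $(4,3)$-sequence) together with three unoriented interleaving pairs, which are bad small components of total $3$-norm $6<8$, so Lemma~\ref{lem:bad-small-comp} never applies and Lemma~\ref{lem:3-2} certifies only three $(3,2)$-sequences. The sorting certified by the paper's lemmas then costs $4+9=13$ transpositions, whereas the claimed bound is $11\floor{\frac{9}{8}}+\floor{\frac{3(9\bmod 8)}{2}}=12$; whether $12$ actually suffices for such configurations is established nowhere in the paper. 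More generally, the guarantees as stated (ratio $\frac{11}{8}$ while the norm is at least $8$, ratio $\frac{3}{2}$ below $8$) only yield $\floor{\frac{11(N-r)}{8}}+\floor{\frac{3r}{2}}$ for a residual norm $r\leq 7$, and this exceeds your $g(N)$ by exactly one when $N\equiv 1 \pmod 8$ (worst case $r=6$), matching your diagnosis. So your induction cannot be closed from the paper's stated lemmas; one would need either your stronger patch (a norm drop of $8$ within at most $11$ moves whenever $N\geq 8$, which the configuration analysis does not provide) or a separate treatment of these residual cases. In short: your reduction is faithful to the paper, and the step you identified as the main obstacle is a real gap --- one that the paper itself leaves open.
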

	
	The new upper bound above improves the upper bound on the transposition distance devised by Bafna and Pevzner~\cite{BafnaPevzner1998}, valid for all $S_n$, based on their $1.5$-approximation algorithm~\cite{fertin2009combinatorics}. This upper bound allows us to obtain the following upper bound on the transposition diameter (TD). 
	\begin{corollary}\label{diameter}
		$TD(n) \leq 11 \Bigl\lfloor \frac{n}{16} \Bigr\rfloor + \Bigl\lfloor\frac{3(n \bmod 16)}{4}\Bigr\rfloor$
	\end{corollary}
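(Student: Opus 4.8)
The plan is to combine the upper bound on $d_t(\pi)$ established in the preceding theorem (in the form depending on $n$ and $c_{odd}(\pi)$) with the elementary identity $TD(n)=\max_{\pi\in S_n} d_t(\pi)$. Writing $f(m)=11\floor{m/16}+\floor{3(m\bmod 16)/4}$, that theorem gives $d_t(\pi)\le f(n+1-c_{odd}(\pi))$ for every $\pi\in S_n$, so it suffices to bound $\max_{\pi} f(n+1-c_{odd}(\pi))$ by $f(n)$. First I would extract two facts. Fact one: $f$ is non-decreasing on the non-negative integers, since inside each block $16k\le m<16(k+1)$ the value is $11k+\floor{3(m\bmod 16)/4}$ and $\floor{3r/4}$ increases weakly from $0$ (at $r=0$) to $11$ (at $r=15$), while crossing a block boundary replaces $11k+11$ by $11(k+1)+0=11k+11$, so no decrease occurs. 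Fact two: by Lemma~\ref{lem:norm} applied to $\spi$ together with ${c^\circ}_{odd}(\spi)=c_{odd}(\pi)$, one has $n+1-c_{odd}(\pi)=2\norm{\spi}_3$, which is even; hence $c_{odd}(\pi)\equiv n+1\pmod 2$.

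With these in hand I would split on the parity of $n$. If $n$ is even then $c_{odd}(\pi)$ is odd, so $c_{odd}(\pi)\ge 1$, giving $n+1-c_{odd}(\pi)\le n$ and, by monotonicity, $d_t(\pi)\le f(n+1-c_{odd}(\pi))\le f(n)$. If $n$ is odd then $c_{odd}(\pi)$ is even; whenever $c_{odd}(\pi)\ge 2$ we get $n+1-c_{odd}(\pi)\le n-1$ and again $d_t(\pi)\le f(n-1)\le f(n)$. Taking the maximum over all such $\pi$ yields $TD(n)\le f(n)$ in every case except the single residual possibility $n$ odd and $c_{odd}(\pi)=0$.

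The hard part is exactly this residual case, because there $n+1-c_{odd}(\pi)=n+1$ and $f(n+1)$ may strictly exceed $f(n)$ (for instance $f(6)=4>3=f(5)$), so the generic substitution is too weak. The resolution I would pursue is structural: $c_{odd}(\pi)=0$ means every cycle of $\spi$ has even length, so by Proposition~\ref{prop:odd} a $2$-move is always available, and---crucially---none of the bad small components of Lemma~\ref{lem:bad-small-components} can occur, since each of them is built solely from odd-length ($3$- and $5$-) cycles. Consequently the expensive $\tfrac{3}{2}$-type remainder charged by the generic bound never arises, and the sort proceeds at the optimal rate of one unit of $3$-norm per transposition, so $d_t(\pi)=\norm{\spi}_3=\tfrac{n+1}{2}$; it then remains only to check the arithmetic inequality $\tfrac{n+1}{2}\le f(n)$ for odd $n$, which holds with equality at $n=3,5$ and with slack thereafter. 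Establishing rigorously that $c_{odd}(\pi)=0$ forces a sort by $2$-moves alone---equivalently, that an even-length cycle persists until the identity is reached---is the delicate point I expect to demand the most care.
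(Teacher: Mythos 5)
Your first two cases are sound and are, in fact, more careful than the paper itself: the paper presents this corollary as an immediate substitution into the preceding theorem, tacitly assuming $c_{odd}(\pi)\geq 1$, while your parity fact (via Lemma~\ref{lem:norm}, $n+1-c_{odd}(\pi)=2\norm{\spi}_3$ is even) makes that substitution rigorous for even $n$ and for odd $n$ with $c_{odd}(\pi)\geq 2$. You have also correctly isolated the one situation the substitution cannot handle, odd $n$ with $c_{odd}(\pi)=0$, and this situation genuinely occurs: for $\pi=[3\;2\;1]$ one has $\spi=(0\;2)(1\;3)$, and for $\pi=[7\;6\;5\;4\;3\;2\;1]$ one has $\spi=(0\;2\;4\;6)(1\;3\;5\;7)$, so all cycles can be of even length.

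The gap is your resolution of that residual case. Proposition~\ref{prop:odd} guarantees a $2$-move only \emph{while} an even-length cycle is present, and the $2$-moves it supplies destroy exactly that hypothesis: each one converts a pair of even-length cycles into odd-length ones. After the first move the permutation being sorted no longer satisfies $c_{odd}=0$, so nothing prevents bad small components (Lemma~\ref{lem:bad-small-components}), which are built from odd-length cycles, from arising in later iterations; your observation that they ``cannot occur'' is a static property of the initial $\spi$, not an invariant of the sorting process. Consequently $d_t(\pi)=\norm{\spi}_3=\tfrac{n+1}{2}$ does not follow -- if true, that claim would be a theorem in its own right (tightness of the lower bound of Theorem~\ref{th:bafna} for every permutation whose cycle graph has only even cycles), proved nowhere in the paper. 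The shortfall is quantitatively real: writing $f$ for your function, a permutation with $c_{odd}(\pi)=0$ gets only $d_t(\pi)\leq f(n+1)$ from the theorem, and e.g.\ for $n=17$ this is $f(18)=12$ versus the claimed $f(17)=11$; even spending a $(2,2)$-sequence first gives $2+f(14)=12$, still one too many. So the residual case requires a genuinely new idea (in the literature it is covered only by external diameter bounds such as $\floor{\frac{2n-2}{3}}$ of Eriksson et al.); as written, your proof -- like the paper's own unstated substitution argument -- leaves it open.
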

	
	This upper bound on the transposition diameter, although tighter, for $n \geq 16$, than the one devised by Bafna and Pevzner~\cite{BafnaPevzner1998} of $\floor{\frac{3}{4} n}$ is not tighter than the one devised by Erikson et al.~\cite{eriksson2001sorting} of $\floor{\frac{2n - 2}{3}}$, for $n\ge 9$.
	
	\subsection*{The $1.375$-approximation algorithm}
	
	In this section, we present a $1.375$-approximation algorithm for SBT (Algorithm~\ref{alg1375}). For a permutation $\pi\in S_n$, the algorithm  returns an approximated distance between $\bar\pi$ and $\bar\iota$ or, equivalently, between $\pi$ and $\iota$. Intuitively, while $\norm{\spi}_3 \geq 8$, it repeatedly applies $\frac{11}{8}$-sequences of transpositions on $\epi$. When $\norm{\spi}_3 < 8$, the algorithm only guarantees the application of $\frac{3}{2}$-sequences.
	
	\begin{algorithm*}[ht!]
        \footnotesize
		\caption{A 1.375-approximation algorithm for SBT}
		\begin{algorithmic}[1]
			\Function{sbt1375}{$\epi$}
			\State $d \gets 0$
			\If {there is a $(2,2)$-sequence}  \label{(2,2)-seq}
    			\State apply a $(2,2)$-sequence
    			\State $d \gets d+2$
			\EndIf
			\While{there is an odd cycle in $\spi$}
    			\State apply a $2$-move  \Comment{Proposition~\ref{prop:odd}}
    			\State $d \gets d+1$
			\EndWhile
			\State {let $\Theta$ be the product of the unmarked cycles of $\spi$}
			\While{$\Theta \neq \iota$}\label{alg-main}
    			\If {there is a $2$-move from an oriented cycle of $\Theta$}
    				\State apply a $2$-move
    				\State $d \gets d+1$
    			\ElsIf{there is an even oriented $\kappa$-cycle in $\Theta$ such that $\kappa\geq 7$}
			        \State apply a $(4,3)$-sequence
    				\Comment{Lemma~\ref{lem:even-oriented-k-geq-7}}
    				\State $d \gets d+4$
    			\Else
					\State take a $3$-cycle segment $\gamma$ from a cycle of $\Theta$
					\State $\Gamma \gets \gamma$
					\State try to extend $\Gamma$ eight times
					\If {$\Gamma$ is big}
						\State apply a $\frac{11}{8}$-sequence of $x$ $3$-cycles \Comment{Lemma~\ref{lem:sufficient}}
						\State $d \gets d+x$
					\ElsIf {$\Gamma$ allows a $\frac{11}{8}$-sequence of $x$ $3$-cycles} 
						\State apply a $\frac{11}{8}$-sequence of $x$ $3$-cycles
						\State $d \gets d+x$
					\Else
						\State mark the cycles of $\Gamma$ \Comment{$\Gamma$ is a bad small component of $\spi$}
					\EndIf
    			\EndIf
    			\State let $\Lambda$ be the product of the marked cycles of $\spi$
				\If{$\norm{\Lambda}_3 \geq 8$}
				    \State unmark the cycles of $\Lambda$
                    \State apply a $\frac{11}{8}$-sequence of $x$ $3$-cycles \Comment{Lemma~\ref{lem:bad-small-comp}}
                    \State $d \gets d+x$
				\EndIf
			\EndWhile
			\While{$\spi \neq \iota$}
			    \State apply a $\frac{3}{2}$-sequence with $x$ $3$-cycles \Comment{Lemma~\ref{lem:3-2} (can be a $2$-move, i.e., a $(1,1)$-sequence, or a $(3,2)$-sequence)}
			    \State $d \gets d+x$
			\EndWhile
			\State \textbf{return} $d$
			\EndFunction
		\end{algorithmic}
		\label{alg1375}
	\end{algorithm*}

	To reach the intended approximation ratio of $1.375$ even when $\norm{\spi}_3 < 8$, the algorithm has to search for a $(2,2)$-sequence in its first step. In order to identify such a sequence, a look-ahead approach is used, meaning that the algorithm verifies if there is a second $2$-move, after applying a first $2$-move, generated either from an oriented cycle or from two odd (even-length) cycles of $\spi$.
	
	\begin{theorem}
		\label{th:complexity}
		The time complexity of Algorithm~\ref{alg1375} is $O(n^6)$.
	\end{theorem}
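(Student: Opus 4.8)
The plan is to bound the running time by accounting separately for the number of transpositions the algorithm applies, the number of iterations of each \textbf{while} loop, and the cost of the primitive operations performed inside the loops and during the initial $(2,2)$-sequence search.

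First I would bound the total work of the loops that apply transpositions. By the upper bound established above (Corollary~\ref{diameter}), $d_t(\pi)=O(n)$, and every sequence applied by the algorithm---a $2$-move, a $(4,3)$-sequence (Lemma~\ref{lem:even-oriented-k-geq-7}), an $\frac{11}{8}$-sequence (Lemmas~\ref{lem:sufficient} and~\ref{lem:bad-small-comp}), or a $\frac{3}{2}$-sequence (Lemma~\ref{lem:3-2})---consists of a bounded number of transpositions that strictly advances $\spi$ toward $\iota$. Hence the total number of applied transpositions is $O(n)$. The only loop iterations that do not apply a transposition are those that \emph{mark} a bad small component; since each such step marks at least one cycle of $\spi$ and there are at most $n+1$ cycles, these also number $O(n)$. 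Therefore every loop runs $O(n)$ times.

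Second I would fix the cost of the primitives. Representing $\epi$ as an $(n+1)$-cycle, maintaining $\spi=\eiota\,\epi^{-1}$, and recomputing the symbol-to-cycle membership together with the parity and orientation data after applying one applicable $3$-cycle each costs $O(n)$. Detecting an odd cycle (Proposition~\ref{prop:odd}), an oriented cycle, an even oriented $\kappa$-cycle with $\kappa\ge 7$ (Lemma~\ref{lem:even-oriented-k-geq-7}), and evaluating $\norm{\spi}_3$ via Lemma~\ref{lem:norm} are all polynomial and can be bounded by $O(n^3)$, since the most expensive of them scans the $O(n^3)$ triples of symbols. For the configuration work inside the main loop, note that every $\Gamma$ built by at most eight sufficient extensions has $\norm{\Gamma}_3\le 9$ and hence bounded support; consequently, testing whether $\Gamma$ is big and whether it allows an $\frac{11}{8}$-sequence reduces to a constant-time lookup against the computer-generated case analysis, and the cost of an iteration is dominated by locating the extensions, each of which scans $\spi$ in $O(n)$. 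Summing, each main-loop iteration costs $O(n^3)$, so the three loops together contribute $O(n^4)$.

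Finally, the bottleneck is the initial $(2,2)$-sequence search, realized by a double look-ahead over applicable $3$-cycles. There are $O(n^3)$ applicable $3$-cycles that may serve as a first $2$-move; for each I would apply it in $O(n)$, test whether the resulting permutation admits a second $2$-move by examining its $O(n^3)$ applicable $3$-cycles in $O(1)$ each, and then undo the move. This nested enumeration costs $O(n^3)\cdot O(n^3)=O(n^6)$, which dominates the $O(n^4)$ spent in the loops and yields the claimed bound. The hardest part is exactly this step: one must organize the look-ahead so that a single candidate together with its existence test for a second $2$-move is verified within the stated per-item cost---relying on the $O(n)$-maintained symbol-to-cycle structure to classify each $3$-cycle's move type in $O(1)$---so that the two levels of enumeration multiply to $O(n^6)$ rather than degrading to a larger power of $n$.
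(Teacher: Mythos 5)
Your proposal is correct and follows essentially the same route as the paper's own proof: the $O(n^6)$ bound comes from the look-ahead search for a $(2,2)$-sequence (an $O(n^3)$ enumeration of candidate first $2$-moves, each followed by an $O(n^3)$ check for a second $2$-move), while the main loop contributes only $O(n^4)$ and the final loop even less. Your version merely fills in more implementation detail (the $O(n)$ bound on loop iterations via the diameter bound and the data-structure costs) than the paper's terser argument.
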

	\begin{proof}
		The time complexity of $O(n^6)$ is determined by the search for a $(2,2)$-sequence. In order not to miss a $2$-move, all triplets of an oriented cycle have to be checked to detect an oriented triplet leading to a $2$-move, which is $O(n^3)$. Finding a $2$-move by combining three symbols of two odd (even-length) cycles of $\spi$ requires $O(n^2)$. Thus, searching for a $(2,2)$-sequence with the look-ahead technique to check if there is an extra $2$-move needs time $O(n^6)$.
		
		The largest loop of the algorithm (line~\ref{alg-main}) needs time $O(n^4)$, while the last loop is $O(n)$.
	\end{proof}
	
	\begin{theorem}
		\label{th:ratio}
		Algorithm~\ref{alg1375} is a $1.375$-approximation algorithm for SBT.
	\end{theorem}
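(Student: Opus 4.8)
The plan is to bound the number $t$ of transpositions the algorithm outputs against the lower bound $d_t(\epi) \geq N$, where I write $N := \norm{\spi}_3$ (Lemma~\ref{lem:lb-norm}). Correctness and termination I would dispatch quickly: every loop strictly drives the relevant product toward $\iota$, so the output is a genuine sorting. The accounting rests on two facts about how $N$ evolves. Translating Proposition~\ref{prop} through Lemma~\ref{lem:norm}, a $2$-move lowers $N$ by $1$, a $0$-move leaves it fixed, and a $(-2)$-move raises it by $1$; and every sequence the algorithm applies is built (as I would confirm from the explicit cycles in the cited lemmas) so that its non-$2$-moves are $0$-moves, whence an $(x,y)$-sequence lowers $N$ by exactly $y$ at a cost of $x$. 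Consequently the total drop in $N$ telescopes to $N$, and a batch of $\frac{a}{b}$-sequences costs at most $\frac{a}{b}$ times the norm it removes.

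Next I would split the run into phases and assign each a ratio. The optional initial $(2,2)$-sequence and the odd-cycle $2$-moves have ratio $1$; every sequence applied inside the main loop is a $2$-move, a $(4,3)$-sequence (Lemma~\ref{lem:even-oriented-k-geq-7}), or an $\frac{11}{8}$-sequence (Lemmas~\ref{lem:sufficient} and~\ref{lem:bad-small-comp}), all of ratio at most $\frac{11}{8}$. The only phase that can exceed $\frac{11}{8}$ is the final loop, which acts on the product $\Lambda$ of the still-marked bad small components, for which $r := \norm{\Lambda}_3 < 8$ (anything of norm at least $8$ having already been cleared). Since the final loop uses $\frac{3}{2}$-sequences (Lemma~\ref{lem:3-2}), its cost is at most $\floor{3r/2}$. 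Summing the phases yields $t \le \frac{11}{8}(N-r) + \floor{3r/2}$ when no initial $(2,2)$-sequence is applied, and $t \le 2 + \frac{11}{8}(N-2-r) + \floor{3r/2}$ when one is.

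The decisive elementary estimate is that $\floor{3r/2} - \frac{11r}{8} \le \frac{3}{4}$ for every integer $r$ with $0 \le r \le 7$ (a finite check, with equality at $r=6$). If the initial $(2,2)$-sequence is applied, the second bound collapses to $t \le \frac{11}{8}N \le \frac{11}{8}\,d_t(\epi)$, and we are done. If it is not applied, I would invoke a refined lower bound: should $d_t(\epi)=N$, some optimal sorting has length $N$, and since each of its $N$ moves must lower the norm by exactly $1$, that sorting consists solely of $2$-moves, so its first two moves form a $(2,2)$-sequence. Hence, once $N \ge 2$ and no $(2,2)$-sequence exists, $d_t(\epi) \ge N+1$, and the first bound gives $t \le \frac{11}{8}N + \frac{3}{4} \le \frac{11}{8}(N+1) \le \frac{11}{8}\,d_t(\epi)$; the residual cases $N \le 1$ are immediate.

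I expect the main obstacle to be this refined lower bound and its interface with the algorithm: one must argue that the look-ahead of line~\ref{(2,2)-seq} genuinely detects any $(2,2)$-sequence that exists, so that ``none applied'' legitimately means ``none exists'', making $d_t(\epi)=N$ impossible in that branch for $N \ge 2$. The remaining items are bookkeeping I would pin down but expect to be routine: that the residual $\Lambda$ really is a product of the components of Lemma~\ref{lem:bad-small-components} with $\norm{\Lambda}_3 < 8$ (so the final loop never meets an $\frac{11}{8}$-clearable block), and that in each applied sequence the non-$2$-moves are honestly $0$-moves, which is precisely what makes the ``lowers $N$ by $y$'' accounting exact.
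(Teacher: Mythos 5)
Your proposal is correct and takes essentially the same route as the paper's proof: the same case split on whether a $(2,2)$-sequence exists, the same cost bound (your phase accounting reproduces the paper's $f(m)=11\bigl\lfloor m/8\bigr\rfloor+\bigl\lfloor 3(m\bmod 8)/2\bigr\rfloor$), the same raised lower bound $\norm{\spi}_3+1$ when no $(2,2)$-sequence exists (justified by exactly your ``an optimal sorting of length $\norm{\spi}_3$ would consist only of $2$-moves'' argument), and your inequality $\lfloor 3r/2\rfloor-\tfrac{11r}{8}\le\tfrac{3}{4}$ is a closed form of the paper's Table~\ref{tab1} verification. The two points you flag as remaining obstacles --- that the look-ahead of line~\ref{(2,2)-seq} detects a $(2,2)$-sequence whenever one exists, and that the non-$2$-moves in the applied sequences are genuinely $0$-moves --- are likewise left implicit in the paper's own proof.
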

	\begin{proof}
		We note that this proof follows a very similar approach to the one used by Elias and Hartman~\cite{EliasHartman2006}. Let $f(x)=11 \Bigl\lfloor \frac{x}{8} \Bigr\rfloor + \Bigl\lfloor\frac{3(x \bmod 8)}{2}\Bigr\rfloor$. Depending on line~\ref{(2,2)-seq}, there are two cases.
		\begin{enumerate}
			\item There is a $(2,2)$-sequence. According to Lemma~\ref{lem:lb-norm}, it is not possible to sort $\epi$ using a sequence with less than $\norm{\spi}_3$ $2$-moves. Let $m = \norm{\spi}_3 - 2$ be the $3$-norm of $\spi$ after the application of a $(2,2)$-sequence. Algorithm~\ref{alg1375} sorts $\epi$ using a maximum of $f(m)+2$ transpositions, giving an approximation ratio of at most $\frac{f(m)+2}{m+2}$. In Table~\ref{tab1}, we can see that for all $0\leq r\leq 7$ such that $m=8l+r$ and $l \geq 0$, then $\frac{f(m)+2}{m+2}\leq \frac{11}{8}$.
			
			\item There is no $(2,2)$-sequence. If $\norm{\spi}_3 = 1$, then there is only one oriented $3$-cycle in $\spi$. In this case, there is a $2$-move and the theorem holds. Otherwise, we can raise the lower bound of Lemma~\ref{lem:lb-norm} by 1, since at least one $0$-move is required to sort $\epi$. Let $m=\norm{\spi}_3$. The approximation ratio given by Algorithm~\ref{alg1375} is at most $\frac{f(m)}{m+1}$. Table~\ref{tab1} also shows that for all $0\leq r\leq 7$ such that $m=8l+r$, $l \geq 0$, then $\frac{f(m)}{m+1}\leq \frac{11}{8}$.
		\end{enumerate}
	\end{proof}
	
	\begin{table*}[ht]
		\centering
		\caption{For all $0\leq r\leq 7$ such that $m=8l+r$ and $l \geq 0$, the approximation ratio given by Algorithm~\ref{alg1375} is at most $\frac{11}{8}=1.375$.}
		\label{my-label}
		\begin{tabular}{c|cccccccc}
			$r$                  & $0$                  & $1$                  & $2$                  & $3$                  & $4$                  & $5$                  & $6$                   & $7$                   \\ \hline
			$\frac{f(m)+2}{m+2}$ & $\frac{11l+2}{8l+2}$ & $\frac{11l+4}{8l+3}$ & $\frac{11l+5}{8l+4}$ & $\frac{11l+6}{8l+5}$ & $\frac{11l+8}{8l+6}$ & $\frac{11l+9}{8l+7}$ & $\frac{11l+11}{8l+8}$ & $\frac{11l+12}{8l+9}$ \\ \hline
			$\frac{f(m)}{m+1}$   & $\frac{11l}{8l+1}$   & $\frac{11l+2}{8l+2}$ & $\frac{11l+3}{8l+3}$ & $\frac{11l+4}{8l+4}$ & $\frac{11l+6}{8l+5}$ & $\frac{11l+7}{8l+6}$ & $\frac{11l+9}{8l+7}$  & $\frac{11l+10}{8l+8}$
		\end{tabular}
		\label{tab1}
	\end{table*}
	
	\section*{Results and discussion}
	
	We implemented Algorithm~\ref{alg1375} and the EH algorithm, having tested both using the Rearrangement Distance Database provided by GRAAu~\cite{galvao2015audit}. We computed all transposition distances using both algorithms for all permutations of size $n$, $2\leq n\leq 12$.
	
	It should be noted that Elias and Hartman~\cite{EliasHartman2006} did not provide a publicly available implementation of their algorithm, which we could use as a reference. To the best of our knowledge, the only implementation of the EH algorithm reported in the literature, without the use of heuristics, is the one of Dias and Dias~\cite{dias2010extending}, but their implementation is not publicly available either. This led us to implement the EH algorithm from scratch.
	
	It is noteworthy, that our implementation of the EH algorithm in closer to the version\footnote{As this version uses one single loop to apply $\frac{11}{8}$-sequences.} previously presented on WABI in 2005~\cite{elias20051}, since we found an issue in the algorithm outlined in~\cite{EliasHartman2006} (the algorithms are presented differently in both versions of their work). The issue has to do with the application of $\frac{11}{8}$-sequences when $G(\hat\pi)$ contains only bad small components. As presented on~\cite{EliasHartman2006}, once all bad small components are identified, the algorithm enters a loop and continuously applies $\frac{11}{8}$-sequences (given by their Lemma 17~\cite{EliasHartman2006}), until the number of cycles in $G(\hat\pi)$ is less than $8$. However, we found scenarios where the application of $\frac{11}{8}$-sequences given by Lemma 17~\cite{EliasHartman2006} can create small components in $G(\hat\pi)$ that are not bad, which can eventually prevent the application of the lemma in the next iterations. One such scenario is when we have a permutation consisting only of unoriented necklaces of size $6$~\cite{EliasHartman2006}, $3$ or more, side by side. To give an illustration, take $\hat\pi=[17\;16\;3\;2\;1\;6\;5\;4\;9\;8\;7\;12\;11\;10\;15\;14\;13\;18\;35\;34\;21\;20$ $\;19\;24\;23\;22\;27\;26\;25\;30\;29\;28\;33\;32\;31]$ whose $G(\hat\pi)$ consists precisely of two unoriented necklaces of size $6$ side by side. Since the sum of $3$-cycles is $12$, Lemma 17~\cite{EliasHartman2006} guarantees us the existence of a $\frac{11}{8}$-sequence. The $(11,8)$-sequence given by Elias and Hartman~\cite{EliasHartman2006} for this permutation (by combining two unoriented necklaces of size $6$ side by side) is $\tau_1=\rho(1,3,5)$, $\tau_2=\rho(7,11,26)$, $\tau_3=\rho(9,13,35)$, $\tau_4=\rho(4,10,34)$, $\tau_5=\rho(2,13,30)$, $\tau_6=\rho(1,18,20)$, $\tau_7=\rho(6,17,32)$, $\tau_8=\rho(5,14,22)$, $\tau_9=\rho(15,27,35)$, $\tau_{10}=\rho(18,28,36)$, $\tau_{11}=\rho(6,19,35)$. After applying this sequence, we have $\hat\pi=[1\;2\;3\;4\;20\;21\;22\;27\;26\;25\;30\;31$ $\;32\;11\;12\;13\;14\;15\;16\;17\;18\;19\;24\;23\;5\;6\;7\;8\;9\;10\;29\;28$ $\;33\;34\;35]$. Observe that now $G(\hat\pi)$ contains a small component of four unoriented $3$-cycles that despite being small, is not bad.
	
	To overcome the issue above, our solution was to apply a $\frac{11}{8}$-sequence given by Lemma 17~\cite{EliasHartman2006} as soon as the sum of the number of $3$-cycles of the the bad small components, as they are identified in the main loop, is greater than $7$, inside the loop itself (line $5$ of the algorithm outlined in~\cite{EliasHartman2006}), as opposed to its position within a loop of its own (line 6~\cite{EliasHartman2006}). Similar solution was employed by our Algorithm~\ref{alg1375}.
	
	We found another issue in the last loop of both versions of the EH algorithm (\cite{EliasHartman2006},\cite{elias20051}). It is not always possible to apply a $(3,2)$-sequence at that point. Sometimes, only a $2$-move exists, as the Lemma $7$~\cite{EliasHartman2006} itself states. Take, for instance, the permutation $\hat\pi=[14\;13\;3\;2\;1\;6\;5\;4\;9\;8\;7\;12\;11\;10]$ whose $G(\hat\pi)$ consists of an unoriented necklace of size $5$. Observe that there is no $\frac{11}{8}$-sequence to apply on $\hat\pi$. In the last loop (\cite{EliasHartman2006},\cite{elias20051}), Elias and Hartman~\cite{EliasHartman2006} gives two $(3,2)$-sequences $\tau_1=\rho(1,10,14)$, $\tau_2=\rho(4,6,15)$, $\tau_3=\rho(3,5,14)$, $\tau_4=\rho(4,8,9)$, $\tau_5=\rho(2,5,8)$, $\tau_6=\rho(1,3,6)$. After applying this sequence, we have $\hat\pi=[1\;6\;7\;8\;2\;3\;4\;5\;9\;10\;11\;12\;13\;14]$ whose $G(\hat\pi)$ contains only one oriented $3$-cycle, making it impossible to apply a further $(3,2)$-sequence. In this particular case, the $2$-move $\rho(2,5,9)$ concludes the sorting of $\hat\pi$. Our implementation~\cite{sourcecode} of the EH algorithm includes ``fixes'' for both issues described above.
	
	Continuing with the analysis of the results for the permutations of maximum length $12$, as presented by Table~\ref{tab:1}, the approximation ratio obtained by the EH algorithm exceeds $1.375$. On the other hand, our proposed algorithm does not exceed the ratio of $1.333\bar 3$. However, we presume that approximations of $1.375$ would appear for permutations in $S_n$, $n \geq 16$, since in order to exist an $(11,8)$-sequence, $Supp(\spi)$ has to have at least $17$ symbols.

	\begin{table*}[ht]
		\begin{center}
			\caption{Comparison of the maximum approximation ratios given by the EH algorithm with ours (Alg1). The table includes other metrics such as the average approximation ratio and average distance given by each algorithm and the number of times the EH algorithm exceeds the $1.375$-approximation ratio as well as the time consumed by each algorithm to sort all permutations of each size. Decimal values are truncated to 4 places.} 
			\begin{threeparttable}
			\begin{tabular}{c|c|c|c|c|c|c|c|c|c|c}
				\hline
				& \multicolumn{1}{|c|}{\textbf{Transposition}} & \multicolumn{2}{|c|}{\textbf{Max. approx.}} & \multicolumn{2}{|c|}{\textbf{Average}} & \multicolumn{2}{|c|}{\textbf{Average}}  & \multicolumn{1}{|c|}{\textbf{Number of times}}           & \multicolumn{2}{|c}{\textbf{Time to sort all}} \\
				& \multicolumn{1}{|c|}{\textbf{diameter}} & \multicolumn{2}{|c|}{\textbf{ratio}}        & \multicolumn{2}{|c|}{\textbf{approx. ratio}} & \multicolumn{2}{|c|}{\textbf{distance}} & \multicolumn{1}{|c|}{\textbf{EH exceeded the}} & \multicolumn{2}{|c}{\textbf{permutations}\tnote{*}} \\
				\cline{3-4}\cline{5-6}\cline{7-8}\cline{10-11}
				\textbf{n} & & \textbf{EH} & \textbf{Alg1} & \textbf{EH} & \textbf{Alg1} & \textbf{EH} & \textbf{Alg1} & \textbf{$1.375$-approx.} & \textbf{EH} & \textbf{Alg1} \\
				\hline    
				$2$  & $1$ & $1.00$        & $1.00$        & $1.0$    & $1.0$    & $1.00$   & $1.00$   & $0$    & $<1s$           & $<1s$ \\ 
				$3$  & $2$ & $1.00$        & $1.00$        & $1.0$    & $1.0$    & $1.20$   & $1.20$   & $0$    & $<1s$           & $<1s$\\ 
				$4$  & $3$ & $1.00$        & $1.00$        & $1.0$    & $1.0$    & $1.6086$ & $1.6086$ & $0$    & $<1s$           & $<1s$\\ 
				$5$  & $3$ & $1.00$        & $1.00$        & $1.0$    & $1.0$    & $2.0924$ & $2.0924$ & $0$    & $<1s$           & $<1s$\\ 
				$6$  & $4$ & $1.333\bar 3$ & $1.00$        & $1.0004$ & $1.0$    & $2.6063$ & $2.6050$ & $0$    & $<1s$           & $<1s$\\ 
				$7$  & $5$ & $1.333\bar 3$ & $1.25$        & $1.0129$ & $1.0113$ & $3.1762$ & $3.1704$ & $0$    & $<1s$           & $<1s$\\ 
				$8$  & $6$ & $1.5$         & $1.25$        & $1.0210$ & $1.0183$ & $3.7178$ & $3.7076$ & $2$    & $<2s$           & $<2s$\\ 
				$9$  & $6$ & $1.5$         & $1.25$        & $1.0301$ & $1.0256$ & $4.2796$ & $4.2603$ & $20$   & $\approx 10s$   & $\approx 13s$ \\ 
				$10$ & $7$ & $1.5$         & $1.25$        & $1.0341$ & $1.0282$ & $4.8051$ & $4.7772$ & $110$  & $\approx 3m$    & $\approx 2m$ \\ 
				$11$ & $8$ & $1.5$         & $1.333\bar 3$ & $1.0392$ & $1.0321$ & $5.3526$ & $5.3157$ & $440$  & $\approx 35m$   & $\approx 30m$ \\ 
				$12$ & $9$ & $1.5$         & $1.333\bar 3$ & $1.0415$ & $1.0336$ & $5.8694$ & $5.8248$ & $1448$ & $\approx 8.5h$ & $\approx 8.1h$ \\ 
				\hline
			\end{tabular}
			\begin{tablenotes}\footnotesize
                \item [*] The permutations of each size were sorted in parallel using a pool of $8$ threads.
            \end{tablenotes}\label{tab:1}
			\end{threeparttable}
		\end{center} 
	\end{table*}
	
	We also compared (Table~\ref{tab:3}) the percentage of computed distances that are equal to transposition distance outputted by our algorithm and EH's with others available in the literature. In particular, we added to the comparison an algorithm with a higher $1.5$-approximation, but with good results~\cite{walter2005improving}; one using similar algebraic approach~\cite{Mira2008}, also $1.5$-approximation; and another one that also uses an EH-like strategy with approximation ratio of $1.375$~\cite{dias2010improved}. 
	
	\begin{table*}[ht]
		\begin{center}
			\caption{Comparison of the percentage of computed distances that are equal to transposition distance, given by different algorithms (WDM~\cite{Walter:2000:NAA:829519.830850}, M~\cite{Mira2008}, BPwh~\cite{walter2005improving} and DD~\cite{dias2010improved}), in comparison to the EH algorithm and ours. Decimal values are truncated to $2$ places.} 
			\begin{tabular}{c|c|c|c|c|c|c}
				\hline    
				\textbf{n} & \textbf{WDM} & \textbf{M} & \textbf{BPwh} & \textbf{DD} & \textbf{EH}  & \textbf{Alg1} \\
				\hline    
                    $2$   & -       & $100.00$ & $100.00$ & -        & $100.00$ & $100.00$ \\
                    $3$   & -       & $100.00$ & $100.00$ & -        & $100.00$ & $100.00$ \\
                    $4$   & -       & $100.00$ & $100.00$ & $100.00$ & $100.00$ & $100.00$ \\
                    $5$   & -       & $100.00$ & $100.00$ & $100.00$ & $100.00$ & $100.00$ \\
                    $6$   & $99.17$ & $100.00$ & $100.00$ & $100.00$ & $99.86$  & $100.00$ \\
                    $7$   & $98.58$ & $100.00$ & $100.00$ & $100.00$ & $94.90$  & $95.47$ \\
                    $8$   & $97.11$ & $99.69$  & $99.91$  & $100.00$ & $91.64$  & $92.65$ \\
                    $9$   & $96.05$ & $99.17$  & $99.72$  & $99.99$  & $86.62$  & $88.54$ \\
                    $10$  & $94.12$ & $98.09$  & -        & $99.97$  & $83.80$  & $86.53$ \\
                    $11$  & $92.81$ & $96.90$  & -        & -        & $79.40$  & $82.98$ \\
                    $12$  & -        & -       & -        & -        & $76.67$  & $80.91$ \\
                    \hline
			\end{tabular}\label{tab:3}
		\end{center} 
	\end{table*}
	
	As shown by Table~\ref{tab:3}, regarding the percentage of computed distances that are equal to transposition distance metric, the best algorithm seems to be the algorithm of Dias and Dias~\cite{dias2010improved}, although they do not present results for $n > 10$. Importantly, this algorithm employs several heuristics, some introduced by a previous work~\cite{dias2010extending}, to improve the performance of the EH algorithm. One of these heuristics is exactly a search for a second $2$-move using a look-ahead technique. However, it is not clear whether the heuristic employed by them never miss a $(2,2)$-sequence if it exists. Furthermore, Dias and Dias~\cite{dias2010improved} does not state the complexity of their algorithm, but we believe that, by analyzing the algorithm~\cite{dias2010extending} which they were based on, the time complexity is higher than $O(n^3)$.
	
	The performance of our algorithm and EH's were also investigated for longer permutations. For this, we created a dataset of longer permutations with sizes ranging from $20$ to $500$ (incremented by $10$). For each of the $49$ sets, $1,000$ instances were randomly generated and sorted using both algorithms. Figure~\ref{fig:ratios} shows the maximum and the average approximation ratios obtained from both ones. It should be noted that the approximation ratios were calculated in relation to the lower bound given by Theorem~\ref{th:bafna}, since is impracticable to calculate the exact distance for such long permutations. Similar experiment was conducted by Dias and Dias~\cite{dias2010extending}, but in their experiment, they worked with smaller sets, also ranging from $20$ to $500$ (incremented by $10$), but containing only $100$ instances. By comparing the results, we may conclude that our algorithm and theirs achieve similar results. Dias and Dias~\cite{dias2010improved} also conducted experiments with longer permutations, but with sizes ranging only from $10$ to $100$ (incremented by $10$), where each set contained $100$ instances, and collected the running times. We may conclude, by comparing the results presented in their paper, that our algorithm performs better than theirs.

	\begin{filecontents*}{ratio.csv}
        20,1.25,1.0607377,1.3333334,1.0722128
        30,1.1538461,1.0486798,1.2142857,1.0574491
        40,1.117647,1.0365545,1.1666666,1.0455537
        50,1.125,1.0308043,1.1304348,1.0399282
        60,1.1,1.0258765,1.1071428,1.0328443
        70,1.0909091,1.0235393,1.1212121,1.0293629
        80,1.0769231,1.0211283,1.081081,1.0256734
        90,1.0681819,1.019275,1.0697675,1.0229748
        100,1.0638298,1.0172191,1.0816326,1.0211933
        110,1.0555556,1.015215,1.0566038,1.0191759
        120,1.0517242,1.0144925,1.0689656,1.017464
        130,1.048387,1.0132641,1.0625,1.0154486
        140,1.0434783,1.0119698,1.0454545,1.0152102
        150,1.0405406,1.0120888,1.0422535,1.0139619
        160,1.0384616,1.011483,1.050633,1.013317
        170,1.0361446,1.0103487,1.0481927,1.012926
        180,1.0340909,1.0096011,1.0340909,1.0120431
        190,1.032258,1.0093861,1.032258,1.0108886
        200,1.0306122,1.0088412,1.04,1.010897
        210,1.0291262,1.0079935,1.0388349,1.0101417
        220,1.0277778,1.0078098,1.0283018,1.0100262
        230,1.0265486,1.0077215,1.0347826,1.0097171
        240,1.0252101,1.0072099,1.0258621,1.0089878
        250,1.032258,1.0068978,1.032258,1.0084829
        260,1.023622,1.0066937,1.0234375,1.0083773
        270,1.0229008,1.0067286,1.0227273,1.0078832
        280,1.0217391,1.0063473,1.0217391,1.0079237
        290,1.020979,1.0062331,1.0277778,1.0071803
        300,1.0204082,1.0060915,1.0268457,1.0071129
        310,1.0194805,1.0058932,1.025974,1.0068377
        320,1.0189873,1.0057966,1.0253165,1.006949
        330,1.0185186,1.0053144,1.0248448,1.0064976
        340,1.0177515,1.0053805,1.0239521,1.0062585
        350,1.017341,1.005066,1.0232558,1.0062572
        360,1.0167598,1.0049186,1.0168539,1.0061096
        370,1.0163934,1.0048198,1.0165746,1.0060494
        380,1.0160428,1.0047592,1.016129,1.005554
        390,1.015544,1.004445,1.015625,1.0055765
        400,1.0152284,1.0045297,1.020202,1.0054654
        410,1.0148515,1.0045857,1.0149254,1.005381
        420,1.0144231,1.0042313,1.0144928,1.0051235
        430,1.014151,1.0041794,1.0186915,1.0049708
        440,1.0138249,1.0042248,1.0138249,1.0050211
        450,1.0135136,1.0039747,1.018018,1.0046284
        460,1.0132743,1.0039624,1.0176212,1.0046532
        470,1.0170213,1.0038049,1.017094,1.004571
        480,1.0126051,1.0035498,1.0126582,1.0046093
        490,1.0123967,1.0036734,1.0123967,1.0043012
        500,1.0121458,1.0035714,1.016129,1.0042752
    \end{filecontents*}

	\begin{figure}[ht]
	    \begin{tikzpicture}
            \begin{axis}[
                scaled y ticks = false,
                y tick label style={/pgf/number format/fixed,
                /pgf/number format/1000 sep = \thinspace},
                xmin=20,xmax=500,
                ymin=1, ymax=1.34,
                ymajorgrids=true, 
                ytick distance=0.05,
                xtick distance=20,
                tick label style={font=\scriptsize},
                axis y line*=left, axis x line*=bottom,
                x tick label style={font=\scriptsize,rotate=90,anchor=east}]
                \addplot[color=red,legend entry=\scriptsize max.\; Alg1,smooth,thick] table[col sep=comma,header=false,
                x index=0,y index=1] {ratio.csv};
                \addplot[color=blue,legend entry=\scriptsize average\;Alg1,thick] table[col sep=comma,header=false,
                x index=0,y index=2] {ratio.csv};
                \addplot[color=orange,legend entry=\scriptsize max.\;EH,smooth,thick] table[col sep=comma,header=false,
                x index=0,y index=3] {ratio.csv};
                \addplot[color=green,legend entry=\scriptsize average\;EH,smooth,thick] table[col sep=comma,header=false,
                x index=0,y index=4] {ratio.csv};
            \end{axis}
        \end{tikzpicture}
	    \caption{Average and maximum approximation ratios obtained for each size in our dataset of longer permutations.} 
	    \label{fig:ratios}
	\end{figure}

	Figure~\ref{fig:times} shows how much time each algorithm (ours and EH's) took to sort all the $1,000$ instances of each of the $49$ sets. The results presented by this figure show that, despite having a high time complexity, our algorithm has good performance in practice, even outperforming EH's.
	
	\begin{filecontents*}{times.csv}
        20,0.0026833334,0.009683332,29,29
        30,0.009416667,0.021233333,6,6
        40,0.0235,0.041233335,40,40
        50,0.03745,0.060866665,41,41
        60,0.063933335,0.09703334,41,41
        70,0.09978333,0.15218332,42,42
        80,0.14676666,0.22188333,31,31
        90,0.21333334,0.31731665,39,39
        100,0.2636,0.36848333,56,56
        110,0.31631666,0.45053333,65,65
        120,0.40585,0.5863,64,64
        130,0.49158335,0.70785004,80,80
        140,0.6098667,0.82225,96,96
        150,0.7417334,1.0086334,94,94
        160,0.88844997,1.1862833,165,165
        170,1.0450166,1.4172332,163,163
        180,1.2287832,1.6276499,183,183
        190,1.4321834,1.9229833,336,336
        200,1.63885,2.0763001,222,222
        210,1.8583333,2.4223666,298,298
        220,2.1516833,2.8135667,314,314
        230,2.4056332,3.1276832,350,350
        240,2.7155833,3.4736502,407,407
        250,3.0992835,4.1173,546,546
        260,3.4502168,4.3449836,510,510
        270,3.8473666,4.9545336,610,610
        280,4.2862835,5.8125,831,831
        290,4.5674334,5.9114,574,574
        300,5.2115,6.5974665,780,780
        310,5.4912667,6.9917502,652,652
        320,6.0028167,7.559883,956,956
        330,6.8214664,8.784383,1168,1168
        340,7.409917,9.727183,1089,1089
        350,8.105216,9.999551,1392,1392
        360,8.887934,11.031266,1740,1740
        370,9.08155,11.244416,1718,1718
        380,10.096666,12.241917,1435,1435
        390,10.641166,13.287884,1424,1424
        400,11.728167,14.083699,1924,1924
        410,12.579133,14.921283,1926,1926
        420,13.320933,16.242817,1859,1859
        430,14.819616,17.870167,1849,1849
        440,15.957633,19.052183,2402,2402
        450,16.36595,18.749916,2744,2744
        460,18.907633,22.074484,2587,2587
        470,20.2229,22.786533,3120,3120
        480,21.8512,25.2392,2992,2992
        490,22.862034,26.0586,3628,3628
        500,21.8378,24.822432,2752,2752
    \end{filecontents*}
    
	\begin{figure}[ht]
        \begin{tikzpicture}
            \begin{axis}[
                scaled y ticks = false,
                y tick label style={/pgf/number format/fixed,
                /pgf/number format/1000 sep = \thinspace},
                xmin=20,xmax=500,
                ymin=0, ymax=26.5,
                ymajorgrids=true, 
                ytick distance=2,
                xtick distance=20,
                tick label style={font=\scriptsize},
                axis y line*=left, axis x line*=bottom,
                legend pos=north west,
                x tick label style={font=\scriptsize,rotate=90,anchor=east}]
                \addplot[color=red,legend entry=\scriptsize Alg1,smooth,thick] table[col sep=comma,header=false,
                x index=0,y index=1] {times.csv};
                \addplot[color=blue,legend entry=\scriptsize EH,thick] table[col sep=comma,header=false,
                x index=0,y index=2] {times.csv};
            \end{axis}
        \end{tikzpicture}
        \caption{Time in minutes each algorithm took to sort all the instances of each size of our dataset of longer permutations.} \label{fig:times}
    \end{figure}
    
	The dataset of longer permutations used in our experiments, the statistics computed, as well as the source code of the implementation of the EH algorithm and ours are available at~\cite{sourcecode}. All experiments have run on a computer equiped with a Core i7 vPro $8^{th}$ Gen processor, with 4 cores and 8 threads, and 48GB of RAM.

	\section*{Conclusions}
	
	In this paper, we first showed that the EH algorithm may require one extra transposition above the $1.375$-approximation ratio, depending on how the input permutation is simplified. This occurs when there is a first $(2,2)$-sequence in the original permutation that is ``missed'' during simplification, and bad small components remain in the cycle graph after the application of any number of $\frac{11}{8}$-sequences.
	
	Then, we proposed a new upper bound for the transposition distance which holds for all $S_n$. Next, we proposed a $1.375$-approximation algorithm to solve SBT based on an algebraic approach that does not employ simplification and which guarantees the $1.375$-approximation ratio for all $S_n$. To the best of our knowledge, this is the first algorithm guaranteeing an approximation ratio below $1.5$ not using simplification. 
	
	We also pointed out an issue regarding the application of $\frac{11}{8}$-sequences when the cycle graph contains only bad small components in the EH algorithm outlined in~\cite{EliasHartman2006} and another one, related to the application of $(3,2)$-sequences when there is no $\frac{11}{8}$-sequences to apply, that affects both versions of the algorithm outlined in~\cite{EliasHartman2006} and~\cite{elias20051}.
	
	Implementations of the EH algorithm and ours were tested against permutations of maximum length of $12$. The results showed that our algorithm does not exceed the $1.375$-approximation ratio and produces a higher percentage of computed distances that are equal to transposition distance, when compared to those computed by the EH algorithm. These percentages were also compared to others available in the literature. Considering this metric, the algorithm with the best results seems to be the one of Dias and Dias~\cite{dias2010improved}, although they do not present results for $n > 10$.
	
	We conducted an experiment involving longer permutations of maximum length $500$. The results showed that our algorithm outperforms the EH algorithm, both in relation to the approximation ratios obtained and running times. Still on the longer permutations, our algorithm seems to be comparable to the one of Dias and Dias~\cite{dias2010extending}, when we consider the approximation ratios obtained by both. Regarding the running times, Dias and Dias~\cite{dias2010extending} also performed some simulations for permutations with a maximum size of $100$. Considering only the results for permutations with this maximum size, our algorithm seems faster.
	
	It is noteworthy that the time complexity of our algorithm is high. A possible future work could the investigation of a more efficient way to find a $(2,2)$-sequence at the beginning of our algorithm. Following a different direction, another future work could be the investigation of ``good'' simplifications, i.e., simplifications that do not have the effect of missing a $(2,2)$-sequence when it exists. We have no clue whether such a ``good'' simplification always exists or not. In any case, we have the intuition that to find it, if it exists, the computational cost would be the same as searching for a $(2,2)$-sequence.
	
	The experiment with small permutations of maximum length $12$ showed that the percentages of computed distances by our algorithm that are equal to transposition distance are low compared to others in the literature. A possible way to improve the results would be investigating the adoption of heuristics.
	
	Finally, we intend to use the algebraic approach presented in this paper to study and solve other rearrangement events affecting one chromosome, e.g., reversals and block-interchange.
	
    \begin{backmatter}
    
    \section*{Competing interests}
    The authors declare that they have no competing interests.

    \section*{Author's contributions}
    First draft: LAGS, MEMTW, NRR. Proofs and algorithm implementation: LAGS. Final manuscript: LAGS, LABK, MEMTW. All authors read and approved the final manuscript.

	\section*{Acknowledgements}
	
	The authors kindly thank Isaac Elias for the invaluable discussion. They also thank Annachiara Korchmaros, and the anonymous reviewers, whose comments helped to improve this manuscript. MEMTW thanks CNPq for the fellowship (Project 310785/2018-9). LAGS thanks CAPES for the doctoral scholarship (Grant 88887.639024/2014-01). 
	
	\bibliographystyle{bmc-mathphys}
	\bibliography{manuscript}     
\end{backmatter}

\end{document}